\documentclass[aps,pra,twoside,superscriptaddress,showkeys,nofootinbib]{revtex4}
\usepackage{amsmath,amsfonts,amssymb,amsthm,amsbsy,mathrsfs,dsfont,mathtools}
\usepackage{graphicx,epsfig}
\usepackage{bbold}
\usepackage{enumerate}

\voffset=17mm
\hoffset=15mm
\textheight=210mm
\textwidth=145mm

\newcommand{\ket}[1]{\mathop{\left|#1\right>}\nolimits}            
\newcommand{\bra}[1]{\mathop{\left<#1\,\right|}\nolimits}         
\newcommand{\kbd}{\mathop{\left|\bullet\right>}\nolimits}            
\newcommand{\bbd}{\mathop{\left<\bullet\,\right|}\nolimits}         

\newcommand{\brk}[2]{\langle #1 | #2 \rangle}
\newcommand{\kbr}[2]{| #1\rangle\!\langle #2 |}

\newcommand{\bu}{\bullet}

\newcommand{\Tr}[1]{\mathop{{\mathrm{Tr}}_{#1}}}            
\newcommand{\sTr}[1]{\mathop{{\mathrm{sTr}}_{#1}}}          

\newcommand{\conj}[1]{\mathop{{\mathrm{conj}}{\left(#1\right)}}}          
\newcommand{\id}{\mathop{{\mathrm{id}}}\nolimits}                         

\newtheorem{thm}{Theorem}
\newtheorem{prop}[thm]{Proposition}
\newtheorem{cor}[thm]{Corollary}
\newtheorem{lem}[thm]{Lemma}
\newtheorem{defi}[thm]{Definition}

\theoremstyle{remark}
\newtheorem*{rem}{\bf Remark}
\newtheorem*{exa}{\bf Example}

\def\openone{\mathbb{1}}
\def\bbC{\mathbb{C}}
\def\bbZ{\mathbb{Z}}
\def\bbR{\mathbb{R}}
\def\bbK{\mathbb{K}}
\def\bbN{\mathbb{N}}

\def\SS{\mathscr{S}}

\newcommand{\nn}{\nonumber}
\def\dg{\dagger}
\def\ddg{\ddagger}
\def\df{\overset{\rm df}{=}}
\def\MM{\M(p|q,\C\La_N)}
\def\toddg{\overset{\ddagger}{\to}}

\def\a{\alpha}
\def\b{\beta}
\def\g{\gamma}
\def\Ga{\Gamma}
\def\d{\delta}
\def\e{\epsilon}

\def\vr{\varrho}
\def\vp{\varphi}

\def\om{\omega}

\def\s{\sigma}

\def\la{\lambda}
\def\La{\Lambda}

\def\vt{\vartheta}

\def\eth{\eta^\#}

\def\M{\mathcal{M}}
\def\G{\mathcal{G}}
\def\C{\mathcal{C}}

\begin{document}

\author{Kamil Br\'adler}
\email{kbradler@ap.smu.ca}
\affiliation{
    Department of Astronomy and Physics,
    Saint Mary's University,
    Halifax, Nova Scotia, B3H 3C3, Canada
    }
\affiliation{
    School of Computer Science,
    McGill University,
    Montreal, Quebec, H3A 2A7, Canada
    }

\title{The theory of superqubits -- supersymmetric qubits}

\begin{abstract}
    Superqubits are the minimal supersymmetric extension of qubits. In this paper we investigate in detail their unusual properties with emphasis on their potential role in (super)quantum information theory and foundations of quantum mechanics. We propose a partial solution to the problem of negative transition probabilities that appear in the theory and has been previously reported elsewhere. The modification does not affect the performance of supersymmetric entangled states in the CHSH game -- superqubits provide resources more nonlocal than it is allowed by ordinary quantum mechanics.
\end{abstract}

\keywords{Supersymmetry, Superqubits, Tsirelson's bound, CHSH inequality, Lie superalgebras, Super Hilbert space}

\maketitle

It is a widely accepted fact nowadays that quantum mechanics is a qualitatively different theory compared to classical mechanics. Quantum mechanics  provides  resources such as  pure or mixed entangled states  that are impossible to simulate in classical physics. But an interesting point was raised by Popescu and Rohrlich in~\cite{PR} almost two decades ago. They asked why quantum mechanics could not have been even more nonlocal than it actually is. First, they recalled an earlier result by Tsirelson~\cite{tsirelson} who  showed that in a certain version of Bell's inequalities~\cite{bell}, known as the CHSH inequality~\cite{CHSH}, no quantum-mechanical state can violate the inequality more than a maximally entangled state. So their question was: Going beyond Tsirelson's bound, does it clash with other established principles of  physics, namely the impossibility of superluminal communication (the so-called no-signalling condition)? Surprisingly, the answer is no and there exists a gap between Tsirelson's bound and unhealthy theories allowing faster-than-light communication. Henceforth, quantum mechanics is not the maximally nonlocal theory it could have been. That immediately raises another question whether a consistent theory inhabiting the gap could be constructed or even realized in Nature. A considerable effort has been recently spent on investigating the consequences of such a superquantum theory~\cite{masanes,ent,complexity} and a number of results showed that if Tsirelson's bound is crossed certain entropic quantities valid in quantum mechanics would become invalid~\cite{dataproc,IC,MI1,MI2}. This would have significant implications for the fields of quantum information theory and foundations of physics~\cite{complexity_review}.

An important tool in these investigations is a hypothetical resource called a nonlocal box also proposed in~\cite{PR}. It is a superquantum resource performing strictly better than quantum mechanics while still respecting the no-signalling condition.  Nonlocal boxes rule over the whole gap in a sense that their decohered version can approximate any theory between quantum mechanics and the no-signalling world. They are, however, purely mathematical constructs with no links to even hypothetical physical theories.

The object of study in this work is called a superqubit. It was first introduced in~\cite{superqubits} and its role in the CHSH game was investigated for the first time in~\cite{bbd}. Put simply, superqubits are the minimal supersymmetric extension of qubits where the main role is played by the orthosymplectic Lie superalgebra over the reals $osp(1|2)$ (more precisely, one of its  real forms). The $osp(1|2)$ algebra has been extensively studied in the past~\cite{snr1,landimarmo,cohstates,supersphere,supersphere1,SchW} as one of the most important example of a Lie superalgebras~\cite{ritt,superLie,superLie1,manin,varadara,berezin_book,kostant,carmeli}. On the physical side, the main motivation for studying supersymmetry comes from high energy physics where it is the leading candidate for physics beyond Standard Model.  It is nonetheless important to stress that the Lie superalgebra and the derived structures studied in this paper are not directly related to any hypothetical superpartner. Despite of this, there are at least three reasons why it is an interesting problem to study. First, there exists at least two proposals from condensed matter physics where supersymmetry, in particular the family of orthosymplectic Lie superalgebras $osp(p|q)$, plays a key role~\cite{efetov,superspinors,hasebe}. Second, the developed methods and concepts can be eventually used for higher-dimensional Lie superalgebras that are relevant to supersymmetry-based high-energy physics. Finally, if we ignore the question of direct physical relevance, it is an exciting quest to create a {\em synthetic} quantum theory that contains ordinary quantum mechanics and non-trivially extends it at the same time.

The result of~\cite{bbd} suggests that the supersymmetric extension of quantum mechanics based on superqubits may be a candidate for a superquantum theory that lies in the gap between ordinary quantum theory and PR boxes. This is due to the violation of Tsirelson's bound reported there. An unfortunate consequence is the presence of negative transition probabilities. They never appear in the actual calculation leading to the result but one would like to avoid them entirely. Here we propose a  solution to this problem for single superqubits. The same issue in the case of multi-superqubit states remains open and is likely to be resolved in the context of a larger Lie supergroup.

This paper has a multiple purpose: (i) to anchor the notion of superqubits on a firm mathematical footing that has its roots in the theory of Lie superalgebras and related structures, (ii) to systematically develop the rules for calculating with superqubits in the same way quantum information theorists deal with qubits and, (iii) to offer a solution to the problem of occurrence of negative probabilities. This is an issue first encountered in~\cite{bbd} and we offer a partial solution by means of compactification of the superqubit space. As a result, the problem of negative probabilities disappears for single superqubits.  An important consequence is that we violate Tsirelson's bound less ($p^{sqbit}_{win}\simeq 0.8647$) than reported in~\cite{bbd}.
From other results it is worth of mentioning that we have introduced a super Hilbert space on the vector subspace of the supermatrix space $\MM$ together with formalizing the notions of a (super)ket and bra based on superlinear algebra. The supermatrix formalism enables us to easy manipulate and calculate with superqubits.

There are two main sections and two appendices in this paper. Section~\ref{sec:squbits} introduces superqubits in a manner  different from the original article~\cite{superqubits}. It attempts to introduce some additional, perhaps less known or novel, details about the used superstructures in order to formalize the notion of a superqubit from the mathematical point of view. It heavily relies on the definitions and terminology of supermathematics summarized and reviewed in Appendix~\ref{sec:AppBackground}. Section~\ref{sec:CHSH} focuses more on the physical consequences of the developed formalism and we investigate the performance of bipartite superqubit states in the CHSH game. Appendix~\ref{sec:AppSLinearALg} brings detailed calculations of some results in Section~\ref{sec:squbits} based on superlinear algebra whose theory is outlined in Appendix~\ref{sec:AppBackground}. The paper concludes with a number of open questions.

\section{Superqubits}\label{sec:squbits}

The purpose of this section is to formally introduce superqubits, their properties and relation to qubits. Superqubits can be understood as a supersymmetric version of qubits studied in quantum information theory. The discussion here is built on some standard knowledge and results in the theory of Lie superalgebras and superlinear algebra that has been reviewed in reasonable detail in Appendix~\ref{sec:AppBackground}. The notation and terminology used in this section has also been defined there. Here we  gather some useful definitions and prove certain facts about the studied superstructures not found anywhere in the literature known to the author. The immediate starting point is Def.~\ref{def:gradeadjoint}.
\begin{thm}\label{thm:ddgIsGradeAdjoint}
  Let $S\in\MM$ be a supermatrix and let the double dagger map be
  $$
  \ddg\df \#\circ ST,
  $$
  where $ST$ stands for supertranspose of a supermatrix, Eq.~(\ref{eq:STforS}), and the hash map $\#$ is a grade involution from Def~\ref{def:automorph}. Let further $\brk{_-}{_-}:\La^{p|q}\times\La^{p|q}\mapsto\C\La_N$ be a bilinear, non-degenerate form where $\C\La_N$ denotes the complex Grassmann algebra of order $N$ and $\La^{p|q}$ is the $\C\La_N$-bimodule (Def.~\ref{def:supermodule}). Then the  double dagger satisfies the properties of the grade adjoint
  \begin{equation}\label{eq:gradeAdjointwithDDG}
    \brk{Sz}{s}=(-)^{|S||z|}\brk{z}{S^\ddg s}
  \end{equation}
  from  Def.~\ref{def:gradeadjoint} valid for all $s,z\in\La^{p|q}$.
\end{thm}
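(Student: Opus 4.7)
The plan is to verify the adjoint identity by reducing it to the basic algebraic rules governing the supertranspose and the grade involution, working on homogeneous $S$, $z$, $s$ and extending by bilinearity. The key inputs are the supermatrix identity cited as Eq.~(\ref{eq:STforS}), together with the behaviour of $\#$ recorded in Def.~\ref{def:automorph}.

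First I would realize the bilinear pairing in supermatrix form. Since $\brk{_-}{_-}$ is bilinear and non-degenerate on $\La^{p|q}$, regarding elements of $\La^{p|q}$ as column supermatrices over $\C\La_N$ lets one write $\brk{z}{s}$ as a scalar contraction of the two columns, with parity-dependent signs coming from the $\bbZ_2$-grading of $\La^{p|q}$. Viewing $S\in\MM$ as an operator on this column space, we have $(Sz)_i=\sum_j S_{ij}z_j$, where each entry $S_{ij}$ carries parity $|S|+|i|+|j|$. The resulting formula for $\brk{Sz}{s}$ is then a sum of triple products $S_{ij}z_j s_k$ weighted by signs fully determined by the parities $|i|,|j|,|k|$ and $|S|$.

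Next I would apply the supertranspose product rule $(Sz)^{ST}=(-)^{|S||z|}\,z^{ST}S^{ST}$. This is the only step that produces the overall sign $(-)^{|S||z|}$ on the right-hand side of (\ref{eq:gradeAdjointwithDDG}), because it is the unique place in the manipulation where $S$ and $z$ are swapped past each other. Substituting into the pairing rewrites $\brk{Sz}{s}$ as $(-)^{|S||z|}$ times a pairing in which $S^{ST}$ acts on the right slot.

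Finally I would push the grade involution through. Because $\#$ is a graded algebra automorphism of $\C\La_N$, acting as $\id$ on the even part and as $-\id$ on the odd generators, it commutes (entry by entry) with the component-wise pairing that defines $\brk{_-}{_-}$. Thus applying $\#$ to the entries of $S^{ST}$ on one slot of the pairing is the same as applying it on the other, so that $S^{ST}$ may be replaced by $S^\ddg=(S^{ST})^\#$ acting on $s$, yielding exactly $\brk{Sz}{s}=(-)^{|S||z|}\brk{z}{S^\ddg s}$.

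The hard part is the bookkeeping of Koszul signs. Concretely, one must check that the signs produced by bilinearity when moving $S_{ij}$ past $z_j$, the sign $(-)^{|S||z|}$ from the supertranspose product rule, and the $\pm 1$ factors coming from $\#$ acting on odd components of $S^{ST}$ combine to leave the clean prefactor $(-)^{|S||z|}$ and nothing else. Once the conventions for the component-wise pairing and for the supertranspose entries are pinned down against Appendix~\ref{sec:AppBackground}, the verification reduces to a parity count that matches term by term; non-degeneracy of $\brk{_-}{_-}$ then guarantees that the supermatrix $S^\ddg$ so produced is unique.
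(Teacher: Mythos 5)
Your overall strategy is sound and is, at bottom, the same mechanism the paper uses: realize elements of $\La^{p|q}$ as column supermatrices, let the pairing be row-times-column with the first slot converted by $\ddg=\#\circ ST$, extract the sign $(-)^{|S||z|}$ from the supertranspose composition rule, and then use the fact that $\#$ preserves the order of products, Eq.~(\ref{eq:hashactionii}), to split $\bigl((Sz)^{ST}\bigr)^\#$ into $(-)^{|S||z|}(z^{ST})^\#(S^{ST})^\#$ and regroup $(S^{ST})^\#=S^\ddg$ with $s$ by associativity. The paper instead verifies the identity by brute force: it sets $p=q=1$ and checks the four parity configurations $(|S|,|z|)\in\{0,1\}^2$ entry by entry. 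Your abstract version is cleaner and covers general $p|q$ at once, which is a genuine gain.

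However, two points need repair. First, your description of the hash is wrong: $\#$ is not ``$\id$ on the even part and $-\id$ on the odd generators'' (that would be the parity automorphism). By Def.~\ref{def:automorph} it conjugates complex coefficients, sends a generator $\eta$ to a different generator $\eta^\#$, and satisfies $(r^\#)^\#=(-)^{|r|}r$; the property your argument actually needs is only order preservation, Eq.~(\ref{eq:hashactionii}), so the proof survives, but the stated justification does not, and the ``applying $\#$ on one slot is the same as applying it on the other'' phrasing obscures that the real mechanism is factorization of $\#$ over the product followed by reassociation. Second, and more substantively, the step $(Sz)^{ST}=(-)^{|S||z|}z^{ST}S^{ST}$ is not free: $z$ is a column supervector whose supertranspose is fixed by Def.~\ref{def:superRowCol} with its own sign conventions, and the compatibility of those conventions with the supermatrix rule Eq.~(\ref{eq:STcomposition}) is precisely the content that the paper's four-case computation (and the worked example on page~\pageref{exa:STonVectors}) establishes. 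Invoking the rule here without checking it defers the entire substance of the theorem; your closing paragraph acknowledges that the Koszul bookkeeping is the hard part but does not carry it out. Finally, the paper's proof also verifies the remaining grade-adjoint axioms of Def.~\ref{def:gradeadjoint} --- antilinearity, $(AB)^\ddg=(-)^{|A||B|}B^\ddg A^\ddg$, and $(S^\ddg)^\ddg=(-)^{|S|}S$ --- which your proposal omits entirely.
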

\begin{rem}
    The elements of $\La^{p|q}$ are represented by supermatrices from the set $\MM$ (see Def.~\ref{def:SupermatrixSet}). The proof is given for  supermatrices where $p=q=1$. It can be handled in full generality for $\MM$ but I believe that the explicit calculation that follows will be more enlightening. Moreover, the generalization for an arbitrary $p$ and $q$ is straightforward.
\end{rem}
\begin{rem}
  Note that we do not require the form to be positive semidefinite or Hermitian. This terminology has not been defined for supermatrices. Also, there is likely more than one candidate for the form with the required properties but we will not need an explicit example for the sake of this Theorem. The restriction to a specific bilinear form will appear after the definition of the $uosp(1|2;\C\La_N)$ algebra when the will be shown to be super Hermitian and positive semidefinite (see the introduction of a super Hilbert space on page~\pageref{subsubsec:superH}).
\end{rem}
\begin{rem}
  Here we start using the super ket and bra notation. For its origin, see the important remark on page~\pageref{rem:superkets}.
\end{rem}
\begin{proof}
  For the sake of the proof we set $p=q=1$ as the simplest non-trivial case. Using the standard basis Eq.~(\ref{eq:freebasis}), a homogeneous
  supermatrix $S\in\M(1|1,\C\La_N)$ will be written as
  \begin{equation}
  S=\begin{pmatrix}
    a & b \\
    c & d \\
  \end{pmatrix},
  \end{equation}
  where $a,b,c,d\in\C\La_N$. In particular, (see the discussion after Eq.~(\ref{eq:ABDCmatrix})) $|S|=0$ if $|a|=|d|=0$ and $|b|=|c|=1$ and $|S|=1$ when the grade of the entries is reversed. Superkets and bras that appear in Eq.~(\ref{eq:gradedDEF}) are column and row super vectors, respectively. In this particular case, if $|z|=0$ the vector takes the following form:
  $$
    \ket{z} =
    \begin{pmatrix}
      x \\
      \eta \\
    \end{pmatrix},
  $$
  where $\eta$ is any odd element of $\C\La_N$ (it can be, without loss of generality, one of the generators of the Grassmann algebra $\C\La_N$) and $x$ is an even Grassmann number. For $|z|=1$ we have
  $$
    \ket{z}  =
     \begin{pmatrix}
     \eta \\
     x \\
     \end{pmatrix}
  $$
  and again $|\eta|=1$ and $|x|=0$. \\
  (i) To prove Eq.~(\ref{eq:gradeAdjointwithDDG}) it is sufficient to verify the following four configurations that occur:
  \begin{itemize}
  \item $|S|=0$ and $|z|=0$.\\
      LHS of~(\ref{eq:gradeAdjointwithDDG}):
      $$
      \begin{pmatrix}
        a & b \\
        c & d \\
      \end{pmatrix}
      \begin{pmatrix}
        x \\
        \eta \\
      \end{pmatrix}
      =
      \begin{pmatrix}
        ax+b\eta \\
        cx+d\eta \\
      \end{pmatrix}
      \toddg
      \begin{pmatrix}
        a^\#x^\#+b^\#\eta^\# ,& c^\#x^\#+d^\#\eta^\# \\
      \end{pmatrix}.
      $$
      RHS of~(\ref{eq:gradeAdjointwithDDG}):
      \
      $$
      \begin{pmatrix}
        x^\# ,&\eta^\# \\
      \end{pmatrix}
      \begin{pmatrix}
        a^\# & c^\# \\
        -b^\# & d^\# \\
      \end{pmatrix}
      =
      \begin{pmatrix}
        x^\#a^\#-\eta^\# b^\# ,& x^\#c^\#+\eta^\#d^\# \\
      \end{pmatrix}
      =
      \begin{pmatrix}
        a^\#x^\#+b^\#\eta^\#  ,& c^\#x^\#+d^\#\eta^\# \\
      \end{pmatrix}.
      $$
      The properties of the grade involution $\#$ were used (Def.~\ref{def:automorph}) as well as the definition of the supertranspose, Eq.~(\ref{eq:STforS}). As we can see, both sides are equal for any choice of $s\in\La^{1|1}$ without an explicit form of the bilinear form. We obtain similar agreement in the remaining three possibilities.
  \item $|S|=0$ and $|z|=1$.\\
      LHS:
      $$
      \begin{pmatrix}
        a & b \\
        c & d \\
      \end{pmatrix}
      \begin{pmatrix}
        \eta \\
        x \\
      \end{pmatrix}
      =
      \begin{pmatrix}
        a\eta+bx \\
        c\eta+dx \\
      \end{pmatrix}
      \toddg
      \begin{pmatrix}
        a^\#\eta^\#+b^\#x^\# ,& -c^\#\eta^\#-d^\#x^\# \\
      \end{pmatrix}.
      $$
      RHS:
      \begin{align*}
        \begin{pmatrix}
        \eta^\# ,&-x^\# \\
      \end{pmatrix}
      \begin{pmatrix}
        a^\# & c^\# \\
        -b^\# & d^\# \\
      \end{pmatrix}
      = & \begin{pmatrix}
        \eta^\# a^\#+ x^\#b^\# ,& \eta^\# c^\#-x^\#d^\# \\
      \end{pmatrix} \\
      = & \begin{pmatrix}
        a^\#\eta^\#+b^\#x^\#  ,& -c^\#\eta^\#-d^\#x^\# \\
      \end{pmatrix}.
      \end{align*}
  \item $|S|=1$ and $|z|=0$.\\
    LHS:
      $$
      \begin{pmatrix}
        a & b \\
        c & d \\
      \end{pmatrix}
      \begin{pmatrix}
        x \\
        \eta \\
      \end{pmatrix}
      =
      \begin{pmatrix}
        ax+b\eta \\
        cx+d\eta \\
      \end{pmatrix}
      \toddg
      \begin{pmatrix}
        a^\#x^\#+b^\#\eta^\# ,& -c^\#x^\#-d^\#\eta^\# \\
      \end{pmatrix}.
      $$
      RHS:
      \begin{align*}
        \begin{pmatrix}
        x^\# ,&\eta^\# \\
      \end{pmatrix}
      \begin{pmatrix}
        a^\# & -c^\# \\
        b^\# & d^\# \\
      \end{pmatrix}
      = & \begin{pmatrix}
        x^\#a^\#+\eta^\# b^\# ,& -x^\#c^\#+\eta^\# d^\# \\
      \end{pmatrix} \\
      = & \begin{pmatrix}
        a^\#x^\#+b^\#\eta^\#  ,& -c^\#x^\#-d^\#\eta^\# \\
      \end{pmatrix}.
      \end{align*}
  \item $|S|=1$ and $|z|=1$.\\
      LHS:
      $$
      \begin{pmatrix}
        a & b \\
        c & d \\
      \end{pmatrix}
      \begin{pmatrix}
        \eta \\
        x \\
      \end{pmatrix}
      =
      \begin{pmatrix}
        a\eta+bx \\
        c\eta+dx \\
      \end{pmatrix}
      \toddg
      \begin{pmatrix}
        a^\#\eta^\#+b^\#x^\# ,& c^\#\eta^\#+d^\#x^\# \\
      \end{pmatrix}.
      $$
      RHS:
    \begin{align*}
      -\begin{pmatrix}
            \eta^\# ,&-x^\# \\
          \end{pmatrix}
          \begin{pmatrix}
            a^\# & -c^\# \\
            b^\# & d^\# \\
          \end{pmatrix}
       = & -\begin{pmatrix}
            \eta^\# a^\#- x^\#b^\# ,& -\eta^\# c^\#-x^\#d^\# \\
          \end{pmatrix} \\
       = & \begin{pmatrix}
            a^\#\eta^\#+b^\#x^\#  ,& c^\#\eta^\#+d^\#x^\# \\
          \end{pmatrix}.
    \end{align*}
  \end{itemize}
  (ii) The antilinearity of $\ddg$ (Eq.~(\ref{eq:gradedadjoint})) follows from the linearity of $ST$ and the properties of the hash mapping Def.~\ref{def:automorph}. \\
  (iii) Since Eq.~(\ref{eq:STcomposition}) holds for $S\in\MM$ as well~\cite{varadara,manin} and because the hash preserves the order of a product, Eq.~(\ref{eq:hashactionii}), the desired property immediatelly follows.\\
  (iv) Finally, to show Eq.~(\ref{eq:gradedadjointiii}) we use the commutativity of $ST$ and $\#$ and write $\ddg\circ\ddg=\#\circ\#\circ ST\circ ST$. If $|S|=0$ we get
  $$
  S=\begin{pmatrix}
    a & b \\
    c & d \\
  \end{pmatrix}\overset{ST}{\to}
  \begin{pmatrix}
    a & c \\
    -b & d \\
  \end{pmatrix}\overset{ST}{\to}
  \begin{pmatrix}
    a & -b \\
    -c & d \\
  \end{pmatrix}\overset{\#\circ\#}{\to}
  \begin{pmatrix}
    a & b \\
    c & d \\
  \end{pmatrix}\equiv S
  $$
  using property~(\ref{eq:hashactioniii}). Similarly, for $|S|=1$ we obtain
  $$
  S=\begin{pmatrix}
    a & b \\
    c & d \\
  \end{pmatrix}\overset{ST}{\to}
  \begin{pmatrix}
    a & -c \\
    b & d \\
  \end{pmatrix}\overset{ST}{\to}
  \begin{pmatrix}
    a & -b \\
    -c & d \\
  \end{pmatrix}\overset{\#\circ\#}{\to}
  \begin{pmatrix}
    -a & -b \\
    -c & -d \\
  \end{pmatrix}\equiv -S
  $$
  as required.
  \end{proof}
The double dagger operator used in the previous theorem will be called the {\em grade adjoint}  or eventually {\em superadjoint}. But this is not the end of the story. Whatever extension of quantum mechanics we are trying to invent, it must contain current quantum mechanics with all its successful machinery and measurement predictions. This is the main reason to focus on a certain subset of supermatrices equipped with the grade adjoint. The subset is defined by asking the supermatrices to be  super anti-Hermitian (also called super self-adjoint). They become anti-Hermitian if we restrict to even matrices thus making contact with ordinary quantum mechanics. We will later show something much stronger. First a definition~\cite{berezin,landimarmo}:
\begin{defi}\label{def:uosp12}
    The unitary orthosymplectic algebra $uosp(1|2;\C\La_N)$ is defined as
    $$
    uosp(1|2;\C\La_N)=\{S\in osp(1|2;\bbR)\otimes\C\La_N|S^\ddg=-S\}.
    $$
\end{defi}
This is a crucial definition and we will spend some time by analyzing its consequences. First of all, we have chosen the lowest-dimensional orthosymplectic algebra $osp(1|2;\bbR)$ to start with. In the previous paragraph we talked about extending quantum mechanics without sacrificing any of its properties. This is perhaps too ambitious for the first try so let's consider an extension of a basic building block of finite-dimensional quantum mechanics -- a two-level system (qubit). Arguments will be presented to show that $osp(1|2;\bbR)$ is the lowest-dimensional Lie superalgebra containing qubits with all its usual quantum-mechanical properties and the unitary orthosymplectic algebra $uosp(1|2;\C\La_N)$ is a key tool~\cite{superqubits}.

Contrary to $osp(1|2;\bbR)$, the algebra $uosp(1|2;\C\La_N)$ is {\em not} a Lie superalgebra in the sense of Def.~\ref{def:Liesuperalgebras}. It contains only even supermatrices as we prove in the next lemma. Note that sometimes $uosp(1|2;\C\La_N)$ is defined as $S\in osp(1|2;\C\La_N)$ subject to $S^\ddg=-S$~\cite{berezin,rittscheu} where the elements of $osp(1|2;\C\La_N)$ are already by definition even supermatrices. Here we show that one can start with an arbitrary `grassmannified' element of $osp(1|2;\bbR)$ given by the above definition~\cite{landimarmo} and it is the super self-adjoint constraint $S^\ddg=-S$  that singles out even supermatrices.

Another important (and perhaps surprising) consequence of $S^\ddg=-S$  is that the order of the standard basis Eq.~(\ref{eq:freebasis}) for the underlying $\bbZ_2$-graded vector space has been reversed with respect to $osp(1|2;\bbR)$. Let's illustrate it on the odd generators of the $osp(1|2;\bbR)$ algebra. Written in the standard basis Eq.~(\ref{eq:freebasis}) for $p=1$ and $q=2$, they are reresented by the following matrices~\cite{berezin,landimarmo}:
\begin{equation}\label{eq:ospOddGens}
    Q_1={1\over2}\begin{pmatrix}
       0 & -1 & 0 \\
      0 & 0 & 0 \\
      -1 & 0 & 0
     \end{pmatrix},\hspace{5mm}
    Q_2={1\over2}\begin{pmatrix}
      0 & 0 & 1 \\
      -1 & 0 & 0 \\
      0 & 0 & 0
     \end{pmatrix}.
\end{equation}
On the other hand, the $uosp(1|2;\C\La_N)$ algebra has a bosonic (even) subalgebra that happens to be the $su(2)$ algebra occupying a two-dimensional subspace spanned by (two) even basis vectors.  So in reality we should write $uosp(2|1;\C\La_N)$ instead of $uosp(1|2;\C\La_N)$. We won't do that to keep the  notation consistent with the majority of literature but to be able to consistently use the supermatrix operations and rules as presented in Appendix~\ref{sec:AppBackground} (recall that the supertranspose depends on the standard basis order), we have to work with the generators of $uosp(1|2;\C\La_N)$ written in the standard basis. The odd generators $uosp(1|2;\C\La_N)$ happen to be the same as those in Eq.~(\ref{eq:ospOddGens}) and that implies to shift the basis order and redefine the odd generators as
\begin{subequations}\label{eq:uospOddGens}
\begin{align}
  Q_1 & \mapsto UQ_1U^T=
  {1\over2}\begin{pmatrix}
    0 & 0 & 0 \\
    0 & 0 & -1 \\
    -1 & 0 & 0 \\
  \end{pmatrix},
   \\
  Q_2 & \mapsto UQ_2U^T=
  {1\over2}\begin{pmatrix}
    0 & 0 & -1 \\
    0 & 0 & 0 \\
    0 & 1 & 0 \\
  \end{pmatrix},
\end{align}
\end{subequations}
where
$$
U=\begin{pmatrix}
    0 & 1 & 0 \\
    0 & 0 & 1 \\
    1 & 0 & 0 \\
  \end{pmatrix}.
$$
We will label them $Q_1,Q_2$ as well. The same transformation applies to the even generators of $osp(1|2;\bbR)$  written~\cite{landimarmo} as $A_j=i/2(0\oplus\s_j)$  where $\s_i$ are Pauli matrices: $A_j\mapsto UA_jU^T=i/2(\s_j\oplus0)$. This is the convention used in the rest of the paper.
\begin{lem}\label{lem:evenmatrices}
The elements of  $uosp(1|2;\C\La_N)$ are even supermatrices.
\end{lem}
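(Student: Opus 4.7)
The plan is to decompose an arbitrary $S\in osp(1|2;\bbR)\otimes\C\La_N$ into its homogeneous even and odd supermatrix parts, $S=S_0+S_1$, and then use two structural properties of $\ddg$ already established in Theorem~\ref{thm:ddgIsGradeAdjoint} to force the odd part to vanish. Writing $S=\sum_j a_j A_j+\sum_k q_k Q_k$ in the standard basis of $osp(1|2;\bbR)\otimes\C\La_N$ (with $a_j,q_k\in\C\La_N$) and splitting each Grassmann coefficient into its even and odd pieces produces this decomposition, thanks to the block structure of the $A_j$'s and $Q_k$'s described in the paragraphs preceding the lemma.

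Next I would argue that $\ddg$ is grade-preserving on supermatrices: the supertranspose $ST$ (whose even and odd formulas each send a homogeneous supermatrix to another of the same parity, as one reads off the explicit formulas in the proof of Theorem~\ref{thm:ddgIsGradeAdjoint}) and the entry-wise grade involution $\#$ both respect the $\bbZ_2$-grading. Hence $S_0^\ddg$ is even, $S_1^\ddg$ is odd, and the equation $S^\ddg=-S$ decouples into the two independent conditions $S_0^\ddg=-S_0$ and $S_1^\ddg=-S_1$.

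The decisive step is to show that $S_1^\ddg=-S_1$ forces $S_1=0$. I would apply $\ddg$ to both sides, invoking the $\bbR$-antilinearity of $\ddg$ (so that $(-S_1)^\ddg=-S_1^\ddg$) together with the identity $\ddg\circ\ddg=-\id$ on odd supermatrices---precisely the $|S|=1$ content of item (iv) of Theorem~\ref{thm:ddgIsGradeAdjoint}. The chain
\[
-S_1\;=\;S_1^{\ddg\ddg}\;=\;-S_1^{\ddg}\;=\;-(-S_1)\;=\;S_1
\]
then yields $2S_1=0$, so $S_1=0$, and every $S\in uosp(1|2;\C\La_N)$ coincides with its even part $S_0$.

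The only genuine work in this route is verifying the grade-preservation of $\ddg$; once that is in place the argument is entirely formal and relies only on the two properties of $\ddg$ already proved ($\bbR$-antilinearity and $\ddg^2=(-1)^{|S|}\id$), with no need for explicit coefficient-matching across the five generators or any additional use of the $\#$ involution on $\C\La_N$. I expect the subtle point, should a reader resist, to be the justification that the odd and even supertranspose formulas indeed both preserve the block parity pattern---but this is immediate from inspection of the formulas displayed in Theorem~\ref{thm:ddgIsGradeAdjoint}.
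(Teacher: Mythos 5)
Your proof is correct, but it follows a genuinely different route from the paper's. The paper argues by explicit case analysis on the generators: it writes a putative odd element of $uosp(1|2;\C\La_N)$ as a combination of odd-Grassmann multiples of the even generators $A_i$ and even-Grassmann multiples of the odd generators $Q_i$, and then uses the explicit facts $A_i^\ddg=-A_i$, $Q_i^\ddg=-\epsilon^{ij}Q_j$ and $(\zeta^\#)^\#=-\zeta$ for odd $\zeta$ to force each coefficient to vanish. Your argument is coordinate-free: once you observe that $\ddg=\#\circ ST$ preserves the parity of a homogeneous supermatrix (immediate, since $ST$ keeps diagonal blocks diagonal and off-diagonal blocks off-diagonal, and $\#$ preserves the grade of each entry), the condition $S^\ddg=-S$ decouples by parity, and the chain $-S_1=S_1^{\ddg\ddg}=(-S_1)^\ddg=-S_1^\ddg=S_1$ kills the odd part using only antilinearity and $\ddg\circ\ddg=(-1)^{|S|}\id$. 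This is shorter and strictly more general --- it shows that for \emph{any} grade adjoint with these three properties the super anti-self-adjoint elements are automatically even, with no reference to the particular five generators of $osp(1|2;\bbR)$ --- whereas the paper's computation makes visible exactly where the obstruction sits (the sign in $(\zeta^\#)^\#=-\zeta$ versus the $\epsilon^{ij}$ twist of the odd generators). The only mild caveat is that you lean on items (ii)--(iv) of Theorem~\ref{thm:ddgIsGradeAdjoint} holding on all of $\M(2|1,\C\La_N)$, which the paper verifies explicitly only for $p=q=1$ and asserts generalizes; since the paper itself uses the same extension elsewhere, this is not a gap in your argument beyond what the paper already assumes.
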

\begin{proof}
The Lie superalgebra $osp(1|2;\bbR)$ has five generators. Following Def.~\ref{def:endos} and the discussions preceding this lemma, we see that $|A_j|=0$ and $|Q_i|=1$ and they belong to ${\rm End}(\bbR^{2|1})\subset\M(2|1;\C\La_N)$. The algebra $uosp(1|2;\C\La_N)$ also forms a subset of $\M(2|1;\C\La_N)$. We will assume the existence of odd supermatrices in $uosp(1|2;\C\La_N)$ and prove the statement by contradiction. \\
(i) Assume $S=\sum_iz_i\zeta_iA_i$ where $z_i\in\bbC,|\zeta_i|=1$ and so $|S|=1$. Then from $S^\ddg=-S$ and $A_i^\ddg=-A_i$ it follows
$$
\sum_i\bar{z}_i\zeta_i^\# = \sum_iz_i\zeta_i.
$$
But this is impossible unless $z_i=0$ since for all odd supernumbers (and from any finite-dimensional Grassmann algebra) Eq.~(\ref{eq:hashaction}) dictates that $(\zeta_i^\#)^\#=-\zeta_i$.\\
(ii) Now suppose that $S=\sum_iz_i\zeta_iQ_i$ where  $z_i\in\bbC$ and $|\zeta_i|=0$ (so $|S|=1$ again). Let $\epsilon^{ij}$ be the two-dimensional anti-symmetric tensor ($\epsilon^{12}=1$). One can verify from (\ref{eq:ospOddGens}) that $Q_i^\ddg\equiv Q_i^{ST}=-\epsilon^{ij}Q_j$ holds. But then again, this is incompatible with the constraint $S^\ddg=-S$  unless $z_i=0$.
\end{proof}
Hence an arbitrary  element of the $uosp(1|2;\C\La_N)$ algebra is given by
\begin{equation}\label{eq:uosparbitraryelement}
    S=\xi_1A_1+\xi_2A_2+\xi_3A_3+\zeta Q_1+\zeta^\# Q_2,
\end{equation}
since $S^\ddg=-S$ is valid as long as $\xi_i^\#=\xi_i\in\C\La_{N,0}$ holds and for any $\zeta\in\C\La_{N,1}$.

As we have already mentioned~\cite{berezin_book,berezin,rittscheu}, the $uosp(1|2;\C\La_N)$  algebra is not a Lie superalgebra but something closer to an ordinary Lie algebra due to the presence of  Grassmann numbers. The main purpose for considering even supermatrices such as Eq.~(\ref{eq:uosparbitraryelement}) is that unlike the case of Lie superalgebras, there exists an exponential map transforming this Grassmann number-assisted Lie algebra into the corresponding Lie group~\cite{berezin,kostant}. More precisely, there is an equivalent of the Zassenhaus formula for even supermatrices but not for orthosymplectic Lie superalgebras due to the presence of the anticommutator for odd elements of $osp(p|q;\bbR)$~\cite{bch}. A similar issue seems to exist for other Lie superalgebras.

\subsubsection*{The unitary supergroup $UOSP(1|2;\C\La_N)$ and superqubits}

We start with the definition of the $UOSP(1|2;\C\La_N)$ group~\cite{berezin}.
\begin{defi}\label{def:UOSPgroup}
    The $UOSP(1|2;\C\La_N)$ group is defined as
    $$
    UOSP(1|2;\C\La_N)=\{Z=\exp{S}|S\in uosp(1|2;\C\La_N)\},
    $$
    where $S$ is Eq.~(\ref{eq:uosparbitraryelement}). An arbitrary group element can be written as
    \begin{equation}\label{eq:UOSPoperator}
        Z=\exp{[\xi_1A_1+\xi_2A_2+\xi_3A_3]}\exp{[\zeta Q_1+\zeta^\# Q_2]}.
    \end{equation}
\end{defi}
Note that the super adjoint condition on the algebraic level leads to the superunitary condition on the group level
\begin{equation}\label{eq:superunitary}
    Z^\ddg Z=ZZ^\ddg=1.
\end{equation}
It is no coincidence that it resembles the pattern from the ordinary $su(d)$ Lie algebra. A self-adjoint generator of the $su(d)$  Lie algebra becomes a unitary matrix representing an element of the corresponding group $SU(d)$.

We  set $N=2$ for the order of the Grassmann algebra $\C\La_N$. This is the lowest-dimensional non-trivial complex Grassmann algebra equipped with the grade involution (the hash map $\#$) from Def.~\ref{def:automorph}. The case $N=1$ is impossible since at least two odd Grassmann generators are needed ($\eta$ and its complex conjugate $\eta^\#$). For $N=0$ the whole process is a mere complexification we are not interested in. Hence, following Def.~\ref{def:supernumber}, the $S^\ddg=-S$ condition dictates the most general form of coefficients in Eq.~(\ref{eq:uosparbitraryelement}) to be $\xi_i=a_i+b_i\eta\eta^\#$  and $\zeta=p_1\eta+p_2\eta^\#$ where $p_1,p_2\in\bbC$.

The constraint $S^\ddg=-S$ implies $a_i,b_i\in\bbR$. Recall that $A_i^\ddg=-A_i$ and $|\xi_i|=0$. Hence $\xi_i^\#=\xi_i$ and so $\bar a_i=a_i$ and $\bar b_i=b_i$. Surprisingly, $p_1,p_2$ in $\zeta$ can't be arbitrary complex but the reason is not $S^\ddg=-S$. We will get to it in Lemma~\ref{lem:Ssimplified}.

To proceed, we take an inspiration from the world of qubits. Qubits carry the fundamental representation of $SU(2)$. The space of qubits is not identified with the $SU(2)$ group manifold (the $S^3$ sphere) but rather with a coset space $S^2=SU(2)/U(1)$ (the Bloch sphere). The reason is that from the physical point of view, there is a redundancy in the form of an overall phase generated by $U(1)$. This  well known insight is based on the geometric approach to quantum mechanics~\cite{qubit}, but let's make it explicit to compare it with what follows for superqubits. If we exponentiate an arbitrary element of the $su(2)$ Lie algebra we obtain
$$
    V=\exp{\sum_ia_iA_i}=\cos{m\theta\over2}\openone+{i\over m}\sin{m\theta\over2}(a_1\s_1+a_2\s_2+a_3\s_3),
$$
where $m=\sum_{i}a_i^2$. The explicit transition from $SU(2)$ to $SU(2)/U(1)$ is achieved by setting $m=1$.

For the $SU(2)$ part of $UOSP(1|2;\C\La_2)$ the exponentiation goes through in exactly the same way. The only difference is that the parameter $m=\sum_{i}a_i^2+2a_ib_i\eta\eta^\#$ is even Grassmann. We can ignore this overall Grassmann number by setting $m=1$~\cite{landimarmo,cohstates,supersphere,supersphere1} and so
\begin{equation}\label{eq:condsOnSuperqubit}
\sum_{i}a_i^2=1\mbox{\ \ and\ \ }\sum_{i}a_ib_i=0.
\end{equation}
\begin{defi}[\cite{superqubits,landimarmo,cohstates,supersphere,supersphere1,superspinors}]\label{def:superQ}
    A $\C\La_2$-superqubit is a carrier of the fundamental representation of the group~$UOSP(1|2;\C\La_2)$.
\end{defi}
$\C\La_2$-superqubits will simply be called superqubits. By performing the exponentiation in Eq.~(\ref{eq:UOSPoperator}) (see examples in Appendix~\ref{sec:AppSLinearALg} where the superlinear algebra calculations are illustrated  on it) we find
\begin{align}\label{eq:UOSPgroup}
    Z(2p_1\eta,2p_2\eta^\#,\a,\b)&=U(\a,\b)S(2p_1\eta,2p_2\eta^\#)\nn\\
     &= \begin{pmatrix}
        \a & -\b^\# & 0 \\
        \b & \a^\# & 0 \\
        0 & 0 & 1
      \end{pmatrix}
    \begin{pmatrix}
      1+{P^2\over2}\eta\eta^\# & 0 & -\bar p_1\eta^\#+\bar p_2\eta \\
      0 & 1+{P^2\over2}\eta\eta^\# & -p_1\eta-p_2\eta^\# \\
      p_1\eta+p_2\eta^\# & -\bar p_1\eta^\#+\bar p_2\eta & 1-P^2\eta\eta^\# \\
    \end{pmatrix},
\end{align}
where $P^2=|p_1|^2+|p_2|^2$. If we set $b_i=0$ for all $i$ in Eq.~(\ref{eq:condsOnSuperqubit}) we can interpret $\a=\cos{\vt},\b=e^{i\phi}\sin{\vt}$ as the usual reparametrization of the Bloch sphere for qubits since then $\a,\b\in\bbC$ and so $\a^\#\equiv\bar\a,\b^\#\equiv\bar\b$.
\begin{lem}\label{lem:Ssimplified}
  For $Z$ in Eq.~(\ref{eq:UOSPgroup}) to belong to $UOSP(1|2;\C\La_2)$, the parameters $p_1,p_2$ must satisfy $p_1=p_2\equiv p$ where $p\in\bbR$.
\end{lem}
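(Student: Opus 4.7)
The plan is to impose the defining condition $Z^\ddg Z=ZZ^\ddg=1$ of Eq.~(\ref{eq:superunitary}) directly on the explicit matrix of Eq.~(\ref{eq:UOSPgroup}) and to read off which constraints on $p_1,p_2$ fall out. Because $Z$ factorises as $U(\a,\b)\cdot S(2p_1\eta,2p_2\eta^\#)$, and the $U$-factor is block-diagonal with an $SU(2)$ bosonic block satisfying $|\a|^2+|\b|^2=1$ (cf.\ Eq.~(\ref{eq:condsOnSuperqubit})) and a trivial $1$ in the fermionic slot, I would first observe that $U^\ddg U=1$ is automatic. The problem therefore collapses to the superunitarity of the purely fermionic factor $S(2p_1\eta,2p_2\eta^\#)$, and in particular all constraints on $p_1,p_2$ must emerge from this block alone.

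I would then compute $S^\ddg$ via Theorem~\ref{thm:ddgIsGradeAdjoint}: first supertranspose the $2|1$ block structure (which sign-flips the bottom-left odd block) and then apply the grade involution $\#$ entry by entry, using $c^\#=\bar c$ for $c\in\bbC$, $\eta\mapsto\eta^\#$, $(\eta^\#)^\#=-\eta$, and the $\#$-invariance of $\eta\eta^\#$. Because the diagonal of $S$ is a polynomial in $\eta\eta^\#$ with real coefficients, the diagonal of $S^\ddg$ coincides with that of $S$; in contrast, the odd off-diagonal entries pick up controlled sign flips while exchanging $p_i\leftrightarrow\bar p_i$ and $\eta\leftrightarrow\eta^\#$. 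I would then form the product $S^\ddg S$ entry by entry and use the nilpotence relations $\eta^2=(\eta^\#)^2=(\eta\eta^\#)^2=0$ together with $\eta\eta^\#=-\eta^\#\eta$ to collapse most products to scalar multiples of $\eta\eta^\#$.

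The diagonal entries of $S^\ddg S$ will reduce to $1$ once the definition $P^2=|p_1|^2+|p_2|^2$ is recognised, while the off-diagonal entries retain Grassmann-linear terms whose coefficients are $\bbC$-linear combinations of $p_1-p_2$, $\bar p_1-\bar p_2$ and $p_i-\bar p_i$; demanding that the coefficients of the monomials $\eta$ and $\eta^\#$ vanish separately yields two independent algebraic conditions that I expect to reduce precisely to $p_1=p_2$ and $p_1\in\bbR$, combining to give the claimed $p_1=p_2\equiv p\in\bbR$. The main obstacle will be the meticulous Grassmann-sign bookkeeping: the $\bbZ_2$-graded supertranspose and the grade involution each contribute subtle signs, and it is the relation $(\eta^\#)^\#=-\eta$ in particular that couples hash conjugation to complex conjugation with an extra minus sign, ultimately forcing the reality of the common parameter $p$. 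As a cross-check I would also compare Eq.~(\ref{eq:UOSPgroup}) against a direct Taylor expansion of $\exp[\zeta Q_1+\zeta^\# Q_2]$ with $\zeta=2p_1\eta+2p_2\eta^\#$ from Def.~\ref{def:UOSPgroup}, and verify that the two forms match exactly when (and only when) the derived constraints on $p_1,p_2$ hold.
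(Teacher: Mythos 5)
Your strategy cannot produce the constraint, because the condition you propose to impose is satisfied identically for \emph{all} complex $p_1,p_2$. The generator $T=\zeta Q_1+\zeta^\#Q_2$ with $\zeta=p_1\eta+p_2\eta^\#$ obeys $T^\ddg=-T$ for any odd $\zeta$ (this is exactly what the text asserts right after Eq.~(\ref{eq:uosparbitraryelement})), and since $T$ is an even supermatrix the exponential commutes with $\ddg$, so $S^\ddg=\exp(T^\ddg)=\exp(-T)=S^{-1}$ automatically. You can confirm this on the explicit matrix in Eq.~(\ref{eq:UOSPgroup}): applying $ST$ and then $\#$ sends $p_1\eta+p_2\eta^\#\mapsto-(p_1\eta+p_2\eta^\#)$ and $-\bar p_1\eta^\#+\bar p_2\eta\mapsto-(-\bar p_1\eta^\#+\bar p_2\eta)$, so that $S^\ddg=S(-2p_1\eta,-2p_2\eta^\#)$, and in the product $S^\ddg S$ the Grassmann-linear terms in every off-diagonal entry cancel pairwise while the $\eta\eta^\#$ terms on the diagonal cancel via $P^2=|p_1|^2+|p_2|^2$; there are no surviving coefficients proportional to $p_1-p_2$ or $p_i-\bar p_i$ as you anticipate. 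Your cross-check fails for the same reason: the Taylor expansion of $\exp[\zeta Q_1+\zeta^\# Q_2]$ reproduces Eq.~(\ref{eq:UOSPgroup}) unconditionally, because $\zeta^2=(\zeta^\#)^2=0$ and $\zeta\zeta^\#=P^2\eta\eta^\#$ hold for arbitrary complex $p_1,p_2$. The paper flags precisely this dead end just before the lemma: ``$p_1,p_2$ in $\zeta$ can't be arbitrary complex but the reason is not $S^\ddg=-S$.''

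The constraint actually comes from closure of the family $\{S(\zeta)\}$ under the group product. The paper requires $\exp[\zeta Q_1+\zeta^\# Q_2]\exp[\la Q_1+\la^\# Q_2]=\exp[(\zeta+\la)Q_1+(\zeta^\#+\la^\#)Q_2]$ for two \emph{independent} parameter sets $\zeta=p_1\eta+p_2\eta^\#$ and $\la=q_1\eta+q_2\eta^\#$. Expanding both sides, the unwanted $Q_1^2$ and $Q_2^2$ terms on the left carry coefficients proportional to $\zeta\la=(p_1q_2-p_2q_1)\eta\eta^\#$, forcing $p_1q_2=p_2q_1$, and matching the $Q_1Q_2$ and $Q_2Q_1$ terms forces $\zeta\la^\#=-\zeta^\#\la$, i.e.\ $p_1\bar q_1+p_2\bar q_2=\bar p_1q_1+\bar p_2q_2$. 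Because these identities must hold with the $p_i$ not being functions of the $q_j$, one is led to $p_2=Kp_1\equiv p$, $q_2=Kq_1\equiv q$, and then to $p\bar q=\bar p q$ with $p,q$ independent, whence $p=\bar p\in\bbR$. To repair your proof you would have to replace ``superunitarity of a single $Z$'' by ``closure of a two-element product''; no condition evaluated on one $S$ alone will see the constraint.
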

\begin{proof}
  Since $Z\in UOSP(1|2;\C\La_2)$ then for any other $Z'(2q_1\eta,2q_2\eta^\#,\g,\d)\in UOSP(1|2;\C\La_2)$ it must hold $ZZ'=Z''\in UOSP(1|2;\C\La_2)$ as well. For the purpose of the proof let's further assume $U(\a,\b)=U(\g,\d)=\id$, where $\id$ is the unit (identity) matrix. Then the above requirement becomes $SS'=S''$ which can be rephrased using Eq.~(\ref{eq:UOSPoperator}) as
  \begin{equation}\label{eq:expProduct}
    \exp{[\zeta Q_1+\zeta^\# Q_2]}\exp{[\la Q_1+\la^\# Q_2]}=\exp{[(\zeta+\la) Q_1+(\zeta^\#+\la^\#) Q_2]},
  \end{equation}
  where $\zeta=p_1\eta+p_2\eta^\#$ and $\la=q_1\eta+q_2\eta^\#$ ($p_1,p_2,q_1,q_2\in\bbC$). Using Eq.~(\ref{eq:zetaANDoddGens}), the LHS of Eq.~(\ref{eq:expProduct}) becomes
  \begin{align}\label{eq:LHS}
    LHS = & \id+(\zeta Q_1+\zeta^\# Q_2)+(\la Q_1+\la^\# Q_2)-{1\over2}\zeta\zeta^\#Q-{1\over2}\la\la^\#Q\nn\\
     - & \zeta\la Q_1^2-\zeta^\#\la^\# Q_2^2-\zeta^\#\la Q_2Q_1-\zeta\la^\# Q_1Q_2,
  \end{align}
  where $Q=Q_1Q_2-Q_2Q_1$ and the terms with more than two Grassmann numbers are zero due to their nilpotentcy. The RHS reads
  \begin{align}\label{eq:RHS}
    RHS = & \id+(\zeta+\la) Q_1+(\zeta^\#+\la^\#) Q_2-{1\over2}\zeta\zeta^\#Q-{1\over2}\la\la^\#Q\nn\\
     - & {1\over2}(\zeta^\#\la+\la^\#\zeta) Q_2Q_1-{1\over2}(\zeta\la^\#+\la\zeta^\#) Q_1Q_2.
  \end{align}
  The first rows on the left and right are identical. For equality~(\ref{eq:expProduct}) to hold, $Q_1^2$ and $Q_2^2$ must  vanish from Eq.~(\ref{eq:LHS}) implying
  \begin{equation}\label{eq:constr1}
      p_2q_1=p_1q_2
  \end{equation}
  and $\zeta\la^\#=-\zeta^\#\la$ dictates
  \begin{equation}\label{eq:constr2}
      p_1\bar q_1+p_2\bar q_2=\bar p_1 q_1+\bar p_2 q_2.
  \end{equation}
  But this is not enough. In addition to the two  constraints on $p_i$ and $q_j$, these coefficients must also be  independent. This is not an oxymoron. For $S,S'$ to be group elements, the action of $S$ following $S'$ cannot be dependent on the coefficients chosen for $S'$. So $p_1,p_2$ cannot be a function of $q_1$ and $q_2$ in any way. Therefore, Eqs.~(\ref{eq:constr1}) and (\ref{eq:constr2}) are to be understood as constraints on $p_i,q_j$, not prescriptions to get $p_i$ from $q_j$ or vice-versa.

  Remarkably, these requirements can be satisfied by first setting $K p_1=p_2\equiv p$ and $K q_1=q_2\equiv q$ where $K\in\bbC$.  Like that the first constraint Eq.~(\ref{eq:constr1}) is satisfied and the second becomes (by setting $p_2\equiv p,q_2\equiv q$)
  $$
  p\bar q=\bar p q.
  $$
  Here again $q$ can't depend on $p$ or $\bar p$ and so the only option is to set $p=\bar p$ and $q=\bar q$. Hence $p,q\in\bbR$.
\end{proof}
Following this result, the final step we take is to reparametrize the Grassmann variables in Eq.~(\ref{eq:UOSPgroup}) by setting
\begin{align*}
  \zeta=p_1\eta+p_2\eta^\# &\equiv p(\eta+\eta^\#)  \mapsto p\eta, \\
  \zeta^\#=\bar p_1\eta^\#-\bar p_2\eta &\equiv p(\eta^\#- \eta) \mapsto p\eta^\#.
\end{align*}
Hence $P^2=p^2$ and from Eq.~(\ref{eq:UOSPgroup}) we finally obtain
\begin{equation}\label{eq:UOSPgroupS}
    S(2p\eta)=
    \begin{pmatrix}
      1+{p^2\over2}\eta\eta^\# & 0 & - p\eta^\# \\
      0 & 1+{p^2\over2}\eta\eta^\# & -p\eta \\
      p\eta & - p\eta^\# & 1-p^2\eta\eta^\# \\
    \end{pmatrix}.
\end{equation}
The derivation of $S(2p\eta)$ with the prior knowledge of Lemma~\ref{lem:Ssimplified} is presented in the last section of Appendix~\ref{sec:AppSLinearALg}.

Since $A_i,Q_j\in{\rm End}(\bbR^{2|1})$ it follows that $Z(2p\eta,\a,\b)\in\M(2|1,\C\La_2)$. Actually, due to Lemma~\ref{lem:evenmatrices} the $Z$ supermatrices are even only. The standard basis for $\M(2|1,\C\La_2)$ reads
\begin{equation}\label{eq:superqubitBases}
    \ket{0}=\begin{pmatrix}
              1 \\
              0 \\
              0 \\
            \end{pmatrix},\hspace{4mm}
    \ket{1}=\begin{pmatrix}
              0 \\
              1 \\
              0 \\
            \end{pmatrix},\hspace{4mm}
    \kbd=\begin{pmatrix}
              0 \\
              0 \\
              1 \\
            \end{pmatrix}.
\end{equation}
This is nothing else than the standard (free) basis Eq.~(\ref{eq:freebasis}) written in the physics notation. The basis states $\ket{0}$ and $\ket{1}$ are even (bosonic) states. The basis state $\kbd$ has a distinguished notation introduced in~\cite{superqubits} to stress out that the basis state is odd (fermionic). The reason why we call them bosonic and fermionic states will be clarified later.
\begin{rem}
  Here we continue using the super ket and bra notation. For its origin and differences to the previous use in Theorem~\ref{thm:ddgIsGradeAdjoint}, see the important remark on page~\pageref{rem:superkets}.
\end{rem}

Let's return to the explicit form of the superqubit Eq.~(\ref{eq:UOSPgroup}). First, we observe that the matrix $U(\a,\b)$ is an element of the $SU(2)$ subgroup as a consequence of the $su(2)$ subalgebra of $uosp(1|2,\C\La_2)$. The second matrix $S(2p\eta)$ from Eq.~(\ref{eq:UOSPgroupS}) is more interesting. Here come two simple lemmas studying its properties.
\begin{lem}\label{lem:Ssubgroup}
The matrix $S(2p\eta)$ is an element of an Abelian group isomorphic to $(\bbR,+)$ with the group operation being addition.
\end{lem}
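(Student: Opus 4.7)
The plan is to define the map $\phi:(\bbR,+)\to\{S(2p\eta) : p\in\bbR\}$ by $\phi(p)=S(2p\eta)$, and show it is a group isomorphism when the codomain carries matrix multiplication. From Eq.~(\ref{eq:UOSPgroupS}), $\phi(0)=S(0)=\openone$, so $\phi$ sends the additive identity to the matrix identity. The heart of the argument is the composition law
\[
S(2p\eta)\cdot S(2p'\eta)=S\bigl(2(p+p')\eta\bigr),\qquad p,p'\in\bbR,
\]
from which closure, inverses $\phi(p)^{-1}=\phi(-p)=S(-2p\eta)$, and Abelianness all follow.

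To prove this composition law I would carry out a direct entry-by-entry multiplication of the two $3\times3$ supermatrices in Eq.~(\ref{eq:UOSPgroupS}), making systematic use of $\eta^2=(\eta^\#)^2=0$ and the consequent $(\eta\eta^\#)^2=0$. These identities kill any product in which both factors already contain $\eta\eta^\#$, as well as any product of two off-diagonal entries sharing the same Grassmann generator. The surviving cross-terms then assemble as follows: on the $(1,1)$ and $(2,2)$ entries the linear-times-linear contribution $-pp'\eta^\#\eta=pp'\eta\eta^\#$ combines with $\tfrac{p^2}{2}\eta\eta^\#+\tfrac{p'^2}{2}\eta\eta^\#$ to give $\tfrac{(p+p')^2}{2}\eta\eta^\#$; on the off-diagonal entries only the constant-times-linear cross-terms survive, yielding the desired $-(p+p')\eta^\#$ or $-(p+p')\eta$; and on the $(3,3)$ entry the two off-diagonal cross-terms together contribute $-2pp'\eta\eta^\#$, which combines with $-p^2\eta\eta^\#-p'^2\eta\eta^\#$ into $-(p+p')^2\eta\eta^\#$, as required.

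A conceptual shortcut is available through the exponential description: by Lemma~\ref{lem:Ssimplified} and the reparametrization preceding Eq.~(\ref{eq:UOSPgroupS}), one has $S(2p\eta)=\exp(pT)$ for a single fixed generator $T\propto\eta Q_1+\eta^\# Q_2\in uosp(1|2;\C\La_2)$. Since all members of the family $\{\exp(pT)\}_{p\in\bbR}$ are scalar multiples of the same $T$, they commute pairwise and BCH collapses to $\exp(pT)\exp(p'T)=\exp\bigl((p+p')T\bigr)$, yielding the composition law without computation.

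Injectivity of $\phi$ follows by inspecting the $(1,3)$ entry of $S(2p\eta)$, which equals $-p\eta^\#$ and uniquely determines $p\in\bbR$ because $\eta^\#$ is a fixed nonzero Grassmann generator; surjectivity onto the image is tautological. The only technical point is the sign bookkeeping for Grassmann products in the direct approach, but this is mechanical once one uses $\eta\eta^\#=-\eta^\#\eta$ together with the vanishing of $\eta^2$, $(\eta^\#)^2$, and $(\eta\eta^\#)^2$; no step is a serious obstacle.
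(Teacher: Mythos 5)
Your proposal is correct and takes essentially the same route as the paper, which likewise verifies the composition law $S(2p\eta)S(2q\eta)=S(2(p+q)\eta)$, the identity $S(0)=\id$, and the inverse $S(-2p\eta)$ by direct supermatrix multiplication; your entry-by-entry bookkeeping of the Grassmann cross-terms is accurate. Your exponential shortcut via $S(2p\eta)=\exp(pT)$ is a valid and slightly cleaner supplement (it is in effect the content of Eq.~(\ref{eq:expProduct}) specialized to $\zeta=2p\eta$, $\la=2p'\eta$), and your injectivity check via the $(1,3)$ entry makes the isomorphism with $(\bbR,+)$ explicit where the paper leaves it implicit.
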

\begin{proof}
One can easily verify the group axioms:
\begin{enumerate}[(i)\hspace{3.3522mm}]
  \item $S(2p\eta)S(2q\eta)=S(2(p+q)\eta)$ follows from matrix multiplication.
  \item The existence of an identity $S(0)=\id$ follows by inspecting Eq.~(\ref{eq:UOSPgroupS}).
  \item $S^{-1}(2p\eta)=S^\ddg(2p\eta)\equiv S(-2p\eta)$. The second equality follows from the definition of the superadjoint, the properties of the grade involution Eqs.~(\ref{eq:hashaction}) and the supertranspose given by Eq.~(\ref{eq:STforS}). Then, $S(-2p\eta)S(2p\eta)=\id$ can be verified by matrix multiplication.
\end{enumerate}
\end{proof}
The group $(\bbR,+)$ is non-compact  -- an issue whose solution we will offer later. Next, we show that the two matrices $U$ and $S$ `essentially' commute.
\begin{lem}\label{lem:USSU}
$U(\a,\b)S(2p\tilde\eta)=S(2p\eta)U(\a,\b)$ where $\tilde\eta=\a^\#\eta^\#+\b^\#\eta$.
\end{lem}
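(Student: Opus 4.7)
The plan is to exploit the block structure of $U(\a,\b)$ and of the formula for $S(2p\eta)$ in Eq.~(\ref{eq:UOSPgroupS}) in order to reduce the claimed identity to a short calculation in the odd off-diagonal entries. I would first write
\begin{equation*}
S(2p\eta)=\openone + pM(\eta)+p^2N(\eta),
\end{equation*}
where $M(\eta)$ collects the entries linear in $\eta,\eta^\#$ (sitting only in the last row and last column, with a zero at the $(3,3)$ position) and $N(\eta)=\diag(\tfrac12\eta\eta^\#,\,\tfrac12\eta\eta^\#,\,-\eta\eta^\#)$ is the even quadratic piece. The same decomposition applies to $S(2p\tilde\eta)$ after replacing $\eta$ by $\tilde\eta$ throughout.

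The first step is to establish the scalar identity $\tilde\eta\,\tilde\eta^\# = \eta\eta^\#$. Using $(\a^\#)^\#=\a$, $(\eta^\#)^\#=-\eta$, the distributivity of $\#$ over sums and products, and the nilpotency $\eta^2=(\eta^\#)^2=0$, the expansion of $\tilde\eta\,\tilde\eta^\#$ collapses onto $(|\a|^2+|\b|^2)\,\eta\eta^\#$, which equals $\eta\eta^\#$ thanks to the $SU(2)$ normalisation $|\a|^2+|\b|^2=1$ coming from Eq.~(\ref{eq:condsOnSuperqubit}). Hence $N(\tilde\eta)=N(\eta)$. Moreover $N$ is block-diagonal in the same partition as $U(\a,\b)$, namely a scalar multiple of the identity on the top $2\times 2$ block and a scalar at position $(3,3)$, so both $\openone$ and $N$ commute with $U$ and contribute equally to the two sides of the lemma.

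What remains is the linear identity $U(\a,\b)\,M(\tilde\eta)=M(\eta)\,U(\a,\b)$. Since $M$ is supported only in the last row and last column, this decouples into two independent equations: the column equation, which says that left multiplication by the top-left $2\times 2$ block of $U$ sends the $2\times 1$ column of $M(\tilde\eta)$ at positions $(1,3),(2,3)$ to the corresponding column of $M(\eta)$, and the analogous row equation acting on the right. Each reduces to a short calculation that uses only the complex-scalar entries of $U$ together with the expressions for $\tilde\eta,\tilde\eta^\#$ in terms of $\eta,\eta^\#$; every cross term collapses via $|\a|^2+|\b|^2=1$. A useful conceptual cross-check, not needed for the proof itself, is that $U\,Q_i\,U^\ddg$ acts on the doublet $(Q_1,Q_2)$ in a spin-$\tfrac12$ representation of $SU(2)$, so the substitution $\tilde\eta\mapsto\eta$ is precisely the contragredient spinor rotation that makes the exponent in $S(2p\tilde\eta)=\exp[p\tilde\eta Q_1+p\tilde\eta^\# Q_2]$ coincide with the one in $S(2p\eta)$ after conjugation by $U$.

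The main obstacle, if there is any, is purely bookkeeping: carefully tracking the signs produced by the grade involution when computing $\tilde\eta^\#$ and by the anticommutation $\eta^\#\eta=-\eta\eta^\#$ inside the quadratic entries of $N$. There is no deeper algebraic difficulty; once the three pieces $\openone$, $pM$, and $p^2N$ are treated separately, the statement follows from the unitarity of the $SU(2)$ subgroup of $UOSP(1|2;\C\La_2)$.
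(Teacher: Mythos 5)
Your strategy is sound and is essentially the paper's own proof in a slightly more organized form: the paper's argument consists of the single observation that it suffices to check $U(\a,\b)S(2p\tilde\eta)\ket{m}=S(2p\eta)U(\a,\b)\ket{m}$ for $m=0,1,\bu$, i.e.\ to compare the two products column by column, which is exactly what your decomposition $S=\openone+pM+p^2N$ accomplishes once you note that the linear terms sit only in the last row and column. Your reduction of the quadratic piece to the scalar identity $\tilde\eta\,\tilde\eta^\#=\eta\eta^\#$ is correct and cleanly explains why the diagonal survives conjugation by $U$.

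However, the one step you defer to a ``short calculation'' --- the linear identity $U\,M(\tilde\eta)=M(\eta)\,U$ --- is precisely where the difficulty lives, and as set up it would fail. With $\tilde\eta$ taken literally as $\a^\#\eta^\#+\b^\#\eta$ one has $\tilde\eta^\#=-\a\eta+\b\eta^\#$, and the first entry of your column equation becomes
\begin{equation*}
-\a\tilde\eta^\#+\b^\#\tilde\eta=\bigl(\a^2+(\b^\#)^2\bigr)\eta+\bigl(\a^\#\b^\#-\a\b\bigr)\eta^\#,
\end{equation*}
which is not $-\eta^\#$: the cross terms produce $\a^2+(\b^\#)^2$, not $\a\a^\#+\b\b^\#$. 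The collapse you invoke happens only if the displayed expression is read as $\tilde\eta^\#$, i.e.\ $\tilde\eta=\a\eta-\b\eta^\#$. Indeed, comparing the third rows of $US(2p\tilde\eta)$ and $S(2p\eta)U$ directly (using Eq.~(\ref{eq:UOSPgroupS})) forces $\tilde\eta=\a\eta-\b\eta^\#$ and $\tilde\eta^\#=\a^\#\eta^\#+\b^\#\eta$, which is also what the change of variables after Eq.~(\ref{eq:superqubitbra}) suggests. The reason your first step did not detect this is that $\tilde\eta\,\tilde\eta^\#=\eta\eta^\#$ holds for either placement of the hash. So your method does prove the lemma, but only after this correction; you should carry out the $2\times 2$ column and row computations explicitly rather than asserting that they collapse.
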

\begin{proof}
The claim can be proved directly by matrix multiplication but it is easier (and sufficient) to show that
$$
U(\a,\b)S(2p\tilde\eta)\ket{m}=S(2p\eta)U(\a,\b)\ket{m},
$$
where $m=\bu,0,1$.
\end{proof}

As follows from Eqs.~(\ref{eq:UOSPgroup}) and (\ref{eq:UOSPgroupS}), the general form of a pure superqubit is
\begin{equation}\label{eq:superqubit}
   \ket{\psi}=S(2p\eta)U(\a,\b)\ket{0}=
   \begin{pmatrix}
     \a \left(1+{p^2\over2}\eta\eta^\#\right)\\
     \b \left(1+{p^2\over2}\eta\eta^\#\right) \\
     p(\a\eta-\b\eta^\#) \\
   \end{pmatrix}=
   \left(1+{p^2\over2}\eta\eta^\#\right)(\a\ket{0}+\b\ket{1})-p(\a\eta-\b\eta^\#)\kbd.
\end{equation}
Note the minus sign in the rightmost equation accompanying the coefficient $p(\a\eta-\b\eta^\#)$. `Pulling out' the Grassmann coefficients to the left (that's what is happening in the second equality) is dictated by the rules of multiplication of even/odd vectors and even/odd Grassmann numbers (see Eq.~(\ref{eq:zetaVSbulletKet}) and the remark in the end of this section). Using the properties of the supertranspose we find the corresponding bra vector
\begin{align}\label{eq:superqubitbra}
   \bra{\psi}
   &=
   \begin{pmatrix}
     \bar\a \left(1+{p^2\over2}\eta\eta^\#\right) ,& \bar\b \left(1+{p^2\over2}\eta\eta^\#\right), & p(\bar\a\eta^\#+\bar\b\eta)
   \end{pmatrix}\nn\\
   &=
   \left(1+{p^2\over2}\eta\eta^\#\right)(\bar\a\bra{0}+\bar\b\bra{1})+p(\bar\a\eta^\#+\bar\b\eta)\bbd.
\end{align}

Def.~\ref{def:superQ} and the strict rules of supermatrix algebra lead to some differences compared to the original definition of a superqubit~\cite{superqubits}. After the following change of variables
\begin{align*}
  -\a\eta+\b\eta^\# & \mapsto\eta, \\
  -\bar\a^\#\eta-\bar\b^\#\eta & \mapsto\eta^\#
\end{align*}
(that happens to be similar to the transformation of Grassmann variables in Lemma~\ref{lem:USSU}) the states in~(\ref{eq:superqubit}) and (\ref{eq:superqubitbra}) become
\begin{subequations}
\begin{align}\label{eq:superqubitoriginal}
    \ket{\psi} &= \left(1+{p^2\over2}\eta\eta^\#\right)(\a\ket{0}+\b\ket{1})+p\eta\kbd,\\
    \bra{\psi} &= \left(1+{p^2\over2}\eta\eta^\#\right)(\bar\a\bra{0}+\bar\b\bra{1})-p\eta^\#\bbd.\label{eq:superqubitoriginalBra}
\end{align}
\end{subequations}
The main difference between here and Ref.~\cite{superqubits} is the plus sign in the even Grassmann coefficients of $\ket{0}$ and $\ket{1}$ already visible in Eq.~(\ref{eq:UOSPgroup}).

Super density matrices can now easily be constructed. As an example, assume $\a=1,\b=0$ for simplicity. Eqs.~(\ref{eq:superqubit}) and~(\ref{eq:superqubitbra}) lead to the superdensity matrix
\begin{equation}\label{eq:superdensitymatrix}
  \vr=\kbr{\psi}{\psi}
  =\begin{pmatrix}
     1+p^2\eta\eta^\# & 0 & p\eta^\# \\
     0 & 0 & 0\\
     p\eta & 0 & p^2\eta\eta^\# \\
   \end{pmatrix}.
\end{equation}
The supertrace, Eq.~(\ref{eq:sTr}), can be rewritten as $\sTr{}(S)=\sum_{i}(-)^{|i|}s_{ii}$ where $s_{ii}=\bra{i}\!S\ket{i}$ and so $\sTr{}(\vr)=1$ as expected. This corresponds to $\brk{\psi}{\psi}=1$ calculated in one of the examples in Appendix~\ref{sec:AppSLinearALg}. The supermatrix $\vr$ is even and super Hermitian: $\vr^\ddg=\vr$.
\begin{rem}
  To make a connection with Def.~\ref{def:supermodule}, note that $\psi\in\La^{2|1}$. The components of the column supermatrix in Eq.~(\ref{eq:superqubit}) are the {\em right} coordinates of $\psi$. The reason why we write them on the left is purely a matter of habit. But to be allowed to do so we had to change the sign in the second equality of~(\ref{eq:superqubit}) following Eq.~(\ref{eq:RbimoduleRight}).
\end{rem}

\subsubsection*{The standard basis and Jordan-Schwinger representation}

The generators of the $uosp(1|2,\C\La_2)$ algebra are linear operators acting on the space $\M(2|1,\C\La_N)$ spanned by the basis $\{\ket{0},\ket{1},\kbd\}$ in this (standard) order. It is advantageous to introduce the Jordan-Schwinger (also called oscillator) representation of the $uosp(1|2,\C\La_2)$ algebra. Let's define a `vacuum' state $\ket{vac}$ by $b_1\ket{vac}=b_2\ket{vac}=f\ket{vac}=0$. In the `single-particle' sector we get
\begin{equation}\label{eq:singlesector}
  \kbd=f^\ddg\ket{vac},\hspace{4mm} \ket{0}=b_1^\dg\ket{vac}\equiv b_1^\ddg\ket{vac},\hspace{4mm}\ket{1}=b_2^\dg\ket{vac}\equiv b_2^\ddg\ket{vac}.
\end{equation}
The bosonic operators $b_i,b_i^\dg$ satisfy the canonical commutation relation $[b_i,b_i^\dg]=1$ and $f,f^\ddg$ are operators satisfying the canonical anticommutation relation $\{f,f^\ddg\}=1$. We have thus justified the name bosonic for even states and fermionic for odd states. Then the $su(2)$ subalgebra generators are represented by
\begin{subequations}\label{eq:su2schwinger}
\begin{align}
   A_1&={i\over2}(b_1^\dg b_2+b_2^\dg b_1),\\
   A_2&={1\over2}(b_1^\dg b_2-b_2^\dg b_1),\\
   A_3&={i\over2}(b_1^\dg b_1-b_2^\dg b_2),
\end{align}
\end{subequations}
where  and $A_i^\ddg\equiv A_i^\dg=-A_i$. The odd generators can be expressed as
\begin{subequations}\label{eq:osp12schwinger}
\begin{align}
   Q_1&=-{1\over2}(f^\ddg b_1+fb_2^\dg),\\
   Q_2&={1\over2}(f^\ddg b_2-f b_1^\dg).
\end{align}
\end{subequations}
One can verify that  the expression $Q_i^\ddg=-\e^{ij}Q_j$ holds in the Jordan-Schwinger operator representation as well by  using  $(f^\ddg)^\ddg=-f$ and $f^\ddg b_i=b_if^\ddg$. The first property, on the other hand, follows from the fact that $\kbd$ is an odd vector and $\kbd^{\ddg\circ\ddg}\equiv\kbd^{ST\circ ST}=-\kbd$. It is a special case of Eq.~(\ref{eq:STactsOnVectors2}) (see also the identification of superkets and bras in the subsection that follows in Appendix~\ref{sec:AppBackground}).

It remains to be argued why $\{\ket{0},\ket{1}\}$ is called a qubit basis. Naively, it seems sufficient to say that the states are eigenstates of $A_3$ that corresponds to the embedded Pauli operator $\s_3$ for qubits. But one has to investigate what happens for a system of two or more bosons and whether their exchange statistics conform to the behavior of two or more qubits when they are swapped. The answer is that unlike a system of two or more fermions, we can indeed associate $N$ distinguishable bosons with $N$ qubits. Note that in our case the parameter that distinguishes the two bosons $b_i ,(b_i^\dg)$ is the index $i$.

The space spanned by the vectors from Eq.~(\ref{eq:singlesector}) is a $\bbZ_2$-graded Hilbert space. But once we start constructing even particle sectors we run into troubles. It turns out that this is an example of a grade star representation~\cite{snr2} where by taking a tensor product of two such representations we obtain a vector space that is not a Hilbert space. It is a consequence of the following lemma.
\begin{lem}\label{lem:gradetensorprod}
\begin{equation}\label{eq:gradeproduct}
  \ddg:f_1^\ddg f_2^\ddg\mapsto f_1 f_2.
\end{equation}
\begin{proof}
  Let $\ket{\psi}_i=\g_i\ket{0}_i+p_i\eta_i\kbd_i$ be two ($i=1,2$) superqubits Eq.~(\ref{eq:superqubitoriginal}), where $\g_i=1+{p_i^2\over2}\eta_i\eta_i^\#$ and we assume $\a=1,\b=0$ that is sufficient for the proof's sake. Superqubits are even supervectors and hence $\ket{\psi}_1\ket{\psi}_2=\ket{\psi}_2\ket{\psi}_1$. This follows from elementary superlinear algebra:
    \begin{align}\label{eq:psi12ISpsi21}
      \ket{\psi}_1\ket{\psi}_2 & =(\g_1\ket{0}_1+p_1\eta_1\kbd_1)(\g_2\ket{0}_2+p_2\eta_2\kbd_2)\nn\\
      & = \g_1\g_2\ket{0}_1\ket{0}_2+\g_1p_2\eta_2\ket{0}_1\kbd_2+p_1\g_2\eta_1\kbd_1\ket{0}_2-p_1p_2\eta_1\eta_2\kbd_1\kbd_2\nn\\
      & = \g_2\g_1\ket{0}_2\ket{0}_1+p_2\eta_2\kbd_2\g_1\ket{0}_1+\g_2\ket{0}_2p_1\eta_1\kbd_1+p_1\eta_1\kbd_1p_2\eta_2\kbd_2\\
      & = (\g_2\ket{0}_2+p_2\eta_2\kbd_2)(\g_1\ket{0}_1+p_1\eta_1\kbd_1)\nn\\
      & = \ket{\psi}_2\ket{\psi}_1,\nn
    \end{align}
    where the third equality comes from the fermionic character of the bullet state: $f_1^\ddg f_2^\ddg=-f_2^\ddg f_1^\ddg$.

    Since the superqubits are normalized to one (see one of the examples from the last section of Appendix~\ref{sec:AppSLinearALg}) we can proceed by writing:
    \begin{equation}\label{eq:twoinnerproducts}
      1=\brk{\psi}{\psi}_1\brk{\psi}{\psi}_2=\bra{\psi}_1(\brk{\psi}{\psi}_2) \ket{\psi}_1=\bra{\psi}_1\bra{\psi}_2\ket{\psi}_2\ket{\psi}_1=\bra{\psi}_1\bra{\psi}_2\ket{\psi}_1\ket{\psi}_2.
    \end{equation}
    Therefore, the grade adjoint must satisfy $\ddg:\ket{\psi}_1\ket{\psi}_2\mapsto \bra{\psi}_1\bra{\psi}_2$. To see the consequence, we write (following Eqs.~(\ref{eq:superqubitoriginalBra}) and~(\ref{eq:zetaVSbulletBra}))
    \begin{align}\label{eq:braproduct}
    \bra{\psi}_1\bra{\psi}_2&=\g_1\g_2\bra{0}_{1}\bra{0}_{2}
    +\g_1p_2\eta_2^\#\bra{0}_{1}\bbd_2+g_2p_1\eta_1^\#\bbd_1\bra{0}_{2}-p_1p_2\eta_1^\#\eta_2^\#\bbd_{1}\bbd_2.
    \end{align}
    Therefore, the action of the superadjoint on the double-bullet component of $\ket{\psi}_1\ket{\psi}_2$ in Eq.~(\ref{eq:psi12ISpsi21}) must result in:
    $$
    -p_1p_2\eta_1\eta_2\kbd_1\kbd_2\equiv -p_1p_2\eta_1\eta_2f_1^\ddg f_2^\ddg\ket{vac}
    \overset{\ddg}{\to}
    -p_1p_2\eta_1^\#\eta_2^\#\bra{vac}f_1 f_2
    \equiv
    -p_1p_2\eta_1^\#\eta_2^\#\bbd_1\bbd_2.
    $$
    (The minus sign in the leftmost expression comes from swapping $\eta_2$ and $\kbd_1$ in the last summand of the middle row of~(\ref{eq:psi12ISpsi21})).  But this is precisely the last summand of Eq.~(\ref{eq:braproduct}) and hence $(f_1^\ddg f_2^\ddg)^\ddg=f_1f_2$.
\end{proof}
\end{lem}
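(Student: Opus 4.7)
The plan is to exploit the super-commutativity of even supervectors together with the normalization of single superqubits to pin down the action of $\ddg$ on the product $f_1^\ddg f_2^\ddg$. First I would take two normalized superqubits $\ket{\psi}_i = \g_i\ket{0}_i + p_i \eta_i \kbd_i$ as in Eq.~(\ref{eq:superqubitoriginal}), setting $\a = 1$, $\b = 0$ for simplicity since the $SU(2)$ part is a proper (even) unitary and can be absorbed. Since each $\ket{\psi}_i$ is an even supervector, the product $\ket{\psi}_1 \ket{\psi}_2$ must equal $\ket{\psi}_2 \ket{\psi}_1$; expanding both products and matching terms fixes all signs that arise when commuting even/odd Grassmann coefficients past even/odd basis kets together with the anticommutation $f_1^\ddg f_2^\ddg = -f_2^\ddg f_1^\ddg$. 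In particular the double-bullet summand of $\ket{\psi}_1\ket{\psi}_2$ picks up a coefficient $-p_1 p_2 \eta_1 \eta_2$.

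Second, I would use the scalar identity $1 = \brk{\psi}{\psi}_1 \brk{\psi}{\psi}_2$ and rewrite it as $\bra{\psi}_1 \bra{\psi}_2 \ket{\psi}_1 \ket{\psi}_2$, invoking the just-established super-commutativity to reorder the inner kets without incurring a sign. This identifies $\bra{\psi}_1 \bra{\psi}_2$ as the image of $\ket{\psi}_1 \ket{\psi}_2$ under $\ddg$, crucially with the two factors in the \emph{same} order — this is the feature that distinguishes this grade-star representation from an ordinary Hermitian adjoint, where the order would be reversed.

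Third, I would expand $\bra{\psi}_1 \bra{\psi}_2$ using the bra form Eq.~(\ref{eq:superqubitbra}) and isolate its $\bbd_1\bbd_2$ component, which comes out to $-p_1 p_2 \eta_1^\# \eta_2^\# \bbd_1\bbd_2$. Writing the double-bullet component of $\ket{\psi}_1\ket{\psi}_2$ in Jordan--Schwinger form as $-p_1 p_2 \eta_1 \eta_2 f_1^\ddg f_2^\ddg \ket{vac}\ket{vac}$ and applying the antilinear $\ddg$, each $\eta_i \mapsto \eta_i^\#$ while the overall numerical coefficient is preserved; equating the resulting expression with the $\bbd_1\bbd_2$ coefficient of $\bra{\psi}_1\bra{\psi}_2$ forces $(f_1^\ddg f_2^\ddg)^\ddg = f_1 f_2$.

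The main obstacle will be sign bookkeeping. In $\ket{\psi}_1\ket{\psi}_2$ one must commute the odd coefficient $p_2 \eta_2$ past the odd ket $\kbd_1$, while in $\bra{\psi}_1\bra{\psi}_2$ the analogous rearrangement involves $p_2 \eta_2^\#$ past $\bbd_1$; ensuring that the various signs combine consistently with the antilinear action of $\ddg$ so that only $+f_1 f_2$ survives — rather than, say, $-f_2 f_1$, which would be the ``unitary-like'' answer — is the delicate part. Everything else reduces to routine superlinear algebra on the $\C\La_2$-bimodule $\La^{2|1}\otimes\La^{2|1}$.
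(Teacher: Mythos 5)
Your proposal follows essentially the same route as the paper's own proof: the same choice of states with $\a=1,\b=0$, the same use of even super-commutativity $\ket{\psi}_1\ket{\psi}_2=\ket{\psi}_2\ket{\psi}_1$ to fix the sign of the double-bullet coefficient, the same normalization identity $1=\bra{\psi}_1\bra{\psi}_2\ket{\psi}_1\ket{\psi}_2$ to identify $\bra{\psi}_1\bra{\psi}_2$ as the grade adjoint of $\ket{\psi}_1\ket{\psi}_2$ in the \emph{same} factor order, and the same comparison of the $\bbd_1\bbd_2$ components to extract $(f_1^\ddg f_2^\ddg)^\ddg=f_1f_2$. The argument is correct as outlined.
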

This is a rather important lemma so let's show if it is consistent with the properties of odd operators. From Eq.~(\ref{eq:gradedadjointiii}) we know that $(f^\ddg)^\ddg=-f$ since $f$ is odd. The product of three fermion operators is odd as well and so
$$
\left((f_1f_2f_3)^\ddg\right)^\ddg=-f_1f_2f_3.
$$
Clearly, the proved action of the grade adjoint in Eq.~(\ref{eq:gradeproduct}) is compatible with the above equation because of
$$
\left((f_1f_2f_3)^\ddg\right)^\ddg=\left(f_1^\ddg (f_2f_3)^\ddg\right)^\ddg
=\left(f_1^\ddg f_2^\ddg f_3^\ddg\right)^\ddg=-f_1f_2f_3.
$$
\begin{rem}
If we used the properties of the usual adjoint that reverses the order of two operators upon which it acts, we would find that
$$
\left((f_1f_2f_3)^\dg\right)^\dg=f_1f_2f_3
$$
as expected. The two mappings (dagger and double dagger) are indeed different in many aspects.
\end{rem}

\subsubsection*{Super Hilbert space}
\label{subsubsec:superH}

Lemma~\ref{lem:gradetensorprod} has interesting consequences. The grade adjoint of $\kbd_1\kbd_2=f_1^\ddg f_2^\ddg\ket{vac}$ is $\bbd_1\bbd_2=\bra{vac}f_1f_2$ and so the norm of this state is negative
\begin{equation}\label{eq:gradestarrepproduct}
  \bbd_1\bbd_2\kbd_1\kbd_2=-\brk{\bu}{\bu}_1\brk{\bu}{\bu}_2=-1.
\end{equation}
Does it mean that after so much work we don't even have a proper Hilbert space? Fortunately, the answer is no and there are two reasons for it. First, looking at Eq.~(\ref{eq:UOSPgroup}) we notice something unusual. The $UOSP(1|2;\C\La_2)$ does not act transitively and so the superqubit space is not a homogeneous space. There is no unitary $Z\in UOSP(1|2;\C\La_2)$ that would take us from a subspace spanned by $\{\ket{0},\ket{1}\}$ to the subspace spanned by $\kbd$. That is not surprising because $Z$ is even and by definition it cannot change the degree of a homogeneous vector.

In principle, we could define even superqubits like in Eq.~(\ref{eq:superqubit}) and odd superqubits by $S(2p\eta)U(\a,\b)\kbd$ that would not be equivalent. However, a tensor product of two odd superqubits would suffer from the same problem as the state $\kbd_1\kbd_2$ -- its norm would be negative.

The second key aspect is the transition from Lie superalgebras to Grassmann-valued Lie algebras we underwent in Def.~\ref{def:uosp12}. The constraint on even operators is nothing else than a super version of antihermiticity.
We can trivially rewrite the constraint $S^\ddg=-S$ as $S^\ddg G+GS=0$ where
\begin{equation}\label{eq:G}
G=\begin{pmatrix}
    1 & 0 & 0 \\
    0 & 1 & 0 \\
    0 & 0 & 1 \\
  \end{pmatrix}
\end{equation}
is a matrix representing a non-degenerate, bilinear and positive semidefinite form that again appears right after Def.~\ref{def:UOSPgroup} in Eq.~(\ref{eq:superunitary}) as $Z^\ddg GZ=G$. We implicitly used this metric when we normalized the superqubit in Eq.~(\ref{eq:superqubit}). This choice is important  yet from another reason than positivity. The $UOSP(1|2;\C\La_2)$ group acts as an isometry group on a vector space equipped with the inner product induced by $G$.
\begin{defi}\label{def:hilbertspace}
Let $\SS$ be a non-degenerate form
$$
    \SS:V\times V\mapsto\C\La_{2,0},
$$
where $V=\La^{2|1}$ is a $\C\La_2$-bimodule (Def.~\ref{def:supermodule}). We wish the following properties to be satisfied:
\begin{enumerate}[(i)\hspace{3.3522mm}]
  \item $\SS(v,v)\geq0$ (positive semidefinite),
  \item $\SS(u,v)=\SS(v,u)^\#$ (super Hermitian),
  \item $\SS(\a u,v)=\bar\a\SS(u,v)$ (sesquilinear),
  \item $\SS(u,v)_\bbC=\SS(u_\bbC,v_\bbC)$ (consistent),
\end{enumerate}
where $u,v\in \La^{2|1}$ are even  and $\a\in\bbC$. The subscript $\bbC$ in case (iv) denotes the complex (non-Grassmann) part of a supernumber or any other encountered superstructure.
\end{defi}
The definition has interesting consequences that we will discuss in detail. 
\begin{prop}\label{prop:formproperties}
Let $\brk{_-}{_-}$  be an inner product induced by $G(e_i,e_j)=g_{ij}$ (Eq.~(\ref{eq:G})) defined as
\begin{equation}\label{eq:innerproduct}
  \brk{u}{v}=u^{i\#}v^jg_{ij}\equiv u_j^{\#}v^j,
\end{equation}
where $u,v\in\La^{2|1}$ are even and written in the standard basis: $u=e_iu^i$ and $v=e_jv^j$ where $i,j=1,2,3$ correspond to the standard basis~Eq.~(\ref{eq:superqubitBases}). The components $u^i,v^j$ are the right coordinates of $u$ and $v$ forming column supervectors. Then the inner product satisfies the properties listed in Def.~\ref{def:hilbertspace}.
\end{prop}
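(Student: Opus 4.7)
The plan is to verify the four properties of Definition~\ref{def:hilbertspace} by direct computation in the standard basis. Any even $u\in\La^{2|1}$ decomposes as $u = u^1\ket{0} + u^2\ket{1} + u^3\kbd$ with $u^1,u^2\in\C\La_{2,0}$ and $u^3\in\C\La_{2,1}$, and similarly for $v$. Because $g_{ij}=\delta_{ij}$, Eq.~(\ref{eq:innerproduct}) collapses to $\brk{u}{v}=u^{1\#}v^1+u^{2\#}v^2+u^{3\#}v^3$, whose values lie in $\C\La_{2,0}$: the first two summands pair two even supernumbers, while the third pairs two odd supernumbers and is therefore even. This already matches the codomain demanded by Def.~\ref{def:hilbertspace}.

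Properties (iii) and (iv) are nearly immediate. For (iii), any $\a\in\bbC$ is purely even with $\a^\# = \bar\a$, so $(\a u^i)^\# = \bar\a\,u^{i\#}$ in each component and $\brk{\a u}{v} = \bar\a\,\brk{u}{v}$. For (iv), setting the Grassmann generators to zero annihilates $u^3$ and $v^3$ completely and reduces $u^1,u^2,v^1,v^2$ to their complex bodies, so the sum collapses to the ordinary Hermitian inner product on $\bbC^2$, matching $\SS(u_\bbC,v_\bbC)$.

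For (ii) the main task is to track signs produced by the two applications of the hash and by reordering odd factors. Applying $\#$ to $\brk{v}{u}$ and using Eq.~(\ref{eq:hashactionii}) together with $(a^\#)^\# = (-)^{|a|}a$, I obtain $\brk{v}{u}^\# = v^1 u^{1\#} + v^2 u^{2\#} - v^3 u^{3\#}$. In the first two terms both factors are even and hence commute, reproducing $u^{1\#}v^1$ and $u^{2\#}v^2$. In the third term the sign $-1$ coming from $v^{3\#\#}=-v^3$ combines with the sign $-1$ coming from anticommuting the two odd factors $v^3$ and $u^{3\#}$, yielding $+u^{3\#}v^3$ and establishing super Hermiticity.

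The hardest part is (i), because $\brk{u}{u}$ is Grassmann-valued rather than real, so positivity has to be read in the graded sense as nonnegativity of the complex body. Writing $u^i = a_i + b_i\eta\eta^\#$ for $i=1,2$ with $a_i,b_i\in\bbC$, and using $(\eta\eta^\#)^\# = \eta\eta^\#$ together with $(\eta\eta^\#)^2 = 0$, I find $u^{i\#}u^i = |a_i|^2 + 2\,\mathrm{Re}(\bar a_i b_i)\,\eta\eta^\#$, whose body is manifestly non-negative. For $u^3 = c\eta + d\eta^\#$ with $c,d\in\bbC$ the hash acts as $u^{3\#} = \bar c\,\eta^\# - \bar d\,\eta$, and a direct expansion using $\eta^2=(\eta^\#)^2=0$ and $\eta^\#\eta=-\eta\eta^\#$ yields $u^{3\#}u^3 = -(|c|^2 + |d|^2)\,\eta\eta^\#$, a pure-soul contribution whose body vanishes. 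Summing the three contributions, the body of $\brk{u}{u}$ equals $|a_1|^2 + |a_2|^2 \geq 0$; I would argue that property (iv) itself forces this body-wise reading of positivity to be the correct interpretation of (i), since the soul lives in the nilpotent ideal generated by $\eta\eta^\#$ and therefore carries no classical sign information.
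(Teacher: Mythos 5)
Your proof is correct and follows essentially the same route as the paper's: the same component decomposition $u=u^1\ket{0}+u^2\ket{1}+u^3\kbd$, the same sign-tracking for super Hermiticity (the sign from the double hash on the odd component cancelling against the sign from anticommuting the two odd factors), and the same one-line observations for sesquilinearity and consistency. The only real divergence is in item (i): you expand $\brk{u}{u}$ explicitly and argue that positivity must be read on the complex body, which comes out as $|a_1|^2+|a_2|^2\geq 0$, whereas the paper instead imposes the normalization $\brk{v}{v}=1$ and observes that the factor $\bigl(v_j^\#v^j\bigr)^{-1/2}$ exists only when the body is nonzero, whereupon the normalized vector is exactly the superqubit of Eq.~(\ref{eq:superqubitoriginal}). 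Your version is the more direct verification of (i) as literally stated; the paper's version buys the additional observation that positive semidefiniteness singles out superqubits. Both treatments leave the precise meaning of $\geq$ for even Grassmann numbers at the same informal, body-level reading, so neither is more rigorous than the other on that point.
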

\begin{rem}
    It follows from definition that $\brk{Zu}{Zv}=\brk{u}{Z^\ddg Zv}=\brk{u}{v}$ for all $u,v\in \La^{2|1}$ and $Z\in UOSP(1|2;\C\La_2)$. In ordinary complex vector spaces there wouldn't be a reason to prove anything. $G$ would be a metric preserved by $SU(3)$ and the axioms from Def.~\ref{def:hilbertspace} would become trivial or satisfied by definition. What makes things less trivial is that $\La^{2|1}$ is a $\C\La_2$-bimodule (for more information see Appendix~\ref{sec:AppBackground}).
\end{rem}
\begin{rem}
  Note that on the left hand side of (\ref{eq:innerproduct}) there are elements of $\La^{2|1}$ but on the right hand side, the right coordinates of $u,v$  appear that form row and column supermatrices belonging to $\M(2|1,\C\La_2)$. The advantage of calculating on the right side is that it is mere  multiplication of rows and columns. But by performing the calculation in $\La^{2|1}$ we obviously have to get the same result:
    \begin{equation}\label{eq:innerproductLHS}
      \brk{u}{v}=(-)^{|i|}\tilde e_iu^i{^\#}e_jv^j=(-)^{|i|(|j|\oplus1)}\tilde e_ie_ju^i{^\#}v^j=(-)^{|j|(|j|\oplus1)}u_j^\#v^j=u_j^\#v^j,
    \end{equation}
  where $\tilde e_ie_j=\d_{ij}$ (cf.~Eq.~(\ref{eq:superqubitBases})). In the second equality we used the fact that $u,v$ are even. Thus $|u^i|=|i|$ and so $u^i{^\#}e_j=(-)^{|u^i||j|}e_ju^i{^\#}=(-)^{|i||j|}e_ju^i{^\#}$ holds. Also note that $|u^i{^\#}|\equiv|u^i|$.
\end{rem}
\begin{rem}
  We are allowed to use the same bracket notation as in Theorem~\ref{thm:ddgIsGradeAdjoint} since there we use the same form for the space $\La^{p|q}$ (used already in Def.~\ref{def:gradeadjoint} for its subspace $\rm{End}(\bbC^{p|q})$). If $p=2,q=1$, the $UOSP(1|2;\C\La_2)$ group that is by definition acting on a subspace of $\M(2|1,\C\La_2)$ inherits the metric and the above four properties will be shown to be satisfied. They may not hold (and in fact they don't since, for example, positive semidefiniteness is not defined) in the whole $\M(2|1,\C\La_2)$.
\end{rem}
\begin{proof}
\begin{enumerate}[(i)\hspace{3.3522mm}]
  \item Write an even element of $\La^{2|1}$ as $v=e_1v^1+e_2v^2+e_3v^3$ where $v^1,v^2\in\C\La_{2,0}$ and $v^3\in\C\La_{2,1}$. We find that the normalization requirement $\brk{v}{v}=1$ implies
        $$
        v={1\over\big(v_j^\#v^j\big)^{1/2}}e_iv^i,
        $$
        where the normalization follows from Eq.~(\ref{eq:innerproduct}) (or~(\ref{eq:innerproductLHS})). Grassmann numbers can be inverted only if the non-Grassmann part is nonzero. That implies the necessary assumptions $v^1_\bbC\neq0$ or $v^2_\bbC\neq0$. Then the normalization can be written
        $$
        {1\over\big(v_j^\#v^j\big)^{1/2}}={1\over\big(c_1+c_2\eth\eta\big)^{1/2}}={1\over\sqrt{c_1}}-{1\over2}{c_2\over\sqrt{c_1}^{3}}\eth\eta
        ={1\over\sqrt{c_1}}+{1\over2}{c_2\over\sqrt{c_1}^{3}}\eta\eth,
        $$
        where $c_1,c_2\in\bbR$. But then $v$ is precisely the superqubit from Eq.~(\ref{eq:superqubitoriginal}). Interestingly, the requirement of positive semidefiniteness singles out superqubits.
  \item For $u,v$ not necessarily normalized to one, we write
        $$
        \big(\brk{v}{u}\big)^\#=\big(v_i^{\#}u^i\big)^\#=(-)^{|i|}v_iu^i{^\#}=u^i{^\#}v_i=\brk{u}{v}.
        $$
        The second equality follows the properties of the grade involution Def.~\ref{def:automorph} and the third equality is valid for both $v^i$ even and odd. If $v_i$ is even then $u^i{^\#}$ is even as well (recall that $g_{ij}$ is diagonal), $(-)^{|i|}=1$ and they commute. If $v_i$ is odd then $u^i{^\#}$ is odd as well, $(-)^{|i|}=-1$ and they anticommute. This cancels the minus sign and so we always obtain $u^i{^\#}v_i$.
  \item This immediately follows from the definition of the inner product and the fact that $\#$ acts as ordinary complex conjugation for
        $\a\in\bbC$.
  \item $v^1,v^2$ are even Grassmann numbers and so they also contain purely complex components by  assuming $v^1_\bbC\neq0$ or $v^2_\bbC\neq0$.
        The same holds for their product and hence
        $\brk{u}{v}_\bbC=(u_1^{\#}v^1+ u_2^{\#}v^2)_\bbC=\bar u_{1,\bbC}v^1_{\bbC}+\bar u_{2,\bbC}v^2_{\bbC}$. On the other hand, we immediately get $\brk{u_\bbC}{v_\bbC}=\bar u_{1,\bbC}v^1_{\bbC}+\bar u_{2,\bbC}v^2_{\bbC}$.
\end{enumerate}
\end{proof}
The consistency condition (case (iv)) has an impact from the physical point of view:
\begin{cor}
    Quantum theory based on superqubits is not a modification of quantum mechanics but rather its extension to a specific supersymmetric domain.
\end{cor}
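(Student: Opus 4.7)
The plan is to make the interpretive statement of the corollary rest on the purely mathematical content of Proposition~\ref{prop:formproperties}, specifically the consistency condition~(iv), $\SS(u,v)_\bbC=\SS(u_\bbC,v_\bbC)$. The strategy is to exhibit ordinary qubit quantum mechanics as an honest, isometrically embedded subsector of the superqubit theory, so that no standard QM prediction is altered; only new (Grassmann-odd) degrees of freedom are added.

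First I would verify that the standard qubit sector sits inside the superqubit framework in a clean way. Setting $p=0$ in the parameterization~(\ref{eq:superqubit}), a pure superqubit collapses to $\a\ket{0}+\b\ket{1}$ with $|\a|^2+|\b|^2=1$, i.e.\ an ordinary qubit on the Bloch sphere, and the first factor $U(\a,\b)$ in~(\ref{eq:UOSPgroup}) is manifestly an element of the $SU(2)$ subgroup of $UOSP(1|2;\C\La_2)$ acting in the usual way on the $\{\ket{0},\ket{1}\}$ subspace. Thus the restriction of the supergroup action to the complex (non-Grassmann) subspace reproduces the standard $SU(2)$ action on a qubit.

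Next I would apply property~(iv) to propagate this embedding to \emph{all} measurable quantities. Every prediction in ordinary quantum mechanics is built from inner products $\brk{u_\bbC}{v_\bbC}$, and for any two even vectors $u,v\in\La^{2|1}$ property~(iv) ensures that the complex part of the super-inner-product equals the ordinary Hermitian inner product of the complex parts. Consequently, transition amplitudes, probabilities, and expectation values evaluated on purely complex states via the superqubit formalism coincide identically with those of standard qubit QM; nothing is shifted, rescaled, or deformed.

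From this I would conclude the corollary: since the complex sector is preserved isometrically and the supergroup action on it is exactly $SU(2)$, the superqubit theory does not \emph{replace} any existing quantum-mechanical rule but rather appends new structure (the odd basis direction $\kbd$ and the Grassmann-valued parameters $p,\eta$) on top of it. The main obstacle is really linguistic rather than computational: the crisp content lies entirely in property~(iv) plus the $p=0$ restriction, and the step that needs care is articulating \emph{why} this embedding is faithful enough to warrant the word ``extension'' rather than ``modification''---i.e.\ making explicit that consistency~(iv) forbids back-reaction of the Grassmann sector on the complex predictions, so that ordinary QM is recovered on the nose whenever the Grassmann data are set to zero.
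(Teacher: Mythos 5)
Your proposal is correct and follows essentially the same route as the paper: the paper's own justification is precisely the $p=0$ reduction of Eq.~(\ref{eq:superqubit}) to an ordinary Bloch-sphere qubit combined with the consistency condition (iv) of Proposition~\ref{prop:formproperties}, with the one remaining loose end (that the Grassmann-valued probability of Def.~\ref{def:grassmannVALUEDprob} reduces to Born's rule on the complex sector) which you also cover by noting that all amplitudes and probabilities coincide on purely complex states. Your phrasing in terms of an isometric embedding with no back-reaction is a slightly more explicit articulation of the same argument, not a different one.
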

Indeed, if we threw away the Grassmann part of the superqubit state (by setting $p=0$ in Eq.~(\ref{eq:superqubit})) the state reduces to an ordinary qubit requiring no further action) and all the super structures we introduced would become the familiar constructions from quantum theory. The kind of supersymmetry we study here simply describes ordinary quantum theory but in the supersymmetric domain. So it does not alter its non-supersymmetric part -- there is no reason to modify the non-supersymmetric quantum mechanics since its validity has been verified. To complete the `proof' of the corollary one last thing remains to be clarified -- the definition of the measurement probability.

Before we delve into the discussions of how to interpret Grassmann variables as probabilities we have to close the question of the existence of a Hilbert space for multipartite states. Def.~\ref{def:hilbertspace} only explicitly talks about single superqubits and the question can only be fully resolved after higher Lie superalgebras and groups have been studied. But we can say something already now. If we take a tensor product of $k$ superqubits it is clear that they will live in a subspace superunitarily connected with the usual $k$-qubit basis of dimension $2^k$. So, for instance, the state $\kbd_1\kbd_2\equiv\ket{\bu\bu}_{12}$ that causes so much trouble is not a valid two-superqubit state -- there is no superunitary $Z_1\otimes Z_2$ that would transform any state from the even two-superqubit subspace to $\ket{\bu\bu}_{12}$. Another example would be the state  $\ket{\psi}_{12}=\ket{01}-\ket{10}+\ket{\bu\bu}$. Its norm equals one but this state does not belong to the Hilbert space introduced above because the norm of one of the basis states ($\ket{\bu\bu}_{12}$) is minus one.

\subsection*{Grassmann-valued probabilities and the Rogers norm}

\begin{defi}\label{def:grassmannVALUEDprob}
    Let the Grassmann-valued transition probability function between two superqubits $\vp$ and $\psi$ be defined as
    \begin{equation}\label{eq:prob}
    p_\G(\vp,\psi)=\brk{\vp}{\psi}\big(\brk{\vp}{\psi}\big)^\#.
    \end{equation}
\end{defi}
The rationale behind the definition is easy to uncover. For ordinary qubits, Eq.~(\ref{eq:prob}) automatically becomes Born's rule. This is essential because recall that the $SU(2)$ group is a subgroup of the $UOSP(1|2;\C\La_N)$. Using item (ii) of Proposition~\ref{prop:formproperties}, we can write
$p_\G(\vp,\psi)=\brk{\vp}{\psi}\!\brk{\psi}{\vp}$ but this reminds us of the supertrace operation illustrated on Eq.~(\ref{eq:superdensitymatrix}). In fact, this expression behaves as we are used to from quantum mechanics: $\brk{\vp}{\psi}\!\brk{\psi}{\vp}=\brk{Zi}{\psi}\!\brk{\psi}{Z^\ddg i}=\brk{i}{Z^\ddg \psi}\!\brk{Z\psi}{i}=\brk{i}{\tilde\psi}\!\brk{\tilde\psi}{i}$, where $Z\in UOSP(1|2;\C\La_2)$, so  we obtained the diagonal coefficients of $\kbr{\tilde\psi}{\tilde\psi}$. But because superqubits are by definition even, this is just a special case of the supertrace rule
\begin{equation}\label{eq:OrthoBasisMeasurement}
  p_\G(i,\psi)=(-)^{|i|}\brk{i}{\tilde\psi}\!\brk{\tilde\psi}{i},
\end{equation}
where $i=\{0,1,\bu\}$ and $\sum_{i=0,1,\bu}p_\G(i,\psi)=\sTr{}(\kbr{\tilde\psi}{\tilde\psi})=1$.

The question we face now is how to interpret Grassmann-valued transition probability functions. Is there something special about the Grassmann numbers that are obtained by means of Eq.~(\ref{eq:prob}) for any two superqubits $\vp,\psi$? It turns out that this is the case. Such Grassmann numbers are not only always even (follows from Proposition~\ref{prop:formproperties} (ii)) but also satisfy the `reality condition':
\begin{defi}\label{def:realitycondition}
    An even Grassmann number $\zeta\in\C\La_{N,0}$ will be called real if $\zeta^\#=\zeta$.
\end{defi}
As a small detour, due to the reality condition definition we can actually gain some fresh insight into the origin of the two types of automorphisms from Def.~\ref{def:automorph}.
\begin{lem}\label{lem:conjsfromreality}
    Let $\mathrm{conj}:\La_N\mapsto\La_N$ be an antilinear map. For an arbitrary $\zeta\in\La_{N}$ we define the reality condition on $\zeta$ to be
    \begin{equation}\label{eq:realitycond}
        \zeta\conj{\zeta}=\conj{\zeta\conj{\zeta}}.
    \end{equation}
    Then, there are at least two types of conjugations satisfying the reality condition.
\end{lem}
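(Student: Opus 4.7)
The plan is constructive: I would exhibit two antilinear self-maps of $\C\La_N$ that each satisfy the reality condition $\zeta\conj{\zeta}=\conj{\zeta\conj{\zeta}}$, yet differ in how they interact with multiplication. Since the statement only asserts the existence of at least two types, producing two concrete candidates suffices. Both candidates will extend ordinary complex conjugation on the scalar subalgebra $\bbC\subset\C\La_N$ and act antilinearly on the odd generators by some fixed involution among themselves; the distinguishing feature is the multiplication rule used to extend them to all of $\C\La_N$.

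The first candidate, call it Type A (the graded antiautomorphism), will satisfy $\conj{\zeta\xi}=\conj{\xi}\,\conj{\zeta}$ for all $\zeta,\xi\in\C\La_N$ together with $\conj{\conj{\zeta}}=\zeta$. For this map the reality condition is a one-line verification: $\conj{\zeta\conj{\zeta}}=\conj{\conj{\zeta}}\,\conj{\zeta}=\zeta\,\conj{\zeta}$ by involutivity.

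The second candidate, Type B (the graded automorphism, or grade involution), will satisfy $\conj{\zeta\xi}=\conj{\zeta}\,\conj{\xi}$ with $\conj{\conj{\zeta}}=(-1)^{|\zeta|}\zeta$ on homogeneous $\zeta$; this is precisely the hash map of Def.~\ref{def:automorph}. Verifying the reality condition here requires a short case split. For homogeneous $\zeta$ I would compute $\conj{\zeta\conj{\zeta}}=\conj{\zeta}\,\conj{\conj{\zeta}}=(-1)^{|\zeta|}\conj{\zeta}\,\zeta$, and then observe that for even $\zeta$ the two factors commute, while for odd $\zeta$ both factors are odd and hence anticommute; in either case the grade sign is exactly compensated and the result is $\zeta\,\conj{\zeta}$ as required.

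The main obstacle will be the inhomogeneous case: for $\zeta=\zeta_0+\zeta_1$ the quantity $\zeta\,\conj{\zeta}$ contains the odd cross-term $\zeta_0\,\conj{\zeta_1}+\zeta_1\,\conj{\zeta_0}$, which must also be invariant under the conjugation. For Type A this follows immediately from involutivity combined with the order-reversing rule. For Type B one has to check that the two summands of this cross-term are interchanged by $\conj{\cdot}$ up to exactly the sign produced by the anticommutation of odd elements, using $\conj{\conj{\zeta_1}}=-\zeta_1$. I would close by noting that Types A and B are genuinely distinct already on $\C\La_2=\C\La(\eta,\eta^\#)$, where they send $\eta\,\eta^\#$ to expressions differing by a sign, so they indeed represent two inequivalent conjugations compatible with the reality condition.
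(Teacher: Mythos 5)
Your two candidates are exactly the paper's: the star map (order-reversing involution) and the hash map (order-preserving grade involution), so the overall route matches. The paper merely runs the argument in the opposite direction, \emph{deriving} $\conj{\conj{\zeta}}=\zeta$ (resp.\ $\conj{\conj{\zeta}}=-\zeta$) as the property forced by Eq.~(\ref{eq:realitycond}) under each multiplicativity assumption, rather than verifying the condition for maps given in advance. Your Type~A verification and your homogeneous-case verification of Type~B are both correct.

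The problem is precisely the step you flag as ``the main obstacle'': the claim that for Type~B the odd cross-term works out is false. Writing $\zeta=\zeta_0+\zeta_1$, the graded automorphism with $\conj{\conj{\zeta_1}}=-\zeta_1$ sends
\begin{equation*}
\zeta_0\conj{\zeta_1}+\zeta_1\conj{\zeta_0}\;\longmapsto\;\conj{\zeta_0}\,\conj{\conj{\zeta_1}}+\conj{\zeta_1}\,\conj{\conj{\zeta_0}}=-\conj{\zeta_0}\zeta_1+\conj{\zeta_1}\zeta_0=\zeta_0\conj{\zeta_1}-\zeta_1\conj{\zeta_0},
\end{equation*}
which differs from the original cross-term by $2\zeta_1\conj{\zeta_0}$. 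Concretely, for $\zeta=1+\eta$ in $\C\La_2$ one has $\zeta\zeta^\#=1+\eta+\eta^\#+\eta\eta^\#$ while $(\zeta\zeta^\#)^\#=(1+\eta^\#)(1-\eta)=1-\eta+\eta^\#+\eta\eta^\#$. So the hash map satisfies the reality condition only on homogeneous $\zeta$, and your proof must restrict to that case --- which is what the paper implicitly does, and which suffices for the application, since the quantities $p_\G$ to which the condition is applied are even. A further small slip: $\eta\eta^\#$ does not distinguish your two maps (both fix it, since $\eta^\#(-\eta)=\eta\eta^\#$); a correct witness of their inequivalence is the double application to a generator, $+\eta$ for the star versus $-\eta$ for the hash.
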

\begin{proof}
First suppose that the map is an antiautomorphism. The left side of Eq.~(\ref{eq:realitycond}) becomes
\begin{equation}\label{eq:antiauto}
        \conj{\zeta\conj{\zeta}}=\conj{\conj{\zeta}}\conj{\zeta}.
\end{equation}
For it to be equal to the RHS of Eq.~(\ref{eq:realitycond}), $\mathrm{conj}$ must be an involution:
\begin{equation}\label{eq:involution}
    \conj{\conj{\zeta}}=\zeta.
\end{equation}
The map is then the usual complex conjugation defined for Grassmann variables in quantum field theory of fermions~\cite{qft} (the star map from Def.~\ref{def:automorph})
$$
(\zeta^*)^*=\zeta.
$$

The second option is an order preserving type of conjugation
\begin{equation}\label{eq:orderpreserving}
        \conj{\zeta\conj{\zeta}}={\conj{\zeta}}\conj{\conj{\zeta}}.
\end{equation}
In order to satisfy the RHS of Eq.~(\ref{eq:realitycond}) it  must hold that
\begin{equation}\label{eq:hashinvolution}
    \conj{\conj{\zeta}}=-\zeta.
\end{equation}
So this kind of conjugation is precisely the hash map also introduced in Def.~\ref{def:automorph} and used throughout this work
$$
(\zeta^\#)^\#=-\zeta.
$$
For $\zeta\in\bbC$ both maps become ordinary complex conjugation and Eq.~(\ref{eq:realitycond}) is trivially satisfied.
\end{proof}
\begin{rem}
It might be interesting to show how many more mappings there are for Grassmann variables that satisfy the reality condition.
\end{rem}
Let's go back to the interpretation of Grassmann variables. We are not the first ones to ask about their meaning~\cite{nieto}. The pioneering work in this direction had been done by A. Rogers and others in the 80's~\cite{rogersnorm}. The motivation there was then the burgeoning field of superanalysis on supermanifolds~\cite{supermani,rabincrane,cookfulp} as a response to the discovery of supersymmetric theories in physics. This is a branch of mathematics on its own indirectly related to the topic of this work. We will just define the Rogers prescription of how to extract ordinary numbers from Grassmann numbers and see if it can be of use for us. Of course, the reason why Grassmann numbers cannot be used directly is that they cannot be ordered in the first place. But there is another, closely related,  reason. The outputs of measurement devices are real numbers as well as the outcomes probabilities and we would like to have an elegant prescription \`{a} la quantum mechanics.
\begin{defi}[The Rogers norm~\cite{rogersnorm}]\label{def:rogers}
    Let $\zeta\in\C\La_N$ be an arbitrary supernumber whose general form was introduced in Def.~\ref{def:supernumber}. The Rogers norm of $\zeta$ is defined as
    \begin{equation}\label{eq:rogersnorm}
        |\zeta|_{R_1}\overset{\rm df}{=}|z_0|+\sum_{k=1}^N\sum_{m=1}^{N\choose k}|z^{(m)}|.
    \end{equation}
\end{defi}
Spaces equipped with the Rogers norm  teem with many interesting properties we will not discuss here~\cite{rogersnorm}. From a broader point of view, it is probably the most straightforward way of extracting real numbers from Grassmann numbers -- one simply looks at the accompanying coefficients. So even though our situation is different (we want to interpret even Grassmann-valued probabilities), the most natural way we will use to get real numbers from Grassmann numbers is similar.

If we applied the Rogers norm directly to the Grassmann-valued transition probability calculated according to Def.~\ref{def:grassmannVALUEDprob}
\begin{equation}\label{eq:GrassProb}
p_\G(\vp(q),\psi(p))=1+(p-q)^2\eta\eta^\#
\end{equation}
obtained from
\begin{align}\label{eq:GrassTransProb}
  \brk{\vp(q)}{\psi(p)}&=
     \begin{pmatrix}
     \bar\g \left(1+{q^2\over2}\eta\eta^\#\right) ,& \bar\d \left(1+{q^2\over2}\eta\eta^\#\right), & q(\bar\g\eta^\#+\bar\d\eta)
     \end{pmatrix}
     \begin{pmatrix}
     \a \left(1+{p^2\over2}\eta\eta^\#\right)\\
     \b \left(1+{p^2\over2}\eta\eta^\#\right) \\
     p(\a\eta-\b\eta^\#) \\
   \end{pmatrix} \nn\\
   &=1+{1\over2}(p-q)^2\eta\eta^\#
\end{align}
(and assumed $\a=\g=1,\b=\d=0$ for a moment), we would get
\begin{equation}\label{eq:pGRogers}
  |p_\G(\vp(q),\psi(p))|_{R_1}=1+(p-q)^2
\end{equation}
but face a problem: The Rogers norm does not respect the order of Grassmann variables and as a consequence we would get real number impossible to interpret as probabilities. Notice that we get the same result if we swap the Grassmann generators $|1-(p-q)^2\eta^\#\eta|_{R_1}=1+(p-q)^2$. This follows from how the Rogers norm has been defined. So a slight modification of the Rogers norm has been proposed in~\cite{bbd} where the two main differences are: (i) the modified Rogers norm respects the order of Grassmann generators that must be fixed during the whole calculation and (ii) the modified Rogers norm transforms even Grassmann-valued probability functions. This enables us to recast the calculation of the modified Rogers norm into a form familiar from the path integral formulation of QFT -- a Berezin (also called Grassmann) integral~\cite{qft}. This is the approach taken in this work -- we reformulate the modified Rogers norm as a Berezin integral.

Let's recall some of its basic properties. We will assume existence of finite-dimensional Grassmann algebras where $N=2k$ for $1\leq k<\infty$. Literature on the fermion path integral is divided regarding the definition of Grassmann integral~\cite{qft,qft1,qft2,qft3}. This is due to how a complex Grassmann algebra can be understood. Let's elaborate on this issue a bit more first by using the star involution from Def.~\ref{def:automorph}. Usually, one starts with a real Grassmann algebra of order $2k$ generated by $\{\theta_i\}_{i=1}^{2k}$ and define the single-variable Grassmann integral
$$
\int{\rm d}\theta_i\theta_j\overset{\rm df}{=}\d_{ij}.
$$
The algebra can be complexified
\begin{align*}
  \eta_j &={1\over\sqrt{2}}(\theta_j+i\theta_{j+k})  \\
  \eta^*_j &={1\over\sqrt{2}}(\theta_j-i\theta_{j+k}),
\end{align*}
where $j=1\dots k$. Hence~\cite{qft1,qft3}
\begin{align*}
  {\rm d}\eta_j &= {1\over\sqrt{2}}({\rm d}\theta_j-i{\rm d}\theta_{j+k}) \\
  {\rm d}\eta^*_j &= {1\over\sqrt{2}}({\rm d}\theta_j+i{\rm d}\theta_{j+k}),
\end{align*}
such that
$$
\int{\rm d}\eta_j\eta_j=\int{\rm d}\eta^*_j\eta^*_j=1
$$
is satisfied. Therefore
\begin{equation}\label{eq:starGrass}
  \int{\rm d}\eta_j{\rm d}\eta^*_j(-\eta_j\eta^*_j)=1
\end{equation}
and more generally for the multivariate case~\cite{qft,qft1}
\begin{equation}\label{eq:starGrassExp}
  \int\prod_{j=1}^k{\rm d}\eta_j{\rm d}\eta^*_j\exp{\bigl(-\eta_jA_{ji}\eta^*_i\bigr)}=\det{A}.
\end{equation}
This expression has been used to extract real numbers from Grassmann-valued functions in~\cite{castell}. We want the same prescription but for the grade involution $\#$ (for other options see~\cite{rudolph}).

We again define
\begin{equation}\label{eq:starGrassdef1}
\int{\rm d}\eta_i\eta_j=\d_{ij}
\end{equation}
but this implies
\begin{equation}\label{eq:starGrassdef2}
\int{\rm d}\eta^\#_i\eta^\#_j=\d_{ij}
\end{equation}
using the hash property Eq.~(\ref{eq:hashactionii}) from Def.~\ref{def:automorph}. It follows that
\begin{equation}\label{eq:superstarGrass1}
  1=\int {\rm d}\eta_j\eta_j\int {\rm d}\eta^\#_j\eta^\#_j=\int{\rm d}\eta_j{\rm d}\eta^\#_j(-\eta_j\eta^\#_j)
\end{equation}
exactly as for the star map Eq.~(\ref{eq:starGrass}).

\begin{defi}[The modified Rogers norm~\cite{bbd,castell}]\label{def:modrogers}
    Let $\tau\in\C\La_{N,0}$ be an even Grassmann number. The modified Rogers norm of $\tau$ is defined as
    \begin{equation}\label{eq:modifiedRogers}
      \big|\tau\big|_{R}\overset{\rm df}{=}\int{\rm d}^{2N}\eta\prod^{N/2}_{i=1}e^{-\eta_i\eta^\#_{i}}\tau ,
    \end{equation}
    where ${\rm d}^{2N}\eta\overset{\rm df}{=}\prod^{N/2}_{i=1}{\rm d}\eta_i{\rm d}\eta_i^\#$ and
    $\int{\rm d}^{2N}\eta\prod_{i}\exp{(-\eta_i\eta^\#_{i})} =1$.
\end{defi}
Recall that we consider Grassmann algebras where $N=2k$ for $1\leq k<\infty$.
\begin{exa}
Let's take the lowest dimensional case $N=2$ and calculate the modified Rogers norm of Eq.~(\ref{eq:GrassProb}) $p_\G(\vp(q),\psi(p))=1+(p-q)^2\eta\eta^\#=\tau$:
\begin{align}\label{eq:examplemodRog}
    p(\vp(q),\psi(p))=\big|\tau\big|_{R}&=\int{\rm d}\eta{\rm d}\eta^\#(1-\eta\eta^\#)(1+(p-q)^2\eta\eta^\#)\nn\\
    &=\int{\rm d}\eta{\rm d}\eta^\#(-\eta\eta^\#)+\int{\rm d}\eta{\rm d}\eta^\#\eta\eta^\#(p-q)^2\nn\\
    &=1-(p-q)^2.
\end{align}
\end{exa}
The transition probability between two completely general pure supequbits reads
\begin{equation}\label{eq:transprobgeneral}
  p(\vp(q,\g,\d),\psi(p,\a,\b))=(\a\bar\g+\b\bar\d)(\bar\a\g+\bar\b\d)(1-(p-q)^2)
\end{equation}
coming from Eq.~(\ref{eq:GrassTransProb}). The product $(\a\bar\g+\b\bar\d)(\bar\a\g+\bar\b\d)$ has its origin in the $SU(2)$ subgroup of Eq.~(\ref{eq:UOSPgroup}). The rest of Eq.~(\ref{eq:transprobgeneral}) is the consequence of the other subgroup isomorphic to the group $(\bbR,+)$ (the reals with addition) represented by the $S$ matrix.

We got rid of Grassmann variables by the prescription given in Def.~\ref{def:modrogers} but another problem has appeared.  We found in Lemma~\ref{lem:Ssubgroup} that the Abelian group whose elements are $S(2p\eta)$ is non-compact by looking at Eq.~(\ref{eq:transprobgeneral}) we see why it is indeed a problem. There exists a choice of $p$ ad $q$ such that the transition probability becomes negative. The probability function is meaningful only for $0\leq p(\vp,\psi)\leq1$ implying $|p-q|\leq1$. This region is depicted on the left side of Fig.~\ref{fig:fidplot}.
\begin{figure}[t]
    \begin{center}
    \resizebox{12cm}{6cm}{\includegraphics{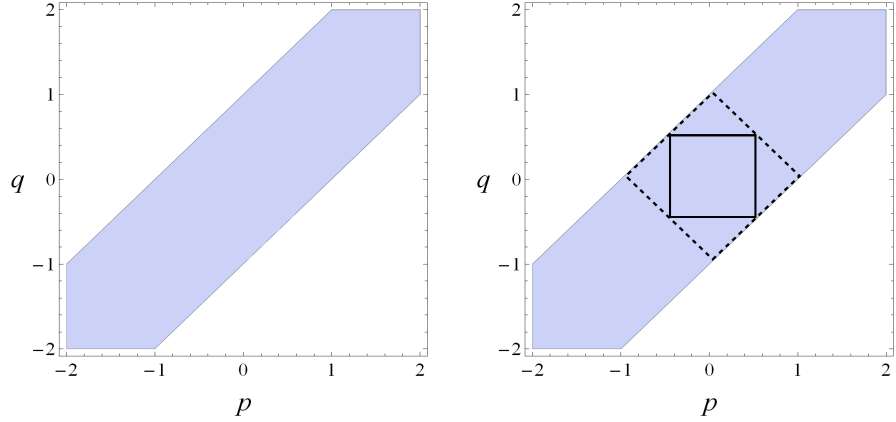}}
    \caption{The infinite blue stripe is the set where the transition probability between these two superqubits lies between zero and one. The two rectangles on the right indicate two subsets of $\bbR\times\bbR$ as candidates of how to compactify the superqubit space. The dashed rectangle is the set $s_1$ defined in Eq.~(\ref{eq:POSsubset1}). The inner rectangle is the subset $s_2$ (Eq.~(\ref{eq:POSsubset2})) motivated by Def.~\ref{def:physicalstates} dealing with the properties of transition probability functions.}
    \label{fig:fidplot}
    \end{center}
\end{figure}

Note that the probability of measurement of a superqubit $\psi(p,\a,\b)$ in the canonical basis Eq.~(\ref{eq:singlesector}) is reasonable for $0\leq|p|\leq1$. This motivates the following subset of allowed states
\begin{equation}\label{eq:POSsubset1}
s_1=\big\{p,q\in\bbR;|p-q|\leq1\cap|p+q|\leq1\big\}.
\end{equation}
The $s_1$ is the rectangle demarcated by the dashed line on the right side of Fig.~\ref{fig:fidplot}. This choice is not satisfactory though. If we set $1/2\leq |p|\leq1$ for the measured state, then there exists a rotation of the canonical basis (in particular by $S(2q\eta)$ with $1/2\leq |q|\leq1$) such that the probability is  negative again. In other words, for a given state
it makes sense to talk about measurement in one basis but not in a rotated one. This is conceptually hard to accept and to avoid this problem we further restrict the set $s_1$ to
\begin{equation}\label{eq:POSsubset2}
    s_2=\big\{p,q\in\bbR;|p|\leq1/2\cap|q|\leq1/2\big\}.
\end{equation}
The set $s_2$ is motivated by the following definition.
\begin{defi}[Physical states]\label{def:physicalstates}
    Let two superqubits $\psi(p)$ and $\vp(q)$ satisfy $0\leq|p_\G(\vp(q),\psi(p))|_{R}\leq1$. The states are considered physical only for such $p,q$ also satisfying
    \begin{equation}\label{eq:physicalLa2}
        0\leq|p_\G(\vp(\pm q),\psi(\mp p))|_{R}\leq1.
    \end{equation}
\end{defi}
The definition ensures that for all $p$ there will be $q$ from the same interval such that the transition probability between the corresponding states lies between zero and one. This naturally introduces a Cartesian product $P=P_{\psi}\times P_{\vp}$ of two positivity domains $P_{\psi}$ and  $P_{\vp}$. The positivity domain $P_\psi\subset D_\psi$ where the set $D_\psi$ is defined as
$$
D_{\psi}=\{p\in\bbR;0\leq|p_\G(\ket{\psi(p)},\ket{i})|_{R}\leq1\}.
$$
Def.~\ref{def:physicalstates} leads to
$$
P_\psi=\{p\in\bbR;-1/2\leq p\leq1/2\}
$$
and similarly for $P_\vp$. This is consistent with the set $s_2$ in Eq.~(\ref{eq:POSsubset2}) and the set is outlined by the solid rectangle in Fig.~\ref{fig:fidplot} (on the right).

We have cut a closed and bounded subset from $\bbR$ where our super evolution is allowed to take place and this amounts to compactifying the original superqubit space -- the sets $P_\psi,P_\vp$ are compact manifolds with boundary. Another virtue of Def.~\ref{def:physicalstates} is that for every $S(2p\eta)$ there exists $S^\ddg(2p\eta)$. This is because $S^\ddg(2p\eta)=S^{-1}(2p\eta)=S(-2p\eta)$ as we have noticed in Lemma~\ref{lem:Ssubgroup}. But not all group axioms are satisfied after we restricted the superqubit evolution to $P_\psi$. We know from Lemma~\ref{lem:Ssubgroup} that $S(2p_1\eta)S(2p_2\eta)=S(2(p_1+p_2)\eta)$ but what if $|p_1+p_2|>1/2$? The group law of addition is not defined beyond the domain $P_\psi$. Here we propose a solution based on the fact that $(\bbR,+)$ is a universal cover of the compact group $U(1)$. The explicit onto map is the modulo $2\pi$ function ${}\bmod{2\pi}:(\bbR,+)\mapsto U(1)$ that can be written as
\begin{equation}\label{eq:RtoU1}
  p\bmod{2\pi}=p-2\pi\Bigl\lfloor{p\over2\pi}\Bigr\rfloor
\end{equation}
valid for all $p\in\bbR$. If we make the following substitution
\begin{equation}\label{eq:subst}
  p\mapsto {p\over2\pi}-\Bigl\lfloor{p\over2\pi}\Bigr\rfloor-{1\over2}
\end{equation}
in Eq.~(\ref{eq:superqubit}) we obtain a superqubit with the right properties.
\begin{rem}
Perhaps there is a question why we bothered with Def.~\ref{def:physicalstates} if now we again compactified the whole $\bbR$. Def.~\ref{def:physicalstates} helped us to find where exactly we have to impose the periodic boundary conditions. If we imposed the periodic boundary conditions on the positivity interval leading to $s_1$ we would encounter various inconsistencies~\footnote{An explicit example exists due to Markus M\"uller.}.
\end{rem}
\begin{rem}
The mapping Eq.~(\ref{eq:RtoU1}) is a textbook example of a quotient space construction~\cite{fulton}. What makes it less trivial here is the presence of additional structures on the manifold we compactify.
\end{rem}

One of the consequences of Def.~\ref{def:physicalstates} is that we cannot vary the parameter $p$ such that the probability of measurement of the bullet state is one (note that before we bounded $p$ the probability of measuring bullet had been one for $p=\pm1$). But this becomes more acceptable in the light of our earlier observation that the superqubit space is not a homogeneous space.

\begin{prop}\label{lem:compactindep}
The superqubit compactification Eq.~(\ref{eq:RtoU1}) is basis-independent.
\end{prop}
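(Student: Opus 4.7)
The plan is to use Lemma~\ref{lem:USSU} together with the group law of Lemma~\ref{lem:Ssubgroup} to show that the real parameter $p$ entering $S(2p\eta)$ is, modulo $2\pi$, an intrinsic invariant of the superqubit state. Since the substitution Eq.~(\ref{eq:subst}) depends only on the real number $p$, basis-independence follows by inspection once this invariance is established.

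First, I would recall that a change of basis is implemented by an arbitrary $V\in UOSP(1|2;\C\La_2)$, which by Def.~\ref{def:UOSPgroup} factors as $V=U(\g,\d)\,S(2r\eta)$. Acting on the standard form $\ket{\psi}=S(2p\eta)\,U(\a,\b)\ket{0}$ from Eq.~(\ref{eq:superqubit}), I would apply Lemma~\ref{lem:USSU} to move $U(\g,\d)$ past the $S$-block and Lemma~\ref{lem:Ssubgroup} to combine $S(2r\eta)\,S(2p\eta)=S(2(p+r)\eta)$. The result is again in canonical form $\ket{\psi'}=S\bigl(2(p+r)\tilde\eta\bigr)\,U(\a',\b')\ket{0}$, with $\a',\b'$ determined by $SU(2)$ multiplication and $\tilde\eta$ a purely Grassmann redefinition of the shape $\tilde\eta=\g^\#\eta^\#+\d^\#\eta$. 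Crucially, the real parameter has shifted by $r$ only; no other dependence on the basis change enters the first slot of $S$.

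Second, I would observe that the substitution in Eq.~(\ref{eq:subst}) is exactly reduction modulo $2\pi$ (composed with a shift by $-1/2$) and therefore is a group homomorphism $(\bbR,+)\to U(1)$. Consequently $(p+r)\bmod 2\pi$ depends only on the residues of $p$ and $r$ modulo $2\pi$, so whether we compactify before or after the $S$-type basis change produces the same element of $U(1)$. Pure $SU(2)$ basis changes ($r=0$) leave $p$ untouched altogether and merely act on $\tilde\eta$, which is invisible to Eq.~(\ref{eq:subst}). Combining the two cases, the compactified parameter is the same for $\ket{\psi}$ and $\ket{\psi'}$, which is the content of basis-independence.

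The main obstacle is checking that the Grassmann redefinition $\eta\mapsto\tilde\eta$ supplied by Lemma~\ref{lem:USSU}, together with the renaming performed just before Eq.~(\ref{eq:UOSPgroupS}), is consistent with the real-parameter constraint $p\in\bbR$ imposed by Lemma~\ref{lem:Ssimplified}. Because that transformation is $\bbC$-linear in the generators $\eta,\eta^\#$ and does not couple to the coefficients $a_i,b_i$ of Eq.~(\ref{eq:condsOnSuperqubit}), the reality of $p$ is preserved and the homomorphism $(\bbR,+)\twoheadrightarrow U(1)$ applies uniformly in every basis, completing the argument.
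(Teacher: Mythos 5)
Your proposal is correct and follows essentially the same route as the paper: both reduce an arbitrary basis change to an additive real shift of the parameter $p$ (the $SU(2)$ factor being dismissed because it acts within the even subspace and only redefines the Grassmann generator), after which the prescription of Eq.~(\ref{eq:subst}) applies uniformly. Your explicit invocation of Lemma~\ref{lem:USSU} and the group law of Lemma~\ref{lem:Ssubgroup}, together with the remark that reduction mod $2\pi$ is compatible with the additive shift, is just a slightly more formal packaging of the paper's direct rewriting of the state in the rotated basis, where the shifted parameter $p'=p-x$ is read off by inspection.
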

\begin{proof}
Up to now, we worked in a specific basis $\{\ket{0},\ket{1},\kbd\}$ but the compactification procedure should be independent on the basis. Let's see what happens if we transform a superqubit Eq.~(\ref{eq:superqubit}) into a rotated basis given by $\{Z\ket{0},Z\ket{1},Z\kbd\}$ where $Z=U(\a,\b)S(2p\eta)$  is an arbitrary $UOSP(1|2;\C\La_2)$ rotation. The  group action  followed by the change of Grassmann variables transforms $\ket{0}$ to $\ket{\psi}$ from Eq.~(\ref{eq:superqubitoriginal}). If compared to the $S(2p\eta)$ subgroup acting on $\ket{0}$ and followed by $p\mapsto-p$ one gets almost an identical state
$$
S(2p\eta)\ket{0}=\left(1+{p^2\over2}\eta\eta^\#\right)\ket{0}-p\eta\kbd.
$$
Only the action of $SU(2)$ is left out but that is confined to the even subspace and therefore is not relevant for the proof. So we can just study the effect of the rotated standard basis $\ket{0'}=S(2x\eta)\ket{0},\ket{1'}=S(2x\eta)\ket{1}$ and $\ket{\bu'}=S(2x\eta)\kbd$ where $x\in\bbR$. We rewrite the transformed superqubit as
$$
\ket{\psi'}=\left(1+{(p-x)^2\over2}\eta\eta^\#\right)\ket{0'}-(p-x)\eta\ket{\bu'}.
$$
We want this state to be a physical state according to Def.~\ref{def:physicalstates} and so we impose $|p'|\leq1/2$ where $p-x=p'$. But this is not enough and the argument now goes exactly as in the paragraph leading to Eq.~(\ref{eq:RtoU1}) -- the compactification in the new basis is achieved by the same prescription as Eq.~(\ref{eq:subst})
$
p'\mapsto {p'\over2\pi}-\bigl\lfloor{p'\over2\pi}\bigr\rfloor-{1\over2}.$
\end{proof}

\section{Bipartite superqubit states, the CHSH game and Tsirelson's bound}\label{sec:CHSH}

The most interesting results of quantum information theory are when bi- and multipartite states are used as resources in computational and communication protocols. Quantum correlations are the distinctive aspect of quantum physics and one of the consequences is that using  multipartite entangled quantum states one can perform significantly better compared to classical physics. Here we want to argue that multipartite entangled quantum states based on superqubits are even better resources than ordinary quantum states. But we face an obstacle. It is not  immediately obvious what is the Lie superalgebra one should study. Moreover, the representation theory of higher-dimensional Lie superalgebras is not straightforward~\cite{ritt,superLie}. We will follow a different path here. Using our definition of a super Hilbert space (Def.~\ref{def:hilbertspace}) we conjecture the existence of certain states for which there are good reasons to think that they are members of the carrier space of the Grassmann-valued group we would have obtained by studying higher orthosymplectic Lie superalgebras. One of such states is a tensor product of two superqubits. To construct it, let's utilize the transformed superqubits from Eq.~(\ref{eq:superqubitoriginal}) whose form leads to
\begin{align}\label{eq:tensorsuperqb}
  \ket{\psi}_A\ket{\psi}_B&=\Bigl(1+{p_A^2\over2}\eta_A\eta_A^\#\Bigr)\Bigl(1+{p_B^2\over2}\eta_B\eta_B^\#\Bigr)
  (\a\ket{0}+\b\ket{1})(\g\ket{0}+\d\ket{1})\nn\\
  &+p_B\eta_B\Bigl(1+{p_A^2\over2}\eta_A\eta_A^\#\Bigr)(\a\ket{0\bu}+\b\ket{1\bu})
  +p_A\eta_A\Bigl(1+{p_B^2\over2}\eta_B\eta_B^\#\Bigr)(\g\ket{\bu0}+\d\ket{\bu1})\nn\\
  &-p_Ap_B\eta_A\eta_B\ket{\bu\bu},
\end{align}
where $p_A,p_B\in\bbR$, $\a,\b,\g,\d\in\bbC$ and  $\eta_A,\eta_A^\#,\eta_B,\eta_B^\#\in\C\La_4$. As expected, the state $\ket{\bu\bu}$ does not appear accompanied by ordinary numbers as a consequence of  Lemma~\ref{lem:gradetensorprod}. Hence, we propose the second example of a pure two-superqubit state to be
\begin{align}\label{eq:2superqgeneral}
    {\Psi}_{AB}&=
        \Big(1+{X\over2}+{3\over8}X^2\Big)(a\ket{00}+b\ket{01}+c\ket{10}+d\ket{11})\nn\\
        &+p_B\eta_B\Big(1+{p_A^2\over2}\eta_A\eta_A^\#\Big)(\a\ket{0\bu}+\b\ket{1\bu})
        +p_A\eta_A\Big(1+{p_B^2\over2}\eta_B\eta_B^\#\Big)(\g\ket{\bu0}+\d\ket{\bu1})\nn\\
        &-p_Ap_B\eta_A\eta_B\ket{\bu\bu},
\end{align}
where $p_A,p_B\in\bbR$, $a,b,c,d,\a,\b,\g,\d\in\bbC$ such that $|a|^2+|b|^2+|c|^2+|d|^2=1$, $|\a|^2+|\b|^2=|\g|^2+|\d|^2=1$ and
$$
X=p_A^2\eta_A\eta_A^\#+p_B^2\eta_B\eta_B^\#+p_A^2p_B^2\eta_A\eta_A^\#\eta_B\eta_B^\#.
$$
This expression can factorized:
$$
1+{X\over2}+{3\over8}X^2=\Big(1+{p_A^2\over2}\eta_A\eta_A^\#\Big)\Big(1+{p_B^2\over2}\eta_B\eta_B^\#\Big).
$$
Note that the state $\Psi_{AB}$ contains an arbitrary two-qubit state.

Let's set $a=d=1/\sqrt{2}$ and $\b=\d=1$ and we obtain the state we are going to experiment with:
\begin{align}\label{eq:Upsilon}
    \Upsilon_{AB}(p_A,p_B)&=
        \Big(1+{p_A^2\over2}\eta_A\eta_A^\#\Big)\Big(1+{p_B^2\over2}\eta_B\eta_B^\#\Big){1\over\sqrt{2}}(\ket{00}+\ket{11})\nn\\
        &+p_B\eta_B\Big(1+{p_A^2\over2}\eta_A\eta_A^\#\Big)\ket{1\bu}
        +p_A\eta_A\Big(1+{p_B^2\over2}\eta_B\eta_B^\#\Big)\ket{\bu1}
        -p_Ap_B\eta_A\eta_B\ket{\bu\bu}.
\end{align}
We  claim that $\Upsilon_{AB}$ is at least as nonlocal as a maximally entangled (Bell) state. If we prepare any setup where a maximally entangled state is used in quantum information theory, utilize $\Upsilon_{AB}$ instead and ignore the bullet components ($p_A=p_B=0$) we will be able to perform as efficiently as with the Bell state itself. The question is now: Is $\Upsilon_{AB}$ able to perform better considering the super degrees of freedom? The best way to check is to reproduce the experiment that is a hallmark of nonlocality -- the coincidence measurement resulting in Bell's inequalities~\cite{bell}. There exists a sharp reformulation of Bell inequalities known as the CHSH game~\cite{CHSH} interpreting the measurement from the computer science point of view. Let us recapitulate the CHSH game. It is a so-called nonlocal game~\cite{CHSHgame} with three players: a referee who competes with two cooperating players Alice and Bob. The referee  chooses two bits $i\in\{0,1\}$  and $j\in\{0,1\}$ with probability $1/4$  and sends $i$ to Alice and $j$ to Bob such they are not aware of one another's bit value. Alice and Bob  each return a bit of communication (denoted $a$ and $b$, respectively) back to the referee. The condition for Alice and Bob to win the game is when the equation $ij=a\oplus b$ is satisfied for each round.

Alice and Bob cannot communicate during the game but they can establish their strategy beforehand. They also share a resource -- a physical system obeying the known laws of physics. The agreed strategy can be looked upon as a type of classical resource (classical correlations). In that case, the optimal strategy leads to the maximal probability of winning
$$
p_{win}^{class}={3\over4}.
$$
If they share quantum correlations the chances of winning are higher. Namely, a shared maximally entangled state $\Psi_{AB}=1/\sqrt{2}(\ket{00}+\ket{11})$ accompanied by an agreed measurement strategy leads to
$$
p_{win}^{quant}=\cos^2{\pi\over8}\simeq 0.8535.
$$
As a matter of fact, this is the maximal value that can be reached for the CHSH game using quantum-mechanical resources. It is known as Tsirelson's bound~\cite{tsirelson}. To achieve the bound they choose one of the following orthogonal measurement bases $\{\ket{0}_{iA},\ket{1}_{iA}\}$ and $\{\ket{0}_{jB},\ket{1}_{jB}\}$ rotated according to the value they receive from the referee $\{i,j\}\to\{\a_i,\b_i,\g_j,\d_j\}$ where
\begin{align*}
    \ket{0}_{iA}&=\a_i\ket{0}_A+\b_i\ket{1}_A\\
    \ket{0}_{jB}&=\g_j\ket{0}_B+\d_j\ket{1}_B
\end{align*}
and similarly for $\ket{1}_{iA(jB)}$. The amplitudes achieving Tsirelson's bound read
\begin{align*}
    (ij=00)&\to\big\{\a_0=1,\b_0=0,\g_0=\cos{\pi\over8},\d_0=\sin{\pi\over8}\big\}\\
    (ij=01)&\to\big\{\a_0=1,\b_0=0,\g_1=\cos{\pi\over8},\d_1=-\sin{\pi\over8}\big\}\\
    (ij=10)&\to\big\{\a_1={1\over\sqrt{2}},\b_1={1\over\sqrt{2}},\g_0=\cos{\pi\over8},\d_0=\sin{\pi\over8}\big\}\\
    (ij=11)&\to\big\{\a_1={1\over\sqrt{2}},\b_1={1\over\sqrt{2}},\g_1=\cos{\pi\over8},\d_1=-\sin{\pi\over8}\big\}.
\end{align*}

Up until now there has been no candidate among physical theories that could provide resources more nonlocal than a maximally entangled state. The only possibility is a nonlocal box (also called PR box)~\cite{PR} as a mathematical construct designed to reach the maximal winning probability $p^{PR}_{win}=1$. A nonlocal box is a hypothetical resource shared by Alice and Bob whose inputs are $i$ and $j$ and its highly nonlocal inner workings produce the values  $a$ and $b$ such that Alice and Bob always win.

If we want to test how well $\Upsilon_{AB}$ performs we have to adjust the rules of the CHSH game but at the same time we have to play exactly the same game as we play with a Bell state. A superqubit is formally a three-level system and so we merge the subspace spanned by $\ket{1}$ and $\kbd$. We set the rules such that Alice (Bob) announces the result $a=1$ ($b=1$) if the result of the measurement lies in this subspace and $a=0$ ($b=0$) if it was projected into $\ket{0}$. We define
\begin{equation}\label{eq:localsuperrotation}
    Z_{iA}\otimes Z_{jB}=S(2r_i\eta_A)U(\a_i,\b_i)\otimes S(2s_j\eta_B)U(\g_j,\d_j),
\end{equation}
where $\eta_A,\eth_A,\eta_B$ and $\eth_B$ are generators of the Grassmann algebra $\C\La_4$ and $r_i,s_j\in\bbR$ is chosen according to the bits $i$ and $j$ received from the referee. The local superunitary  transformation is a general rotation Eq.~(\ref{eq:UOSPgroup}) following Lemma~\ref{lem:Ssimplified} leading to $S(2p\eta)$ in Eq.~(\ref{eq:UOSPgroupS}).

The measurement will be performed on a shared bipartite entangled superqubit state $\Upsilon_{AB}$ rotated according to Eq.~(\ref{eq:localsuperrotation})
\begin{equation}\label{eq:rotatedGamma}
    {\Upsilon}_{iA,jB}=(Z_{iA}\otimes Z_{jB}){\Upsilon}_{AB}.
\end{equation}
Therefore the winning Grassmann-valued probability reads
\begin{equation}\label{eq:grassWINprob}
    p_{\G{win}}(\Upsilon_{AB})={1\over4}\sum_{ij\in\{00,01,10\}}
    \Big( p_{\G{00}}^{(ij)}+ p_{\G11}^{(ij)}+ p_{\G1\bu}^{(ij)}+ p_{\G\bu1}^{(ij)}+ p_{\G\bu\bu}^{(ij)}\Big)
    + p_{\G01}^{(11)}+ p_{\G10}^{(11)}+ p_{\G0\bu}^{(11)}+ p_{\G\bu0}^{(11)},
\end{equation}
where
\begin{align}\label{eq:probfunction}
     p^{(ij)}_{\G mn}&=(-)^{|m|\oplus|n|}\brk{m_{A}n_{B}}{\Upsilon_{iA,jB}}\big(\brk{m_{A}n_{B}}{\Upsilon_{iA,jB}}\big)^\#\nn\\
     &=(-)^{|m|\oplus|n|}\brk{m_{A}n_{B}}{\Upsilon_{iA,jB}}\!\brk{\Upsilon_{iA,jB}}{n_{B}m_{A}}
\end{align}
is the Grassmann-valued probability function introduced in Def.~\ref{def:grassmannVALUEDprob}. The letters $m$ and $n$ label the orthogonal basis states $\{\ket{m},\ket{n}\}=\{\ket{0},\ket{1},\kbd\}$. The phase factor in the first line  comes from Eq.~(\ref{eq:OrthoBasisMeasurement}) and the second line follows from Proposition~\ref{prop:formproperties}. Note that if we kept the basis order in $\ket{n_{B}m_{A}}$ to be $AB$ instead of $BA$ we would have to add an additional minus for the case when $n=m=\bu$, that is, both bases are odd. As a sanity check we can calculate the norm of $\Upsilon_{AB}$ to be
$$
\sum_{m,n=0,1,\bu}p^{(ij)}_{\G mn}=\sum_{m,n=0,1,\bu}(-)^{|m|\oplus|n|}\brk{m_{A}n_{B}}{\Upsilon_{iA,jB}}\!\brk{\Upsilon_{iA,jB}}{n_{B}m_{A}}=1
$$
for all choices of $i,j$.

\begin{thm}\label{thm:supernonlocal}
    The state $\Upsilon_{AB}$ from Eq.~(\ref{eq:Upsilon}) used as a resource in the CHSH game  with the restrictions on physical state (Def.~\ref{def:physicalstates})  crosses Tsirelson's bound reaching $p^{sqbit}_{win}\simeq 0.8647 $.
\end{thm}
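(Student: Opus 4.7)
The plan is to reduce the statement to a constrained optimization over the parameters $(p_A,p_B)$ of the shared state and the local super-rotation parameters $(r_i,s_j,\alpha_i,\beta_i,\gamma_j,\delta_j)$ appearing in Eq.~(\ref{eq:localsuperrotation}), and then check that the optimum is strictly above $\cos^2(\pi/8)$. First I would write out $\Upsilon_{iA,jB}$ explicitly: each local factor $Z_{iA}=S(2r_i\eta_A)U(\alpha_i,\beta_i)$ (and likewise $Z_{jB}$) is expanded using Eq.~(\ref{eq:UOSPgroupS}) and Eq.~(\ref{eq:UOSPgroup}), acting on each of the nine basis dyads $\ket{mn}$, $m,n\in\{0,1,\bullet\}$, appearing in Eq.~(\ref{eq:Upsilon}). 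The result is a superket whose coordinates in $\{\ket{mn}\}$ are even elements of $\C\La_4$ depending polynomially on $r_i,s_j,p_A,p_B,\eta_A,\eta_A^\#,\eta_B,\eta_B^\#$.

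Next I would assemble the Grassmann-valued probabilities $p^{(ij)}_{\G mn}$ from Eq.~(\ref{eq:probfunction}) by taking the inner products against $\ket{m_A n_B}$ using Proposition~\ref{prop:formproperties}, and then combine them into $p_{\G win}(\Upsilon_{AB})$ as prescribed in Eq.~(\ref{eq:grassWINprob}). The sanity check $\sum_{m,n} p^{(ij)}_{\G mn}=1$ for each $i,j$ gives a cheap consistency test of the Grassmann bookkeeping before I apply the modified Rogers norm. Once all ingredients are assembled, I would apply Def.~\ref{def:modrogers} (with $N=4$, so the measure is ${\rm d}\eta_A{\rm d}\eta_A^\#{\rm d}\eta_B{\rm d}\eta_B^\#$ weighted by $\exp(-\eta_A\eta_A^\#-\eta_B\eta_B^\#)$) to collapse each Grassmann-valued probability to a real-valued one. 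Because every $p^{(ij)}_{\G mn}$ is even in the Grassmann sense, this step is a finite algebraic reduction.

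Now for the optimization. Setting $p_A=p_B=r_i=s_j=0$ recovers the ordinary CHSH calculation, so with the standard CHSH angles $\alpha_i,\beta_i,\gamma_j,\delta_j$ the winning probability is exactly $\cos^2(\pi/8)$. The plan is to keep the same angles (or slight perturbations thereof) and turn on the super-parameters $p_A,p_B,r_i,s_j$, each restricted by Def.~\ref{def:physicalstates} to $[-1/2,1/2]$ with the compactification prescription Eq.~(\ref{eq:subst}). Differentiating $p_{win}$ with respect to these parameters at the quantum optimum and identifying a direction with strictly positive derivative shows that Tsirelson's bound is crossed; pushing the super-parameters to the boundary of the physical region then gives the claimed numerical value. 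Because the dependence on $r_i,s_j,p_A,p_B$ is polynomial of low degree once the Berezin integral has been performed, the maximization reduces to a finite-dimensional smooth problem on a compact cube, which I would solve numerically and verify analytically at the critical point.

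The main obstacle is the bookkeeping of signs and orderings of the Grassmann generators in Eq.~(\ref{eq:probfunction}), where the factor $(-)^{|m|\oplus|n|}$ and the reordering inside $\brk{\Upsilon_{iA,jB}}{n_B m_A}$ must be tracked consistently across all nine outcomes and four input pairs; a single sign error propagates into an apparent value above or below $0.8647$. A secondary difficulty is ensuring that the supposed optimum lies inside the physical region cut out by Def.~\ref{def:physicalstates}, because the unconstrained maximum of the Rogers-reduced polynomial would otherwise lie outside $[-1/2,1/2]$, in which case the reported value $p^{sqbit}_{win}\simeq 0.8647$ is attained on the boundary rather than at an interior critical point and must be certified as such by checking the sign of the appropriate one-sided derivatives.
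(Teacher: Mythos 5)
Your overall strategy---expand $\Upsilon_{iA,jB}$, build the Grassmann-valued probabilities of Eq.~(\ref{eq:probfunction}), collapse them with the modified Rogers norm, and numerically maximize over a compact parameter box---is the same as the paper's. But you are missing the one constraint that actually determines the reported number: the paper imposes, in addition to the box constraints $|r_i|,|s_j|,|p_A|,|p_B|\leq 1/2$, the explicit requirement $0\leq p^{(ij)}_{mn}\leq 1$ for \emph{all} thirty-six outcome/input combinations (Eq.~(\ref{eq:maxprocedureiv})). This is not redundant: the paper stresses that Proposition~\ref{lem:compactindep} has no bipartite analogue, because the amplitudes $\brk{m_An_B}{(Z_{iA}\otimes Z_{jB})\Upsilon_{AB}}$ do not factorize and the parameters $p_A,p_B,r_i,s_j$ become intertwined with the $SU(2)$ angles, so the box constraints from Def.~\ref{def:physicalstates} alone do not guarantee positivity of the individual transition probabilities. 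Dropping Eq.~(\ref{eq:maxprocedureiv}) enlarges the feasible set, and the maximum you would compute is generically larger than $0.8647$ (this is precisely the difference between the present paper and the earlier result it corrects), so your procedure as stated does not establish the claimed value.

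Two secondary points. First, your plan to exhibit ``a direction with strictly positive derivative'' at the quantum optimum is unlikely to work at first order: the super-corrections enter the Rogers-reduced probabilities quadratically (cf.~Eq.~(\ref{eq:transprobgeneral}), where the correction is $1-(p-q)^2$), so the relevant derivatives vanish at $p_A=p_B=r_i=s_j=0$ and one must examine second-order terms, with feasibility checked against all of the positivity constraints. Second, the paper's optimum is not obtained by ``pushing the super-parameters to the boundary'': in the reported winning configuration only $p_A\simeq -1/2$ sits on the boundary, while $p_B,s_j\simeq 0$ and $r_i\simeq\mp 0.345$ are interior; and since the program is non-convex the paper only claims a numerically found value, not a certified global maximum, so an ``analytic verification at the critical point'' is more than the paper itself delivers.
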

\begin{rem}
Note the difference between $\Upsilon_{AB}$ and $\Ga_{AB}$ studied in~\cite{bbd}.
\end{rem}
\begin{proof}
We define
\begin{subequations}
\begin{align}\label{eq:maxprocedure}
         p_{win}=&\max_{\substack{p_A,p_B,r_i,s_j \\ \a_i,\b_i,\g_j,\d_j}}{p_{win}(\Upsilon_{AB})}\\
         \mbox{s.t.\hspace{4.254mm}} &|r_i|\leq1/2,|s_j|\leq1/2,\label{eq:maxprocedureii}\\
         \mbox{\hspace{4.254mm}} &|p_A|\leq1/{2},|p_B|\leq1/{2},\label{eq:maxprocedureiii}\\
         \mbox{\hspace{4.254mm}}
         &0\leq p^{(ij)}_{mn}\leq1,\hspace{5.959mm}\forall i,j,m,n,\label{eq:maxprocedureiv}
\end{align}
\end{subequations}
where $p_{win}(\Upsilon_{AB})$ is Eq.~(\ref{eq:grassWINprob}) after the modified Rogers norm from Def.~\ref{def:modrogers} has been used. The constraint in Eq.~(\ref{eq:maxprocedureii}) is Def.~\ref{def:physicalstates} applied on a tensor product of two superqubits $\psi(r_i)$ and $\vp(s_j)$ (cf. Eq.~(\ref{eq:pGRogers})).  The constraint in Eq.~(\ref{eq:maxprocedureiii}) follows from Def.~\ref{def:physicalstates} applied on $\Upsilon_{AB}$. It is surprisingly equivalent to the previous constraint since the transition probability factorizes
\begin{align*}
    p_\G(\Upsilon_{AB}(p_A,p_B),\Upsilon_{AB}(q_A,q_B))&=\Big(1+(p_A-q_A)^2\eta_A\eta_A^\#\Big)\Big(1+(p_B-q_B)^2\eta_B\eta_B^\#\Big)\\
    &\overset{\rm Def.~\ref{def:modrogers}}{\longrightarrow}
    \big(1-(p_A-q_A)^2\big)\big(1-(p_B-q_B)^2\big).
\end{align*}
The third line is a constraint that expresses our ignorance about how to get rid of negative probabilities for the measurement of $\Upsilon_{AB}$ in an arbitrary, locally superrotated,  basis. The simple procedure from Def.~\ref{def:physicalstates} followed by the compactification must be generalized. The reason is that there is no factorization happening for the amplitude
$$
\brk{m_An_B}{(Z_{iA}\otimes Z_{jB})\Upsilon_{AB}(p_A,p_B)}
$$
for an arbitrary rotation $Z_{iA}\otimes Z_{jB}$. These are the expressions forming the transition probability $p^{(ij)}_{\G mn}$ of a general projective measurement~Eq.~(\ref{eq:probfunction})  used for the calculation of the winning probability. So there does not seem to exist a sole condition on the $p_A,p_B$ parameters to get positive probabilities -- they are intertwined with the parameters $\a_i,\b_i,\g_j$ and $\d_j$ coming from the $SU(2)_A\otimes SU(2)_B$ subgroup.

Hence, we have no equivalent of Lemma~\ref{lem:compactindep} for single superqubits and Eqs.~(\ref{eq:maxprocedureii}) and~(\ref{eq:maxprocedureiii}) are not sufficient to guarantee the positivity of the transition probabilities. It must be enforced `manually' as in Eq.~(\ref{eq:maxprocedureiv}). This step is crude but if a consistent compactification is in principle possible even for two superqubits (that is an open question), it will lead to the same result -- a two-superqubit  Hilbert space that does not lead to negative transition probabilities. However, the two-superqubit manifold will likely be a non-trivial surface whose compactification might not be straightforward.

Note that we require all thirty six transition probabilities to lie between zero and one since the losing probabilities can be in principle measured  if Alice and Bob, for some reason,  decide to do so.

The overall expression for $p_{win}$ is complicated and its form is not really informative. The optimization has to be done numerically~\cite{yalmip} and gives us $p^{sqbit}_{win}\simeq 0.8647 $ with the following winning parameters:
\begin{align*}\label{eq:winningparameters}
    &p_A\simeq-1/2,\ p_B\simeq0\\
    &r_0\simeq-0.3450,\ s_0\simeq 0,\ r_1\simeq 0.3465,\ s_1\simeq 0\\
    &\a_0\simeq1.7768,\a_1\simeq -1.7749,\b_0\simeq \pi/2,\b_1\simeq -\pi/4.
\end{align*}
The optimization procedure leads to a non-convex program and so $p^{sqbit}_{win}\simeq 0.8647$ is not necessarily a global maximum.
\end{proof}

\section{Conclusions}\label{sec:concl}

In this work we studied superqubits -- supersymmetric quantum states based on a certain supersymmetric extension of quantum mechanics. The motivation for this work is to properly define the mathematical structures used in~\cite{superqubits,bbd} and offer a way  of getting rid of negative probabilities encountered in~\cite{bbd}. This has been achieved by a proposed method of compactification of the superqubit space thus resolving the problem for single superqubits. The problem remains open for multipartite superqubit states where there is a hope that the issue could be tackled in a similar way by considering higher-dimensional Lie superalgebras.

The paper contains two main parts followed by two appendices. In the first section  the algebraic properties of superqubits were studied in detail and a number of novel results were proven mainly for maps on supercommutative bimodules and related structures. This section builds upon the machinery of Lie superalgebras and superlinear algebra that has been extensively reviewed in Appendix~\ref{sec:AppBackground} followed by Appendix~\ref{sec:AppSLinearALg} with a number of practical rules for calculating with superqubits. Among several main results from the first section are the introduction of a super Hilbert space and the rules for obtaining real numbers from even Grassmann-valued probability functions based on the Rogers norm and Berezin integral. The prescription used here is novel and is more similar to the procedure of getting real numbers from Grassmann numbers introduced in~\cite{castell} than to~\cite{bbd}.

In the second section we ventured into the territory of multi-superqubit states and constructed certain bipartite superentangled states. One such state (a different one from the state used in~\cite{bbd}) was used as a nonlocal resource in a three-party  game known as the CHSH game. The game is a perspicuous reformulation of the CHSH inequalities from the quantum communication complexity theory point of view. The best performance quantum mechanics is capable of is when a maximally entangled state is used as a shared nonlocal resource  in the game between Alice and Bob. The maximum winning probability  is then $p^{quant}_{win}=\cos^2{\pi/8}\simeq0.8536$ which in terms of an expected value of an operator corresponds to so-called Tsirelson's bound~\cite{tsirelson}. It has been known, however, that quantum mechanics is not as nonlocal as it could have been. There exists a gap beyond Tsirelson's bound filled with hypothetical no-signalling theories (that is, theories not permitting superluminal communication) but more nonlocal than quantum mechanics. In~\cite{bbd} we reported crossing  Tsirelson's bound using a concrete physical model based on superqubits. Here, due to the introduced compactification procedure, we further limited the parameter space of superqubits while still being able to cross the bound. The maximal winning probability we found is lower compared to~\cite{bbd}: $p^{sqbit}_{win}\simeq 0.8647$.

This study leaves several questions unanswered. First of all, how else are superqubits different from quantum mechanics? Or, even more generally,  does this theory fit into the framework of general probabilistic theories studied recently by a number of authors~\cite{found1,found2,found3}? It might be of interest to see if all desirable axioms are satisfied and, if not, what the consequences would be. After all, the version of supersymmetric quantum mechanics we set out to explore possibly extends quantum mechanics even without crossing Tsirelson's bound. Even if Tsirelson's bound was not beaten we would still be left with states that are unlike ordinary quantum-mechanical states. This brings us to another question. How can we get rid of negative probabilities for bipartite, and possibly multipartite states? Negative probabilities are never used to calculate anything but the theory is still  incomplete since they can be reached by the group action followed by the modified Rogers norm. We believe that the compactification procedure introduced here can be generalized for multi-superqubit states. The answer how to achieve this goal certainly lies on the way to the proper definition of a Grassmann-valued group governing the evolution of multipartite superqubits. That is a research project on its own that we avoided and instead used a dirty way to get around the problem in Section~\ref{sec:CHSH} by using the insight from the theory of superqubits obtained in the first  section. Finally, in the previous work~\cite{bbd} we defined the modified Rogers norm as a way how to extract real numbers from even Grassmann number. This is by no means a unique procedure. It might be interesting to propose and study alternative prescriptions.

\appendix

\section{Background on Lie superalgebras and related structures}
\label{sec:AppBackground}

\numberwithin{equation}{section}

\begin{defi}\label{def:gradedvectorspace}
(i) Let $W=W^{[0]}\oplus W^{[1]}$ be a finite-dimensional $\bbZ_2$-graded linear vector space over $\bbK=\bbR,\bbC$, where the grading structure is isomorphic to $\bbZ_2$. When
\begin{align*}
  \dim{W^{[0]}} &= p \\
  \dim{W^{[1]}} &= q
\end{align*}
we will write $W=\bbK^{p|q}$ to indicate $\dim{K^{p|q}}=p+q$.

(ii) An element $w$ of the vector space is called homogeneous if $w\in W^{[i]}$.  The degree of a homogeneous element is defined $\deg{w}\equiv|w|=i\in\bbZ_2$. The zero (one) degree elements are called even (odd).

(iii) A set of homogeneous elements
\begin{equation}\label{eq:freebasis}
\{e_i,\dots,e_p,e_{p+1},\dots,e_{p+q}\},
\end{equation}
where we declare $|e_i|=0$ for $1\leq i\leq p$ and $|e_i|=1$ for $p+1\leq i\leq p+q$, is a basis for $\bbK^{p|q}$ if any $w\in \bbK^{p|q}$ can be uniquely  written as
$$
w=\sum_{i=1}^{p+q} \mu_i e_i
$$
and $\mu_i\in\bbK$. The set $\{e_i\}_{i=1}^{p+q}$ is called the standard basis if the basis elements are ordered as in Eq.~(\ref{eq:freebasis}).
\end{defi}
The  property that makes $\bbZ_2$-graded  vector spaces different from ordinary vector spaces is that the tensor product obeys the grading structure:
\begin{align*}
  (V\otimes W)^{[k]}=\bigoplus_{k=l\oplus m}V^{[l]}\otimes W^{[m]},
\end{align*}
where $\oplus$ stands for addition modulo two.

Other names for degree is parity (mostly in physics) or grade. Some authors insist on distinction between grade and degree. In the present work these two terms will be used interchangeably.
\begin{defi}\label{def:endos}
    Let $\bbK^{p|q}$ be a $\bbZ_2$-graded linear vector space. A linear operator $X\in{\rm End}(\bbK^{p|q})$ is said to be even (bosonic) if it is grade-preserving
    $$
    X(W^{[i]})=W^{[i]}
    $$
    and we write $|X|=0$. Similarly, $X$ is called odd (fermionic) if it is grade-reversing
    $$
    X(W^{[i]})=W^{[i\oplus1]}
    $$
    (\,$|X|=1$). The symbol $\oplus$ denotes addition  modulo two.
\end{defi}
We can readily illustrate the use of the standard basis from Def.~\ref{def:gradedvectorspace}. Any linear operator can be represented as a matrix of the block form~\cite{varadara,carmeli}
\begin{equation}\label{eq:ABDCmatrix}
X=
\begin{pmatrix}
  A & B \\
  C & D \\
\end{pmatrix},
\end{equation}
where $\dim{X}=p+q$. So, for example, the submatrix $C$ is a rectangular block with $q$ rows and $p$ columns. The matrix $X$ has entries in  $\bbK$. ${\rm End}(\bbK^{p|q})$ consists only of even or odd linear maps whose standard form reads
\begin{equation}\label{eq:ABDCmatrixEven}
X_e=
\begin{pmatrix}
  A & 0 \\
  0 & D \\
\end{pmatrix}
\end{equation}
for even maps and
\begin{equation}\label{eq:ABDCmatrixOdd}
X_o=
\begin{pmatrix}
  0 & B \\
  C & 0 \\
\end{pmatrix}
\end{equation}
for odd maps.

\begin{defi}\label{def:superalgebra}
(i) A $\bbZ_2$-graded ring $R$ is called a superalgebra  if it is furnished with a supercommutator (also called a graded commutator) $[,]:R\times R\to R$ defined as
\begin{equation}\label{eq:supercommutator}
[r,s]=rs-(-)^{|r||s|}sr
\end{equation}
valid for all $r,s\in R$.\\
(ii)
A superalgebra $R$ is called supercommutative  if
\begin{equation}\label{eq:supercomsuperal}
[r,s]=0
\end{equation}
holds for all $r,s\in R$.
\end{defi}
\begin{rem}
  A superalgebra from the above definition is formally not an algebra (it is trivially an algebra over the integers though~\cite{classic}). But this can be easily rectified. In particular, let there be a ring $R$ that is also a $\bbZ_2$-graded complex vector space such that
    \begin{equation}\label{eq:scalarMultiplic}
    \la(rs)=(\la r)s=r(\la s)
    \end{equation}
  is satisfied for all $r,s\in R$ and $\la\in\bbC$. Then $R$ is an algebra, namely, a $\bbZ_2$-graded algebra. From now on, when we say superalgebra we mean a $\bbZ_2$-graded algebra.
\end{rem}
If we adopted a more categorical approach to superalgebras~\cite{carmeli}, we could define the supercommutator without introducing rings and the related multiplication.
\begin{exa}
    A complex Grassmann algebra $\C\La_N$ of order $N$ is a traditional example of a supercommutative superalgebra.
    It is freely generated by $N$ anticommuting generators $\{\eta^i\}_{i=1}^N$ and it has a direct sum structure
    $$
    \C\La_N=\bigoplus_{k=0}^N\C\La_N^k,
    $$
    where $\dim{\C\La_N^k}=\binom{N}{k}$. The dimension of the  Grassmann algebra $\C\La_N$ is therefore $2^N$ and it contains a unit element in $\C\La_N^0\equiv\bbC$. Note that in this work we  consider only finite-dimensional Grassmann algebras. We will use $\C\La_{N,i}$ to denote an even ($i=0$) or odd ($i=1$) subspace of $\C\La_N$. Recall that the Grassmann algebra $\C\La_N$  is isomorphic to the exterior algebra $\wedge_N$. By linearity of the wedge product the supercommutator can  be extended to non-homogeneous elements of $\C\La_N$.
\end{exa}
\begin{defi}\label{def:supernumber}
    An arbitrary element $\zeta\in\C\La_N$ is called a supernumber and can be uniquely decomposed as $\zeta=\zeta_{e}+\zeta_{o}$ where $\zeta_{e}\in\C\La_{N,0}$ and $\zeta_{o}\in\C\La_{N,1}$. The general form of an even and odd supernumber reads
    \begin{align}\label{eq:supernumber}
        \zeta_{e}&=z_0+\sum_{k\in\bbN_e}\sum_{m=1}^{N\choose k}{1\over k!}z^{(m)}_I\eta^I
        =z_0+\sum_{k\in\bbN_e}\sum_{m=1}^{N\choose k}z^{(m)}\eta,\\
        \zeta_{o}&=\sum_{k\in\bbN_o}\sum_{m=1}^{N\choose k}{1\over k!}z^{(m)}_I\eta^I
        =\sum_{k\in\bbN_o}\sum_{m=1}^{N\choose k}z^{(m)}\eta,
    \end{align}
    where $z_0,z^{(m)}_I\in\mathbb{C}$, $\bbN_e(\bbN_o)$ is a subset of even (odd) integers $\bbN_e=\big\{2n;1\leq n\leq\lfloor{N\over2}\rfloor\big\}\ (\bbN_o=\big\{2n-1;1\leq n\leq\lfloor{N+1\over2}\rfloor\big\})$ and the multiindex $I$ is defined as $I=[i_1\dots i_{k}]$ where ${\eta^I=\eta^{i_1}\dots\eta^{i_k}}$ is a product of $k$ Grassmann generators.

    Furthermore, we will call even Grassmann numbers of grade zero and odd Grassmann numbers of grade one where the grade will be denoted by vertical lines: $|\zeta_{e}|\overset{\rm df}{=}0$ and $|\zeta_{o}|\overset{\rm df}{=}1$.
\end{defi}
Note that we sum over $I$ but since $z_I^{(m)}$ is a completely antisymmetric tensor we set $I=1\dots k$ and so $z^{(m)}=z^{(m)}_I$ and $\eta=\eta^I$ on the RHS of the above equations.

\begin{defi}\label{def:Liesuperalgebras}\cite{superLie,superLie1,kostant,musson}
     A finite-dimensional a $\bbZ_2$-graded algebra $R$ is called a Lie superalgebra if it is equipped with a bilinear non-associative product $[,]:R\times R\mapsto R$ satisfying
     \begin{align}\label{eq:LIEsuperalgebra}
         [r,s]&=-(-1)^{|r||s|}[s,r],\\
         0&=(-1)^{|r||t|}[r,[s,t]]+(-1)^{|s||r|}[s,[t,r]]+(-1)^{|t||s|}[t,[r,s]]
     \end{align}
    for all $r,s,t\in R$.
\end{defi}
One can verify that the graded commutator Eq.~(\ref{eq:supercommutator}) satisfies the above conditions.
\begin{exa}
    The general linear Lie superalgebra $gl(p|q;\bbK)$ is simply ${\rm End}(\bbK^{p|q})$ as introduced in Def.~\ref{def:endos}~\cite{carmeli}. The graded Lie product from Def.~\ref{def:Liesuperalgebras} is defined as $[X,Y]=XY-(-)^{|X||Y|}YX$ with the usual matrix multiplication implied.
\end{exa}
\begin{defi}\label{def:ST}
  Let $X\in{\rm End}(\bbK^{p|q})$ be written in the standard basis Eq.~(\ref{eq:freebasis}). The supertranspose of $X$ is defined as
\begin{equation}\label{eq:ST}
  X^{ST}\df
  \begin{pmatrix}
  A^T & (-)^{|X|}C^T \\
  -(-)^{|X|}B^T & D^T \\
\end{pmatrix},
\end{equation}
where $M^T$ denotes the transposition of a matrix $M$ in the standard basis.
\end{defi}
\begin{rem}
  Equivalently, we may write the component version of the supertranspose definition:
  $$
  x_{ji}^{ST}=x_{ij}(-)^{|X|(|j|\oplus|i|)\oplus|j|(|i|\oplus|j|)}.
  $$
  The standard basis convention dictates $|i|=0$ for $i\leq p$ and $|j|=0$ for $j\leq q$.
\end{rem}
This ad hoc looking definition is a special case of a  definition for more general object called supermatrices. We will get to them in a moment but for the sake of clarity it seems advantageous to first illustrate the concept on ${\rm End}(\bbK^{p|q})$. It follows from the Def.~\ref{def:ST} and Eqs.~(\ref{eq:ABDCmatrixEven}) and~(\ref{eq:ABDCmatrixOdd}) that
\begin{align}
  X_e^{ST}= & \begin{pmatrix}
  A^T & 0 \\
  0 & D^T \\
\end{pmatrix}, \\
  X_o^{ST}= &  \begin{pmatrix}
  0 & -C^T \\
  B^T & 0 \\
\end{pmatrix}.
\end{align}
We pinpoint two interesting properties of the supertranspose~\cite{manin,buch,varadara}:
\begin{align}\label{eq:STcomposition}
 (XY)^{ST} &= (-)^{|X||Y|}Y^{ST}X^{ST}, \\
 \left({X^{ST}}\right)^{ST} &=
 \begin{pmatrix}
  A & -B \\
  -C & D \\
\end{pmatrix}.\label{eq:STtwice}
\end{align}
Another reason to introduce the supertranspose at this point is the following important Lie supersubalgebra~\cite{superLie1}:
\begin{defi}\label{def:ospalgebra}
The real orthosymplectic Lie superalgebra $osp(p|q;\bbR)$ is defined as
    $$
    osp(p|q;\bbR)\df\{X\in gl(p|q;\bbR)|X^{ST}H+(-)^{|X|}HX=0\}.
    $$
    The matrix
    $$
    H=\begin{pmatrix}
       H_1 & 0  \\
       0 & H_2
     \end{pmatrix},
    $$
    represents a non-degenerate bilinear form where $H_1$ is a  symmetric matrix and $H_2$ is a skew-symmetric matrix.
\end{defi}
The algebra is a $\bbZ_2$-graded vector space where $\dim{H_1}=p$ and $\dim{H_2}=q$. From the matrix representation of the bilinear form follows that the subspaces spanned by even and odd basis elements are orthogonal with respect to it. We may rewrite the condition for a matrix $X$ to be in $osp(p|q;\bbR)$ as
\begin{equation}\label{eq:orthosymplcondition}
  A^TH_1+H_1A=D^TH_2+H_2D=B^TH_1-H_2C=0.
\end{equation}
Putting $H_1$ and $H_2$ in the standard form where $H_1$ is a $p$-dimensional unit  matrix and $H_2$ is $q\times q$ symplectic matrix (the form represented by $H$ is non-degenerate so $q$ is even) explains the name  orthosymplectic: the even subspace (even endomorphisms in the sense of Def.~\ref{def:endos}) is a direct sum of two Lie algebras bearing the same name. The odd subspace does not form an algebra.

\subsubsection*{Supermatrices}

The origin of matrices in linear algebra and the related operations on them (such as transpose)  revolves around the concept of duality of vector spaces (for a clear exposition see~\cite{classic}). We only briefly recall that every finite-dimensional vector space $W$ has a dual $W^*$ whose elements are linear forms $\om\in W^*$. The action of a linear form $\om:W\mapsto\bbK$ is usually written as $\om(w)$ where $w\in W$. By choosing a basis $\{b_i\}$ in $W$, this expression defines the dual basis $\{\b_i\}\in W^*$ by setting $\b_k(b_l)=\d_{kl}$. The spaces are  isomorphic but to make it basis-independent, an assistance of a non-degenerate bilinear form $F_{U,W}:U\times W\mapsto\bbK$ is required. For all $w\in W$ we obtain a linear form $F_{U,W}(_-,w):U\mapsto\bbK$ and so the isomorphism of $W$ and $U^*$ is given by the identification $w\mapsto F_{U,W}(_-,w)$.

The transpose operation  plays a fundamental role in linear algebra and appears in two slightly different contexts~\cite{classic}. First, a linear transformation $g:W\mapsto V$ defines a dual map $g^*:V^*\mapsto W^*$ by $g^*(\nu)\df \nu\circ g$ where $\nu\in V^*$ is a linear form and so\footnote{We recognize a pullback of $\nu$ along $g$~\cite{qft}.} $\nu:V\mapsto\bbK$. If $G$ is a matrix of $g$ with respect to the bases of $W$ and $V$ then $G^T$ is the matrix form of the dual map $g^*$ written in the corresponding dual bases of $V^*$ and $W^*$. The second occurrence of the transpose operation is after an additional structure has been introduced to the vector spaces $W$ and $V$, namely a non-degenerate bilinear form $F_{V}\equiv F_{V,V}$ and $F_W\equiv F_{W,W}$. It is at this point when we can employ the isomorphism $V\mapsto V^*$ and $W\mapsto W^*$ provided by the  identification mentioned in the previous paragraph. Let $g:W\mapsto V$ and $h:V\mapsto W$ be  linear maps (morphisms). Then $h$ is called the {\em adjoint} if it satisfies
$$
F_{V}(v,g(w))=F_{W}(h(v),w)
$$
for all $v\in V,w\in W$. It turns out that if $G$ is a matrix representing the map $g$ (written in the basis orthogonal with respect to $F_V$) than  the representing matrix $H$ of $h$ is just $G^T$. So taking the adjoint is formally the same thing as the transpose operation but one has to be aware of subtle differences important in a more general case of $\bbZ_2$-graded modules.

If we further relax the requirement of a field in the definition of  a vector space and let it be a non-commutative ring $R$, we obtain the definition of a left or right $R$-module and the correspondingly generalized notion of duality for modules~\cite{classic}. Note that even though a module is a more general structure than a vector space, it is often said that an $R$-module {\em is} a vector space over~$R$. The module axioms~\cite{classic} justify this type of language used mainly in the literature on supersymmetry~\cite{carmeli}.  In reality, modules over rings are much more general structures than vector spaces. But $R$-modules studied in supersymmetry are special -- they are {\em free} which is equivalent to saying that they admit a basis~\cite{classic}. Crucially, this basis can be chosen as the standard (canonical) basis  in linear algebra. This is precisely the choice of homogeneous elements in Eq.~(\ref{eq:freebasis}) with an addition of $\bbZ_2$-grading for the purposes of supersymmetry.

Following~\cite{carmeli,varadara,manin,berezin_book,buch}, it is possible to generalize this construction  in two principal directions. In the $\bbZ_2$-graded case the starting point is a  vector space $\bbK^{p|q}$. The first upgrade is to promote it to a supermodule. Note that in the spirit of the remark below Def.~\ref{def:superalgebra} we will be using the word superalgebra for a $\bbZ_2$-graded ring with an added compatible multiplication from a given field (see Eq.~(\ref{eq:scalarMultiplic})).
\begin{defi}\label{def:supermodule}
    Let $R$ be a supercommutative superalgebra (Def.~\ref{def:superalgebra}). The left $R$-supermodule is a $\bbZ_2$-graded vector space $W$ endowed with a left multiplication $R\times W\mapsto W$. Similarly, for the right $R$-supermodule we have a right multiplication $W\times R\mapsto W$.
\end{defi}
It is known~\cite{carmeli,varadara} that if the superalgebra $R$ is supercommutative, both multiplications are related by
\begin{equation}\label{eq:LefRightMult}
    wr=(-)^{|r||w|}rw,
\end{equation}
for all (homogeneous) $w\in W$ and $r\in R$. Then the resulting object is called (super)$R$-bimodule. In this work, the supercommutative superalgebra $R$ will always be the Grassmann algebra $\C\La_N$ of order $N$. We will occasionally denote  such  $R$-bimodules as $\La^{p|q}$.

Now we can proceed as in Def.~\ref{def:gradedvectorspace} and by using the basis from Eq.~(\ref{eq:freebasis}) we write down an element $w$ of the bimodule $\La^{p|q}$ as
\begin{equation}\label{eq:RbimoduleRight}
w=\sum_{i=1}^{p+q}e_i\zeta^r_i=\sum_{i=1}^{p+q}(-)^{|i||\zeta^r_i|}\zeta^r_ie_i,
\end{equation}
where $\zeta^r_i\in\C\La_N$ are the right components. It is customary to write the right components as a column vector~\cite{manin} (see~\cite{classic} for non-graded modules). The dual of the right $R$-module is a left $R$-module $\La^{*p|q}$. Similarly to the right $R$-module one can show that for any $\om\in\La^{*p|q}$ defined as $\om:w\mapsto R$ we obtain
\begin{equation}\label{eq:RbimoduleLeft}
\om=\sum_{i=1}^{p+q}\zeta^l_i\e_i=\sum_{i=1}^{p+q}(-)^{|i||\zeta^l_i|}\e_i\zeta^l_i,
\end{equation}
where $\zeta^l_i\in\C\La_N$ are the left components and $\{\e_i\}$ is the dual basis: $\e_i(e_j)=\d_{ij}$. The left components are written as rows and this convention has its origin precisely in the fact that in both graded and non-graded case, the left $R$-module (as a linear form) acts on the elements of the right $R$-module. This can be displayed as a row vector of the left coordinates multiplying a column vector of the right coordinates with the result in $R$. However, if we compare Eqs.~(\ref{eq:RbimoduleRight}) and (\ref{eq:RbimoduleLeft}) we can see that unlike the non-graded case (and for $R=\bbK$), the ordinary transpose operation does not achieve the swap of the left and right coordinates because of the signs that got in the way.

To proceed we note that relative to the standard basis, any linear map $\tau:\La^{p|q}\mapsto\La^{s|t}$ can be presented as a {\em supermatrix}\,:
\begin{equation}\label{eq:ABDCmatrixAgain}
S=
\begin{pmatrix}
  A & B \\
  C & D \\
\end{pmatrix}.
\end{equation}
Supermatrices have the block structure similar to ${\rm End}(\bbK^{p|q})$ Eq.~(\ref{eq:ABDCmatrix}) but the entries are now Grassmann numbers since $R=\C\La_N$. The supermatrix representing a morphism $\tau$ acts on a column vector (as they are elements of the right $R$-bimodule) from the left. Similarly, the supermatrix representing the action of the dual map $\tau^*:\La^{*s|t}\mapsto\La^{*p|q}$ acts on row elements of the left $R$-bimodule from the right. We  define a supermatrix $S$ to be even if the corresponding map preserves the parity and odd if it reverses it. In the former case, the entries of $A$ and $D$ are even Grassmann and the  entries of $C$ and $B$ are odd Grassmann numbers. For $S$  odd, the parity of entries of its  subblocks is swapped. Even and odd supermatrices are called homogeneous (sometimes called pure).
\begin{defi}\label{def:SupermatrixSet}
  The set of homogeneous supermatrices of dimension $(s+t)\times(p+q)$ with entries in $\C\La_N$ is denoted by $\M(s|t,p|q;\C\La_N)$. When $s=p$ and $t=q$ we will write $\MM$.
\end{defi}
\begin{rem}
  For our purposes, $\MM$ is a set but it is straightforward to promote it to an associative algebra with the usual matrix multiplication and further  define an associated Lie bracket from Def.~\ref{def:Liesuperalgebras} making it into a Lie superalgebra~\cite{carmeli}.
\end{rem}
Note that if  $R=\bbK$, an even supermatrix $S\in\MM$ becomes $X_e$, Eq.~(\ref{eq:ABDCmatrixEven}), and an odd supermatrix becomes $X_o$ (Eq.~(\ref{eq:ABDCmatrixOdd})).

For an $R$-bimodule  morphism $\tau:\La^{p|q}\mapsto\La^{s|t}$ there exists~\cite{manin,varadara}  its dual $\tau^*:\La^{*s|t}\mapsto\La^{*p|q}$ satisfying
\begin{equation}\label{eq:superpullback}
  (\tau^*(\om))(w)=(-)^{|\tau^*||\om|}(\om)(\tau(w)),
\end{equation}
where $w\in\La^{p|q}$ and $\om\in\La^{*s|t}$. The definition of the dual supermodule action generalizes the linear algebra construction sketched at the beginning of this subsection. If the matrix form of $\tau$ is a supermatrix $S$  with respect to the bases of $\La^{p|q}$ and $\La^{s|t}$ (Eq.~(\ref{eq:ABDCmatrix})) then the supermatrix representing the dual map $\tau^*$ written with respect to the bases of $\La^{*s|t}$ and $\La^{*p|q}$ is $T=S^{ST}$. $ST$ stands for the supertranspose and the definition coincides with Eq.~(\ref{eq:ST}) (assuming the standard basis):
\begin{equation}\label{eq:STforS}
    S=\begin{pmatrix}
      A & B \\
      C & D \\
      \end{pmatrix}\overset{ST}{\to}
      \begin{pmatrix}
      A^T & (-)^{|S|}C^T \\
      -(-)^{|S|}B^T & D^T \\
  \end{pmatrix}.
\end{equation}
\begin{defi}\label{def:superRowCol}
  (i) Let $z_{row}$ be a row supermatrix whose components are the left coordinates  $z_{row}(i)=\zeta^l_i$ of $\om\in\La^{*p|q}$. Its supertranspose is a  column supervector  $z_{col}={z}_{row}^{ST}$ where ${z}_{col}(i)\df(-)^{|i||\zeta^l_i|}\zeta^l_i$.\\
  (ii) Let $z_{col}$ be a column supervector whose components are the right coordinates  $z_{col}(i)=\zeta^r_i$ of $w\in\La^{p|q}$. Its supertranspose is a row supermatrix  $z_{row}={z}_{col}^{ST}$ where ${z}_{row}(i)\df(-)^{|i|(|\zeta^r_i|\oplus1)}\zeta^r_i$.
\end{defi}
It may seem a bit odd to use the same symbol $ST$ for an operation on rows/columns and supermatrices. For supermatrices we know that they represent  supermodule morphisms and the supertranspose gives us the dual  morphism. But the rows and columns of coordinates do not have any such interpretation. One option is to consider rows and columns as simple supermatrices and then we have to make sure that both  operations (that is, $ST$ from Eq.~(\ref{eq:STforS}) and the one brought in Def.~\ref{def:superRowCol}) are consistent so that we can both call them supertranspose.

But at first sight, it is not obvious what is going on. To clarify, we look for the inspiration in the non-graded case. If $\{e_i\}$ is a free basis of the vector space $V$ (a module over $\bbR$) then the components of  $v\in V$, where $v=\sum_i v_ie_i$, are represented by a column vector and there is no need to distinguish between left and right coordinates (so we wrote them on the left). An element of $f\in V^*$ of the space dual to $V$ written with respect to the basis $\{\e_i\}$ dual to $\{e_i\}$ reads $f=\sum_i f_i\e_i$ but its components $\{f_i\}$ are also represented by a column vector. On the other hand, the form $f$ written in the basis $\{e_i\}$ is represented as a row vector which is the transpose of the original row vector. But to be able to do this, we had to identify the spaces $V$ and $V^*$ through a non-degenerate bilinear form. In other words, our original vector space $V$ already has some additional structure enabling us to `multiply' columns by rows (this is the ordinary dot product yielding a real number).

The same discussion carries over to the super scenario  where of course one has to be careful to distinguish the left and right multiplication of the $R$-bimodule and take into account the properties of the underlying ring $R$. In the supersymmetric case we have $R=\C\La_N$  and the result is the modified transpose -- the supertranspose with all its different properties compared to the ordinary transpose. For more on this topic, see the beginning of the next subsection.

Having the previous paragraph in mind, let's go back to Def.~\ref{def:superRowCol}. The first part of the definition is suggested by comparing the coordinates in Eqs.~(\ref{eq:RbimoduleRight}) and (\ref{eq:RbimoduleLeft}) leading to $\zeta^r_i=(-)^{|i||\zeta^l_i|}\zeta^l_i$ as has been defined. But there is an ambiguity. The other possibility is $\zeta^l_i=(-)^{|i||\zeta^r_i|}\zeta^r_i$. The difference ultimately boils dow to the fact that the supertranspose is not an involution~\cite{manin,varadara} but an operation of order 4:
\begin{equation}\label{eq:STchain}
  S=\begin{pmatrix}
  A & B \\
  C & D \\
  \end{pmatrix}\overset{ST}{\to}
  \begin{pmatrix}
  A^T & (-)^{|S|}C^T \\
  -(-)^{|S|}B^T & D^T \\
  \end{pmatrix}\overset{ST}{\to}
  \begin{pmatrix}
  A & -B \\
  -C & D \\
  \end{pmatrix}\overset{ST}{\to}
  \begin{pmatrix}
  A^T & -(-)^{|S|}C^T \\
  (-)^{|S|}B^T & D^T \\
  \end{pmatrix}\overset{ST}{\to}
  S.
\end{equation}
The last sentence will be clarified after the next example.
\begin{exa}\label{exa:STonVectors}
    Let's verify on a simple example that the supertranspose action on a supermatrix is consistent with a supermatrix acting on a column vector of coordinatates as defined in the Def.~\ref{def:superRowCol}. Let $R=\C\La_N$ with $N$ high enough such that two identical Grassmann numbers do not meet upon multiplication (otherwise it may become trivial) and $\dim{W}=1+1$. The supermatrix  then reads
    \begin{equation}\label{eq:StwoBYtwo}
      S=\begin{pmatrix}
      a & b \\
      c & d \\
      \end{pmatrix},
    \end{equation}
  where $a,b,c,d\in\C\La_N$ such that $S$ is pure (even or odd). It will be acted upon a row supermatrix $z$ which we set to be
  $$
  z=\begin{pmatrix}
  1 ,& \eta  \\
  \end{pmatrix}
  $$
  for $z$ even and
  $$
  z=\begin{pmatrix}
  \eta,  & 1  \\
  \end{pmatrix}
  $$
  for $z$ odd and $\eta\in\C\La_{N,1}$. These particular choices do not weaken the generality of the conclusion. We calculate $z'=(zS)^{ST}$ and show that it coincides with
  $z''=(-)^{|S||z|}S^{ST}z^{ST}$ for all four possibilities: $|S|=0,1$ and $|z|=0,1$:
  \begin{itemize}
    \item $|S|=0,|z|=0$\\
    $$
    z'=
    \begin{pmatrix}
      1, & \eta  \\
    \end{pmatrix}
    \begin{pmatrix}
      a & b \\
      c & d \\
    \end{pmatrix}=
    \begin{pmatrix}
      a+\eta c, & b+\eta d  \\
    \end{pmatrix}\overset{ST}{\to}
    \begin{pmatrix}
     a+\eta c \\
     -b-\eta d  \\
    \end{pmatrix},
    $$
    $$
    z''=
    \begin{pmatrix}
      a & c \\
      -b & d \\
    \end{pmatrix}
    \begin{pmatrix}
    1 \\
    -\eta  \\
    \end{pmatrix}=
    \begin{pmatrix}
    a-c\eta \\
    -b-d\eta  \\
    \end{pmatrix}=
    \begin{pmatrix}
     a+\eta c \\
     -b-\eta d  \\
    \end{pmatrix}\equiv
    z'.
    $$

\item $|S|=0,|z|=1$\\
    $$
    z'=
    \begin{pmatrix}
      \eta, & 1  \\
    \end{pmatrix}
    \begin{pmatrix}
      a & b \\
      c & d \\
    \end{pmatrix}=
    \begin{pmatrix}
      \eta a+ c, & \eta b+ d  \\
    \end{pmatrix}\overset{ST}{\to}
    \begin{pmatrix}
     \eta a+ c \\
     \eta b+ d  \\
    \end{pmatrix},
    $$
    $$
    z''=
    \begin{pmatrix}
      a & c \\
      -b & d \\
    \end{pmatrix}
    \begin{pmatrix}
    \eta \\
    1  \\
    \end{pmatrix}=
    \begin{pmatrix}
    a\eta+c \\
    -b\eta+d  \\
    \end{pmatrix}=
    \begin{pmatrix}
     \eta a+ c \\
     \eta b+ d  \\
    \end{pmatrix}\equiv
    z'.
    $$

\item $|S|=1,|z|=0$\\
    $$
    z'=
    \begin{pmatrix}
      1, & \eta  \\
    \end{pmatrix}
    \begin{pmatrix}
      a & b \\
      c & d \\
    \end{pmatrix}=
    \begin{pmatrix}
       a+ \eta c, &  b+ \eta d  \\
    \end{pmatrix}\overset{ST}{\to}
    \begin{pmatrix}
      a+ \eta c \\
      b+ \eta d  \\
    \end{pmatrix},
    $$
    $$
    z''=
    \begin{pmatrix}
      a & -c \\
      b & d \\
    \end{pmatrix}
    \begin{pmatrix}
    1 \\
    -\eta  \\
    \end{pmatrix}=
    \begin{pmatrix}
    a+c \eta \\
    b-d \eta  \\
    \end{pmatrix}=
    \begin{pmatrix}
      a+ \eta c \\
      b+ \eta d  \\
    \end{pmatrix}\equiv
    z'.
    $$

\item $|S|=1,|z|=1$\\
    $$
    z'=
    \begin{pmatrix}
      \eta, &  1 \\
    \end{pmatrix}
    \begin{pmatrix}
      a & b \\
      c & d \\
    \end{pmatrix}=
    \begin{pmatrix}
       \eta a+  c, &  \eta b+  d  \\
    \end{pmatrix}\overset{ST}{\to}
    \begin{pmatrix}
      \eta a+  c \\
      -\eta b- d  \\
    \end{pmatrix},
    $$
    $$
    z''=
    -\begin{pmatrix}
      a & -c \\
      b & d \\
    \end{pmatrix}
    \begin{pmatrix}
    \eta \\
    1  \\
    \end{pmatrix}=
    -\begin{pmatrix}
    a\eta-c  \\
    b\eta+d  \\
    \end{pmatrix}=
    \begin{pmatrix}
      \eta a+  c \\
      -\eta b- d  \\
    \end{pmatrix}\equiv
    z'.
    $$

   \end{itemize}
\end{exa}
Encouraged by the previous example, it seems that the supertranspose operations on supematrices and supervectors are compatible exactly as in the non-graded case. Indeed, a row supermatrix is considered to be a square supermatrix $S\in\MM$ of size $(1+0)\times(p+q)$ (the uppermost row of $S$) and a column supervector is a supermatrix of size $(p+q)\times(1+0)$ (the leftmost column of $S$). When the row is supertransposed, we use the first part of Def.~\ref{def:superRowCol} and it coincides with the Def.~\ref{def:ST} applied to supermatrices. It also provides the definition for the supertranspose of a column vector $z_{row}=z_{col}^{ST}$ where $z_{row}(i)\df-(-)^{|i||\zeta^r_i|}\zeta^r_i$~\cite{manin}. We have the following chain of how the supertranpose transforms an even and odd supervector (let's take $z$ from the previous example):
\begin{align}\label{eq:STactsOnVectors1}
    \begin{pmatrix}
      1, &  \eta \\
    \end{pmatrix} &\overset{ST}{\to}
    \begin{pmatrix}
    1 \\
    -\eta  \\
    \end{pmatrix} \overset{ST}{\to}
    \begin{pmatrix}
      1, &  -\eta \\
    \end{pmatrix} \overset{ST}{\to}
    \begin{pmatrix}
    1 \\
    \eta  \\
    \end{pmatrix} \overset{ST}{\to}
    \begin{pmatrix}
      1, &  \eta \\
    \end{pmatrix},
    \\
    \begin{pmatrix}
      \eta, &  1 \\
    \end{pmatrix} &\overset{ST}{\to}
    \begin{pmatrix}
    \eta \\
    1  \\
    \end{pmatrix}\overset{ST}{\to}
    \begin{pmatrix}
      \eta, &  -1 \\
    \end{pmatrix} \overset{ST}{\to}
    \begin{pmatrix}
    \eta \\
    -1  \\
    \end{pmatrix} \overset{ST}{\to}
    \begin{pmatrix}
      \eta, &  1 \\
    \end{pmatrix}.
\label{eq:STactsOnVectors2}
\end{align}
Let's get back to the second part of Def.~\ref{def:superRowCol}. In reality, there are two {\em equivalent} definitions of the supertranspose. It is either Def.~\ref{def:ST} leading to the chain Eq.~(\ref{eq:STchain}) we are using here or, alternatively,
\begin{equation}\label{eq:STalt}
  S^{ST_{alt}}\df
  \begin{pmatrix}
  A^T & -(-)^{|S|}C^T \\
  (-)^{|S|}B^T & D^T \\
\end{pmatrix}.
\end{equation}
If we closely look at~Eq.~(\ref{eq:STchain}) then the new definition corresponds to reversing the arrows of the $ST$ action. And indeed, the second part of Def.~\ref{def:superRowCol} would be an alternative rule for the column supermatrix supertranpose in this case (cf. Eqs.~(\ref{eq:STactsOnVectors1}) and~(\ref{eq:STactsOnVectors2}) after reversing the arrows).

\subsubsection*{(Super)kets and bras}
\label{page:superkets}

Let us recall what kets and bras represent in quantum mechanics. Let $V$ be a vector space equipped with a non-degenerate Hermitian and positive semidefinite form $F_V:V\times V\mapsto\bbC$. If $\dim{V}=n<\infty$ the representation of the form is the $n$-dimensional unit matrix and the space $V$ can be called a Hilbert space. Any $v\in V$ is denoted as a ket $\ket{v}$ and the Hermitian form $F_V$ is in quantum mechanics written as $\brk{_-}{_-}$. A bra $\bra{u}$ is an element of the space $V^*$ dual to $V$ precisely because of the identification $u\mapsto\brk{u}{_-}$ provided by the Hermitian form $F_V$. Indeed, $\brk{u}{_-}:V\mapsto\bbC$ so it is a linear form whose shorthand notation is~$\bra{u}$. So there is a double-meaning to the symbol $\brk{_-}{_-}$: As we said, it is the same thing as $F_V$. But $\bra{u}$ also acts on $\ket{v}$ as $\bra{u}(\ket{v})$ -- a clumsy notation that is avoided by setting $\bra{u}\left(\ket{v}\right)\equiv\brk{u}{v}$. This overlaps with the primary meaning of $\brk{_-}{_-}$ but, fortunately, it does not cause troubles due to the aforementioned identification $V\mapsto V^*$.

The generalization of kets and bras to the  supersymmetric case is in many aspects similar. We can again assume the existence of a bilinear, non-degenerate form and identify the $R$-bimodule $\La^{p|q}$ with its dual. But we omitted the adjectives Hermitian and positive semidefinite for the form! We can assume the form to be Hermitian  if we restrict our attention to $\bbC^{p|q}\subset\MM$ and look for the inspiration to~\cite{snr1}:
\begin{defi}\label{def:gradeadjoint}
    Let $A,B\in{\rm End}(\bbC^{p|q})$ be homogeneous and $\brk{_-}{_-}:\bbC^{p|q}\times\bbC^{p|q}\mapsto\bbC$ be a non-degenerate Hermitian form such that the even
    and odd subspace are orthogonal with respect to it. We define a mapping $\ddg:{\rm End}(\bbC^{p|q})\mapsto {\rm End}(\bbC^{p|q})$ called the grade adjoint
    satisfying
    \begin{equation}\label{eq:gradedDEF}
      \brk{Az}{s}=(-)^{|A||z|}\brk{z}{A^\ddg s},
    \end{equation}
    valid for all homogeneous $z,s\in \bbC^{p|q}$. Let the grade adjoint satisfy the following properties:
    \begin{align}
      (aA+bB)^\ddg&=\bar{a}A^\ddg+\bar{b}B^\ddg,\label{eq:gradedadjoint}\\
      (AB)^\ddg&=(-)^{|A||B|}B^\ddg A^\ddg\label{eq:gradedadjointii},\\
      (A^\ddg)^\ddg&=(-)^{|A|}A\label{eq:gradedadjointiii},
    \end{align}
    where $a,b\in\bbC$ and the bar denotes complex conjugation.
\end{defi}
Note that for $q=0$ we get $A^\ddg=A^\dg$ (all operators are even), the dagger becomes the usual quantum-mechanical adjoint and we can impose positive semidefiniteness on the bilinear form. Apart from this trivial example of an operation satisfying the above axioms, we already have a less trivial candidate for the double dagger if $q\neq0$: $\ddg\df \overline{ST}$ (the bar denotes  complex conjugation and it commutes with $ST$). Modifying the example on page~\pageref{exa:STonVectors} by setting $S=A\in\bbC^{1|1}$ in Eq.~(\ref{eq:StwoBYtwo}), we have $A$ even (zeros on the off-diagonal) or odd (zeros on the diagonal) with the non-zero entries in $\bbC$ and  $z=(z_1,0)$ for $|z|=0$ and $z=(0,z_2)$ for $|z|=1$ assuming $z_1,z_2\in\bbC$. Then we can show that $(Az)^{ST}=(-)^{|A||z|}z^{ST}A^{ST}$ holds. The antilinearity, Eq.~(\ref{eq:gradedadjoint}), is immediately satisfied and requirements~(\ref{eq:gradedadjointii}) and~(\ref{eq:gradedadjointiii}) follow from Eqs.~(\ref{eq:STcomposition}) and (\ref{eq:STtwice}).
\begin{rem}
  The reason why we avoided positive semidefiniteness in the above definition is precisely for the case where $q\neq0$. Than the tensor product of two vectors from the $\bbZ_2$-graded vector space $\bbC^{p|q}$ whose norms are positive does not need to be positive. This is an observation already made in Ref.~\cite{snr1} and an explicit example is the double bullet state Eq.~(\ref{eq:gradestarrepproduct}). 
\end{rem}
\begin{rem}[{\bf Important}]\label{rem:superkets}
  Now we can address the problem of the super version of kets and bras. They simply denotes elements of $\bbC^{p|q}$. Later, they will be generalized in the context of Theorem~\ref{thm:ddgIsGradeAdjoint} to denote column and row supermatrices. Finally, after the $uosp(1,2;\C\La_N)$ algebra has been defined they  denote normalized even column and row supermatrices we call superqubits.
\end{rem}
The grade adjoint $\overline{ST}$ is not general enough for $\MM$, though. We would like to generalize the double dagger map $\ddg$ for the morphisms of the studied $R$-bimodule $\La^{p|q}$ represented by the supermatrices $S\in\MM$ (this is our starting point in Sec.~\ref{sec:squbits}) and that calls for a generalization of complex conjugation for Grassmann variables. But that again means to  sacrifice the requirement for the form to be Hermitian (let alone positive semidefinite). The way to recover it is the development after Theorem~\ref{thm:ddgIsGradeAdjoint} in the main body of the paper  leading to the $uosp(1|2;\C\La_N)$ algebra (Def.~\ref{def:uosp12}). Now we will present the last missing ingredient to be able to formulate it. Every $\bbZ_2$-graded ring is associated with (at least)  two types of antilinear automorphisms:
\begin{defi}\label{def:automorph}
(i) Let $R$ be a complex supercommutative superalgebra and let there be an automorphism $*:R\mapsto R$ defined as
\begin{subequations}\label{eq:staraction}
\begin{align}
     (ar)^*&=\bar a r^*,\label{eq:staractioni}\\
     (rs)^*&=s^*r^*,\label{eq:staractionii}\\
     (r^*)^*&=r,\label{eq:staractioniii}
\end{align}
\end{subequations}
for all $r,s\in R$ and $a\in\bbC$ where the bar denotes complex conjugation.  \\
(ii) Let the hash map $\#:R\mapsto R$ be defined as
\begin{subequations}\label{eq:hashaction}
\begin{align}
     (ar)^\#&=\bar a r^\#,\label{eq:hashactioni}\\
     (rs)^\#&=(-)^{|r||s|}s^\#r^\#=r^\#s^\#,\label{eq:hashactionii}\\
     (r^\#)^\#&=(-)^{|r|}r.\label{eq:hashactioniii}
\end{align}
\end{subequations}
\end{defi}
\begin{rem}
The star map is an involution and the hash map is a grade involution. Both maps reduce to  ordinary complex conjugation for complex numbers. The star map is frequently used in calculations of fermion path integrals in QFT~\cite{qft} where Grassmann variables appear as well. For us, however, the hash map will be relevant (see Theorem~\ref{thm:ddgIsGradeAdjoint} that would not be possible to formulate with the star involution). For further details consult~\cite{ritt,SchW}. An insight from physics  into the existence of the star and hash maps is provided by Lemma~\ref{lem:conjsfromreality}.
\end{rem}

\section{Calculations with supermatrices}
\label{sec:AppSLinearALg}

We will not list all properties of supermatrices~\cite{buch} but only those few repeatedly used in the main body of the paper. An important map is the supertrace defined for $S\in\MM$ by
\begin{equation}\label{eq:sTr}
  \sTr{}(S)\overset{\rm df}{=}\Tr{}(A)-(-)^{|S|}\Tr{}(D),
\end{equation}
using the standard basis. The following property of the supertrace holds:
\begin{align*}
 \sTr{}(ST) &= (-)^{|S||T|}\sTr{}(T)\sTr{}(S).
\end{align*}
Another important operation is the left and right scalar multiplication of a  supermatrix $S\in\MM$ by a Grassmann number $\zeta\in\C\La_N$ defined as
\begin{align}
 \zeta S &=
\begin{pmatrix}\label{eq:left}
  \zeta A & \zeta B \\
  (-)^{|\zeta|}\zeta C & (-)^{|\zeta|}\zeta D \\
\end{pmatrix}, \\
  S\zeta&=\label{eq:right}
\begin{pmatrix}
   A\zeta &  (-)^{|\zeta|} B \zeta\\
   C\zeta &  (-)^{|\zeta|} D \zeta\\
\end{pmatrix}.
\end{align}
Clearly, the grade of the Grassmann numbers make sense only for homogeneous elements but the linear character of supernumbers from Def.~\ref{def:supernumber} extends its action to an arbitrary supernumber.

The most important consequence of the above rule is that odd Grassmann numbers anticommute with  odd supermatrices. Let $|\zeta|=|S|=1$ let $S$ be written in the standard basis Eq.~(\ref{eq:freebasis}) (as are all supermatrices in this paper). Then Eqs.~(\ref{eq:left}) and (\ref{eq:right}) imply:
\begin{equation}\label{eq:leftANDright}
  \zeta S=\zeta
  \begin{pmatrix}
  A & B \\
  C & D \\
  \end{pmatrix}=
  \begin{pmatrix}
  \zeta A & \zeta B \\
  -\zeta C & -\zeta D \\
  \end{pmatrix}=
  \begin{pmatrix}
   -A\zeta &  B\zeta \\
   -C\zeta &  D\zeta \\
  \end{pmatrix}=
  \begin{pmatrix}
  -A & -B \\
  -C & -D \\
  \end{pmatrix}\zeta=
  -S\zeta.
\end{equation}
Recall that if $S$ is odd then the entries of $A,D$ are odd and of $B,C$ are even. The most important example of the above rule is the following expression:
\begin{equation}\label{eq:zetaANDoddGens}
  \zeta Q_i=-Q_i\zeta,
\end{equation}
where $Q_i$ are the generators from Eqs.~(\ref{eq:uospOddGens}).

Since we agreed that column and row supervectors  are just special cases of supermatrices, the  rule also dictates the behavior of the odd basis state $\kbd$ from Eq.~(\ref{eq:superqubitBases}). Hence, we can write
\begin{equation}\label{eq:zetaVSbulletKet}
  \zeta\kbd=\zeta
  \begin{pmatrix}
    0 \\
    0 \\
    1 \\
  \end{pmatrix}=
    \begin{pmatrix}
    0 \\
    0 \\
    -\zeta \\
  \end{pmatrix}=-
    \begin{pmatrix}
    0 \\
    0 \\
    1 \\
  \end{pmatrix}\zeta=
  -\kbd\zeta
\end{equation}
and similarly for the row vector
\begin{equation}\label{eq:zetaVSbulletBra}
  \zeta\bbd=-\bbd\zeta.
\end{equation}
We will often use
$$
-\zeta\kbd=
    \begin{pmatrix}
    0 \\
    0 \\
    \zeta \\
    \end{pmatrix}
$$
and
$$
\zeta\bbd=
\begin{pmatrix}
  0, & 0, & \zeta \\
\end{pmatrix}.
$$
\begin{exa}
  As an example combining all the salient points of calculations with supermatrices, let's compute Eq.~(\ref{eq:UOSPgroupS}) from the exponential in Eq.~(\ref{eq:UOSPoperator}) by setting $\zeta=2p\eta$ as a result of Lemma~\ref{lem:Ssimplified}:
\begin{subequations}\label{eq:Smatrix}
      \begin{align}
        \exp{[2p\eta Q_1+2p\eta^\# Q_2]} = & \id+\big(2p\eta Q_1+2p\eta^\# Q_2\big)+{1\over2}\big(2p\eta Q_1+2p\eta^\# Q_2\big)\big(2p\eta Q_1+2p\eta^\# Q_2\big) \\
         = &  \id+p\eta
         \begin{pmatrix}
        0 & 0 & 0 \\
        0 & 0 & -1 \\
        -1 & 0 & 0 \\
        \end{pmatrix}
            +p\eta^\#
        \begin{pmatrix}
            0 & 0 & -1 \\
            0 & 0 & 0 \\
            0 & 1 & 0 \\
         \end{pmatrix}\nn\\
         & +2p^2\eta Q_1\eta^\#Q_2+2p^2\eta^\#Q_2\eta Q_1\\
         = & \begin{pmatrix}
            1 & 0 & -p\eta^\# \\
            0 & 1 & -p\eta \\
            p\eta & -p\eta^\# & 1 \\
        \end{pmatrix}
        -
        2p^2\eta\eta^\#(Q_1Q_2-Q_2Q_1)\\
        = & \begin{pmatrix}
            1 & 0 & -p\eta^\# \\
            0 & 1 & -p\eta \\
            p\eta & -p\eta^\# & 1 \\
        \end{pmatrix}
        +
        p^2\eta\eta^\#\begin{pmatrix}
                  {1\over2} & 0 & 0 \\
                  0 & {1\over2} & 0 \\
                  0 & 0 & -1 \\
                \end{pmatrix}\\
        = & \begin{pmatrix}
            1+{p^2\over2}\eta\eta^\# & 0 & -p\eta^\# \\
            0 & 1+{p^2\over2}\eta\eta^\# & -p\eta \\
            p\eta & -p\eta^\# & 1-p^2\eta\eta^\# \\
        \end{pmatrix}.
      \end{align}
    \end{subequations}
    The first equality is all that is left from the Taylor series of the exponential function, in the second equality Eqs.~(\ref{eq:uospOddGens}) were used and the third row comes from Eq.~(\ref{eq:left}), the rule Eq.~(\ref{eq:zetaANDoddGens}) and $\eta\eta^\#=-\eta^\#\eta$.
\end{exa}
\begin{exa}
  Another exercise is the calculation of the norm of Eq.~(\ref{eq:superqubit}) in two different ways: using column/row matrices and kets and bras. Note that the situation is not that straightforward as in ordinary quantum mechanics due to Eqs.~(\ref{eq:zetaVSbulletKet}) and (\ref{eq:zetaVSbulletBra}). From (\ref{eq:superqubit}) and (\ref{eq:superqubitbra}) we get
  \begin{align}\label{eq:NoemrSuperqubit1}
    \brk{\psi}{\psi} & =
    \begin{pmatrix}
     \bar\a \left(1+{p^2\over2}\eta\eta^\#\right) ,& \bar\b \left(1+{p^2\over2}\eta\eta^\#\right), & p(\bar\a\eta^\#+\bar\b\eta)
   \end{pmatrix}
      \begin{pmatrix}
     \a \left(1+{p^2\over2}\eta\eta^\#\right)\\
     \b \left(1+{p^2\over2}\eta\eta^\#\right) \\
     p(\a\eta-\b\eta^\#) \\
   \end{pmatrix}\nn\\
   & =(|\a|^2+|\b|^2)(1+p^2\eta\eta^\#)+p^2(\bar\a\eta^\#+\bar\b\eta)(\a\eta-\b\eta^\#)\nn\\
   & = 1,
  \end{align}
  since $|\a|^2+|\b|^2=1$. On the other hand, one also gets from (\ref{eq:superqubit}) and (\ref{eq:superqubitbra}) the ket/bra version:
  \begin{align}\label{eq:NoemrSuperqubit2}
    \brk{\psi}{\psi} & = |\a|^2+|\b|^2)(1+p^2\eta\eta^\#) - p^2(\bar\a\eta^\#+\bar\b\eta)\bbd(\a\eta-\b\eta^\#)\kbd\nn\\
    & = (1+p^2\eta\eta^\#)+p^2(\bar\a\eta^\#+\bar\b\eta)(\a\eta-\b\eta^\#)\nn\\
    & = 1,
  \end{align}
  where in the second equality we used (\ref{eq:zetaVSbulletBra}).
\end{exa}
\begin{exa}
  It is instructive to see how the multiplication rules in Eq.~(\ref{eq:leftANDright}) are compatible with the supertranspose $ST$. In Eq.~(\ref{eq:uosparbitraryelement}) we used $(\eta Q_1)^\ddg=-\eta^\# Q_2$ valid for $\eta\in\C\La_{N,1}$ and $ST$ defined in Eq.~(\ref{eq:STforS}) where $\ddg=\#\circ ST\equiv ST\circ\#$ (see Theorem~\ref{thm:ddgIsGradeAdjoint}). Let's verify it by a (different) direct calculation:
  \begin{align}\label{eq:etaANDQ1direct}
    (\eta Q_1)^\ddg&={1\over2}
    \begin{pmatrix}
        0 & 0 & 0 \\
        0 & 0 & -\eta \\
        \eta & 0 & 0 \\
    \end{pmatrix}^\ddg={1\over2}
    \begin{pmatrix}
        0 & 0 & 0 \\
        0 & 0 & -\eta^\# \\
        \eta^\# & 0 & 0 \\
    \end{pmatrix}^{ST}={1\over2}
    \begin{pmatrix}
        0 & 0 & \eta^\# \\
        0 & 0 & 0 \\
        0 & \eta^\# & 0 \\
    \end{pmatrix}\nn\\
    &=\eta^\#{1\over2}
    \begin{pmatrix}
        0 & 0 & 1 \\
        0 & 0 & 0 \\
        0 & -1 & 0 \\
    \end{pmatrix}=
    -\eta^\#Q_2.
  \end{align}
  In the first equality we used~Eq.~(\ref{eq:leftANDright}), the second equality is the definition of $\ddg$, in the third equality the supertranspose Eq.~(\ref{eq:STforS}) was applied (note that the matrix becomes even after $\eta$ has `entered' $Q_1$) and the in the fourth equality  Eq.~(\ref{eq:leftANDright}) was used again.
\end{exa}

\begin{acknowledgements}
The author acknowledges support from the Office of Naval Research (grant No. N000140811249) and The Royal Society (International Exchanges travel grant) and is grateful to  Michael Duff, Leron Borsten, Markus M\"uller and Daniel Gottesmann for  comments and  G\'abor Luk\'acs for discussions.
\end{acknowledgements}

\end{document}